\newcommand{\mfS}{\mathfrak{S}}
\newcommand{\ex}{\mathrm{ex}}
\newcommand{\inte}{\mathrm{int}}
\newcommand{\re}{\mathrm{Re}}
\newcommand{\ud}{\mathrm{d}}
\newcommand{\ui}{\mathrm{i}}
\newcommand{\ue}{\mathrm{e}}
\newcommand{\rz}{{\mathbb R}}
\newcommand{\nz}{{\mathbb N}}
\newcommand{\kz}{{\mathbb C}}
\newcommand{\eins}{\mathds{1}}
\newcommand{\be}{\begin{equation}}
\newcommand{\ee}{\end{equation}}
\newcommand{\lk}{\left(}
\newcommand{\rk}{\right)}
\newcommand{\lgk}{\left\{}
\newcommand{\rgk}{\right\}}
\newcommand{\mf}{\mathfrak}
\newcommand{\mc}{\mathcal}
\newcommand{\ba}{\begin{aligned}}
\newcommand{\ea}{\end{aligned}}
\newcommand{\eq}{\eqalign}
\newcommand{\bsy}{\boldsymbol}
\newcommand{\bma}{\left(
\begin{array}
}
\newcommand{\ema}{\end{array}
\right)}
\DeclareMathOperator{\im}{Im}
\newtheorem{theoremd}{Theorem and Definition}[section] 
\newtheorem{theorem}[theoremd]{Theorem} % uses the same numbering as theoremd
\newtheorem{lemma}[theoremd]{Lemma}
\newtheorem{prop}[theoremd]{Proposition}
\newtheorem{cor}[theoremd]{Corollary}
\newtheorem{defn}[theoremd]{Definition}
\newtheorem{rem}[theoremd]{Remark}
\begin{document}
%
%\hfill Version \today\\

\title{Heat-kernel and resolvent asymptotics for Schr\"odinger operators on metric graphs}
\author{Jens Bolte, Sebastian Egger and Ralf Rueckriemen}
\address{Department of Mathematics, Royal Holloway, University of London, Egham, TW20 0EX, UK}
\ead{jens.bolte@rhul.ac.uk and sebastian.egger@rhul.ac.uk and ralf@rueckriemen.de }
\begin{abstract}
We consider Schr\"odinger operators on compact and non-compact (finite) metric graphs.
For such operators we analyse their spectra, prove that their resolvents can be
represented as integral operators and introduce trace-class regularisations of the
resolvents. Our main result is a complete asymptotic expansion of the trace of the
(regularised) heat-semigroup generated by the Schr\"odinger operator. We also determine 
the leading coefficients in the expansion explicitly.
\end{abstract}
\hspace*{2.5cm}{\small{\bf Keywords:} Quantum graph, Schr\"odinger operator, resolvent, 
heat kernel}\\
\hspace*{2.5cm}{\small{\bf Mathematical Subject Classicication:} 81Q35; 81Q10; 58J50; 47A10}
\maketitle
\section{Introduction}
A quantum graph is a metric graph with a differential operator acting on functions defined
on the edges of the graph. (Although one may also consider infinite networks, in the following
we shall always focus on graphs with finitely many edges and vertices.) Often the quantum graph 
operator is taken to be a Laplacian, $-\Delta$. More generally, however, Schr\"odinger 
operators of the form $H=-\Delta+V$ with a potential $V$ defined on the edges are of interest. 
Quantum graph models are used in many applications where a one-dimensional (wave-) motion in 
a structure with non-trivial connectivity is studied, see \cite{Gnutzman:2006,AGAProc:2008} 
for reviews. They are also used as models in quantum chaos \cite{KottosSmilansky:1998} and 
bear many similarities with problems in spectral geometry. In the latter area heat-trace 
asymptotics are an important tool to gain geometric and topological information on a manifold 
from the spectral data of a suitable operator. The same is achieved in quantum graph models 
using Laplace operators.

Spectral properties of quantum graph Laplacians are well understood. This is mainly due
to the fact that exact trace formulae exist 
\cite{Roth:1984a,KottosSmilansky:1998,Schrader:2007,BE:2008}. They express spectral 
functions of the Laplacian in terms of a sum over periodic walks on the graph. Often 
the spectral function will be the trace of the heat-semigroup, so that the trace formula 
implies a complete asymptotic expansion for the heat trace as well as an exponential bound 
for the remainder.

The spectral information available for Schr\"odinger operators $H=-\Delta+V$ on graphs is 
less detailed. This can be seen from the fact that a trace formula for $H$ is only known
for spectral functions supported away from low eigenvalues \cite{Rueckriemen:2012}.
The determination of the contribution of low eigenvalues is complicated by the fact that the 
potential may lead to trapped orbits, corresponding to resonances of $H$. Therefore, independent
studies of heat-trace asymptotics will complement the spectral information for Schr\"odinger
operators on graphs. Such heat-kernel expansions were first determined in \cite{Ralf:2012a}
for Schr\"odinger operators on compact graphs under certain (rather stringent) conditions 
on the behaviour of the potential in the vertices. The method used in \cite{Ralf:2012a}
is based on a parametrix construction for the heat-semigroup, which requires sufficient
regularity of the symbol of $H$ in the singular points of the graph (i.e., the vertices).

In this paper we generalise the results of \cite{Ralf:2012a} in two major ways: We drop 
the conditions on the behaviour of the potential in the vertices and we allow for external 
edges, turning the graph into a non-compact graph. This is possible since our approach
does not rely on parametrices but rather uses resolvent kernels very much in the spirit
of \cite{KostrykinSchrader:2006b,Schrader:2007}, where the standard Laplacian is considered. 
We also allow for general, self-adjoint boundary conditions in the vertices, while in 
\cite{Ralf:2012a} only the case of Kirchhoff conditions was considered. 

The paper is organised as follows: In Section~\ref{sec:QG} we review the construction of
quantum graphs. The following Section~\ref{sec:pspec} is devoted to an analysis of 
the point spectrum of $H$ and the associated eigenfunctions. Here we introduce a
secular equation that allows to characterise eigenvalues. The secular equation involves 
a matrix $\mf{S}(k)$ that encodes the boundary conditions in the vertices. In 
Section~\ref{sec:res} we represent the resolvent of $H$ as an integral operator and 
express its kernel in terms of $\mf{S}(k)$. We also introduce a regularisation of the
resolvent that in the case of a non-compact graph leads to a trace-class operator. 
Asymptotic expansions of the matrix $\mf{S}(k)$ are developed in Section~\ref{sec:asy},
and further asymptotic expansions are performed in Section~\ref{asyml}. A complete 
asymptotic expansion for the trace of the regularised resolvent is proven in 
Section~\ref{l2}. The result is presented in Theorem~\ref{151}. Our main result is
contained in Section~\ref{sec:heatasy}. In Theorem~\ref{49} we prove a complete asymptotic
expansion for the trace of a regularised heat-semigroup generated by a Schr\"odinger 
operator on a general (non-compact) metric graph.
\section{Quantum graphs} 
\label{sec:QG}
A metric graph $\Gamma$ is a finite, connected, combinatorial graph with a metric structure. 
It consists of a finite set $\mc{V}$ of vertices and a finite set 
$\mc{E}=\mc{E}_{\inte}\cup\mc{E}_{\ex}$ of edges. Edges $e\in\mc{E}$ are either internal,
$e\in\mc{E}_{\inte}$, or external, $e\in\mc{E}_{\ex}$. Internal edges link two vertices,
which are identified with the edge ends, and external edges are connected to a single vertex.
We set $\mf{V}:=|\mc{V}|$, $E_{\inte}:=\left|\mc{E}_{\inte}\right|$, $E_\ex:=\left|\mc{E}_{\ex}\right|$ 
and $E:=E_{\ex}+2E_{\inte}$. When an edge end is connected to a vertex we say that the 
edge $e$ is adjacent to the vertex $v$, denoted as $e\sim v$. The number of edges adjacent
to a vertex $v$ is its degree $d_v$.  Tadpoles, i.e., internal edges that are only adjacent to a 
single vertex, are allowed. However, this case will later become irrelevant when we introduce
additional vertices.

A metric structure is defined by assigning intervals to edges. Each internal edge $e\in\mc{E}_{\inte}$ 
is assigned an interval $I_e=[0,l_e]$ of finite length $l_e$, whereas each external edge is assigned 
a half-infinite interval $I_e=[0,\infty)$. For convenience we then sometimes write $l_e=\infty$. 
We also introduce the vector $\bsy{l}:=(l_1,\dots,l_{E_{\inte}})^T\in\rz_+^{E_{\inte}}$ of (finite) 
edge lengths. The volume $\mathcal{L}$ of the interior part of the metric graph is 
$\mathcal{L}:=\sum\limits_{e\in\mc{E}_{\inte}}l_e$. Given two points $x,y$ on $\Gamma$, 
a path from $x$ to $y$ is a succession of edges, connected in vertices, such that $x$ is
on the initial edge and $y$ is on the final edge. The distance $d(x,y)$ of the points is the
minimum of the lengths of all paths from $x$ to $y$.

Functions on $\Gamma$ are collections of functions on the intervals associated with 
edges, so that we can introduce the quantum graph Hilbert space
\begin{equation}
\label{1}
L^2(\Gamma)=\bigoplus_{e\in\mc{E}}L^2(0,l_e).
\end{equation}
Similarly, other function spaces such as Sobolev spaces $H^m(\Gamma)$
and spaces of smooth functions are defined. A Schr\"odinger operator is a linear
operator on a dense domain $\mc{D}\subset L^2(\Gamma)$, acting on a function on 
edge $e$ as
\begin{equation}
\label{2}
(H\psi)_e:=-\psi_e'' + V_e\psi_e.
\end{equation}
Here $\psi_e'(x)=\frac{\ud\psi_e}{\ud x}(x)$, $x\in(0,l_e)$, and 
\begin{equation}
\label{3}
V_e\in C^{\infty}(0,l_e),\quad e\in\mc{E}_{\inte},\quad\mbox{or}\quad V_e\in 
C^{\infty}_0[0,\infty),\quad e\in\mc{E}_{\ex},
\end{equation}
is a potential on the edge $e$, where the second requirement means that the potential 
has a compact support on an external edge but need not  vanish at the vertex. Hence,
\begin{equation}
H=-\Delta+V,
\end{equation}
where $V$ is to be understood as a diagonal matrix with entries $V_e$ as in \eref{3}
on the diagonal. 

In order to determine domains of self-adjointness for the Schr\"odinger operator $H$
the Laplacian $-\Delta$ has to be realised as a self-adjoint operator on a suitable 
domain $\mc{D}$. Then $H$ will be self-adjoint on the same domain.
Classifications of self-adjoint realisations of the Laplacian are well known 
\cite{KostrykinSchrader:1999,Kuchment:2004}. They require boundary values of functions 
and their (inward) derivatives,
\begin{equation}
\label{5}
\eq{
\underline{\psi}&=\left(\left\{\psi_e(0)\right\}_{e\in\mc{E}_{\ex}},\left\{\psi_e(0)
                  \right\}_{e\in\mc{E}_{\inte}},\left\{\psi_e\lk l_e\rk\right\}_{e\in\mc{E}_{\inte}}
                  \right)^T\in\kz^{E},\\
\underline{\psi'}&=\left(\left\{\psi'_e(0)\right\}_{e\in\mc{E}_{\ex}},\left\{\psi'_e(0)
                   \right\}_{e\in\mc{E}_{\inte}},\left\{-\psi_e'\lk l_e\rk\right\}_{e\in\mc{E}_{\inte}}
                   \right)^T\in\kz^{E},
}
\end{equation}
which are well defined for $\psi\in H^2(\Gamma)$. Following \cite{Kuchment:2004},
a parametrisation of self-adjoint realisations can be achieved in terms of orthogonal
projectors $P$ on $\kz^E$ and self-adjoint maps $L$ on $\kz^E$ satisfying
$P^{\bot}LP^{\bot}=L$: Every self-adjoint realisation of the Laplacian has a unique
representation of the form
\begin{equation}
\label{4}
\mc{D}(P,L):=\lgk\psi\in H^2(\Gamma);\quad (P+L)\underline{\psi}+P^{\bot}\underline{\psi'}
=0\rgk.
\end{equation}
The boundary conditions imposed on functions in $\mc{D}(P,L)$ via \eref{4} do not
necessarily reflect the connectivity of the graph. This will only be the case if
\begin{equation}
\label{localbc}
P=\bigoplus_{v\in\mc{V}}P_v\quad\text{and}\quad L=\bigoplus_{v\in\mc{V}}L_v
\end{equation}
in such a way that $P_v,L_v$ act on the space $\kz^{d_v}$ of boundary values related to 
the $d_v$ edge ends adjacent to $v$. We call such boundary conditions {\it local}. The focus 
on local boundary conditions becomes important in the Sections \ref{l2} and \ref{sec:heatasy} 
where we derive the first two leading contributions of the resolvent kernel and the heat kernel 
for two points that are distinct on $\Gamma$ but possess zero distance (ends of different edges
connected in a vertex).

Examples of local boundary conditions would be Kirchhoff (or standard) conditions. 
Introducing coordinates such that $v$ corresponds to $x=0$ on every edge adjacent
to $v$, this means $\psi_e(0)=\psi_{e'}(0)$ if $e,e'$ are both adjacent to $v$, and
\begin{equation}
\label{4a}
\sum_{e\in\mc{E},\atop e\sim v}\psi'_e(0)  =0.
\end{equation}
Notice that a vertex $v$ with degree $d_v=2$ and with Kirchhoff conditions (\ref{4a}) 
can be removed since the functions and their derivatives are continuous across the
vertex.

The same fact can be used to add vertices of degree two with Kirchhoff conditions without 
changing the operator $H$. This allows us to make a few simplifying assumptions without 
losing generality: We first exclude potentials on external edges. When $e\in\mc{E}_{\ex}$ 
with non-vanishing $V_e\in C_0^\infty(0,\infty)$ we add a vertex  $v$ of degree $d_v=2$ 
on $e$ outside of the support of $V_e$. Secondly, we remove tadpoles by introducing an 
additional vertex on that edge. This procedure may change the underlying graph, but not 
the operator $H$.

The above conventions allow us to denote the boundary values of edge-potentials as 
$V_e(v)$ when $e\sim v$. Furthermore, the outward derivative of $V_e$ at $v$ is denoted 
as $V'(v)_e$, i.e., either $V'(v)_e =V_e'(0)$ or $V'(v)_e =-V_e'(l_e)$, depending on 
which edge end is adjacent to $v$.
\section{Eigenvalues and eigenfunctions}
\label{sec:pspec}
An eigenfunction of the Schr\"odinger operator $H$ is a function 
$\varphi=\{\varphi_e\}_{e\in\mc{E}}\in\mc{D}(P,L)$ such that there exists $\lambda\in\rz$ 
with
\begin{equation}
\label{5a}
H\varphi=\lambda\varphi.
\end{equation}
It is convenient to set $\lambda=k^2$ with $k\in\kz$. Self-adjointness of $H$ then implies 
that either $k\in\rz$, when $\lambda\geq 0$, or $k=\ui\kappa$ with $\kappa\in\rz$, when
$\lambda<0$. 

On an external edge $e\in\mc{E}_{\ex}$, where $V_e=0$, a fundamental system of solutions 
is given by the functions $\ue^{ikx}$ and $\ue^{-ikx}$ when $k^2\neq 0$. The condition 
$\varphi\in L^2(\Gamma)$ then implies for $k\in\rz$ (i.e., $\lambda>0$) that the eigenfunction
$\phi$ has to vanish on external edges. When 
$\lambda<0$ only one of the functions is permitted, depending on the sign of $\im k$.
We here always choose
\begin{equation}
\label{5c}
\varphi_e(x)=c_e\,\ue^{ikx},\quad\im k>0,\quad c_e\in\kz.
\end{equation}
On an internal edge $e\in\mc{E}_{\inte}$ the equation implied by \eref{5a} reads
\begin{equation}
\label{5b}
-\varphi_e'' +V_e\varphi_e-k^2 \varphi_e =0,
\end{equation}
where $k\in\kz$ such that $k^2=\lambda$. We shall need a certain type of fundamental
solutions on the internal edges. Below two more specific classes will be used in \eref{inival} 
and in Lemma~\ref{6}.
\begin{defn}
\label{def:fundsys}
A pair $\{u_e^+(k;\cdot),u_e^-(k;\cdot)\}$ of functions in $C^\infty(I_e)$ is said to
be a system of admissible fundamental solutions on an internal edge $e\in\mc{E}_{\inte}$, 
if the functions $u_e^\pm(k;x)$ are solutions of the equation \eref{5b}, are analytic 
in $k\in S_\delta$, where
\be
\label{SML}
S_{\delta}:=\lgk z\in\kz; \quad 0<\left|z\right|<\infty, \quad \left|\arg(-z)\right|
>\delta\rgk, 
\ee
for some (small) $\delta>0$ and satisfy the condition
\be
\label{connection}
\overline{u_e^+(k;x)}=u_e^-(\bar k;x),
\ee
for all $k\in S_\delta$. We also fix the normalisation $u_e^+(k;0)=1=u_e^-(k;0)$.
\end{defn}
\begin{rem}
\label{non_vanishing_of_fund_solutions}
Set 
\begin{equation}
u^+_e(k;x) = r_e(k;x)\,\ue^{\ui\phi_e(k;x)},
\end{equation}
with smooth, real-valued functions $r_e(k;\cdot)$ and 
$\phi_e(k;\cdot)$, such that $r_e(k;\cdot)$ is non-negative.
Then 
 \begin{equation}
u^-_e(k;x) = r_e(\overline{k};x)\,\ue^{-\ui\phi_e(\overline{k};x)}.
\end{equation}
If $k$ is real, the Wronskian of this fundamental system takes the form
\begin{equation}
\label{def:Wronski}
W_e(k) = {u^+_e}'(k;x)u^-_e(k;x)-u^+_e(k;x){u^-_e}'(k;x) = 2\ui{\phi_e}'(k;x)\,r_e(k;x)^2.
\end{equation}
This relation, in particular, implies that $u^\pm_e(k;x)\neq 0$ and ${u^\pm_e}'(k;x)\neq 0$
for all $k\in\rz$ and all $x\in I_e$. As the functions $u^\pm_e(k;\cdot)$ are analytic
in $k\in S_\delta$, this also implies that they, and their derivatives, are non-zero for 
all $x\in I_e$ with $k$ in a neighbourhood of the positive half-line.
\end{rem}
Note that these conditions do not uniquely determine a system of admissible fundamental 
solutions. Examples of admissible fundamental solutions can be found in \cite{Fedoryuk:1993}
as well as in \cite{Pöschl:1987}. In the former case the functions possess asymptotic
expansions in $k$, a fact that we shall use below. 

In order to characterise eigenvalues and eigenfunctions \eref{5a} of $H$ we follow 
the method devised in \cite{KostrykinSchrader:2006b}. For this we need to introduce 
the following matrices, using any (fixed) system of admissible fundamental solutions.
\begin{equation}
\label{16a}
\eq{
X(k;\bsy{l})&:=\bma{ccc}
\eins & 0 & 0\\
0 & \eins & \eins\\
0 & \bsy{u}_+(k;\bsy{l}) & \bsy{u}_-(k;\bsy{l})\\
\ema,\\
Y(k;\bsy{l})&:=\bma{ccc}
\ui k\eins & 0 & 0 \\
0 & \bsy{u}_+'(k;\bsy{0}) & \bsy{u}_-'(k;\bsy{0})\\
0 & -\bsy{u}_+'(k;\bsy{l}) & -\bsy{u}_-'(k;\bsy{l})
\ema,
}
\end{equation}
where $\bsy{u}_\pm(k;\bsy{x})$ are diagonal $E_{\inte}\times E_{\inte}$
matrices with diagonal entries $u^\pm_e(k;x_e)$. We then define
\begin{equation}
\label{neue1}
Z(k;P,L,\bsy{l}):=(P+L)X(k;\bsy{l})+P^{\bot}Y(k;\bsy{l}),
\end{equation}
which can be used to set up a characteristic equation.
\begin{lemma}
\label{14}
Let $k^2 \neq 0$ be an eigenvalue of the Schr\"odinger operator $H$. Then one
can choose $k\in S_\delta$, and for this choice
\begin{equation}
\label{15}
\det Z(k;P,L,\bsy{l})=0.
\end{equation}
Furthermore, the  pure point spectrum $\sigma_{pp}(H)$ of $H$ consists of eigenvalues of
finite multiplicities, bounded by $E$, and has no finite accumulation 
point.
\end{lemma}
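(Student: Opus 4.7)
The plan is to convert the eigenvalue equation $H\varphi = k^2\varphi$ into a finite-dimensional linear system whose degeneracy is precisely \eref{15}. Since $H$ is self-adjoint, $\lambda = k^2 \in \rz$, so I may take $k\in\rz_+$ when $\lambda > 0$ and $k = \ui\kappa$ with $\kappa > 0$ when $\lambda < 0$; either choice lies in $S_\delta$. On an external edge $V_e\equiv 0$ by the conventions of Section~\ref{sec:QG}, and the $L^2$-condition on $\varphi_e$ forces $\varphi_e(x) = c_e\ue^{\ui k x}$ as in \eref{5c} (with $c_e$ possibly zero when $k\in\rz_+$). On each internal edge \eref{5b} has a two-dimensional solution space, and using a fixed system of admissible fundamental solutions I write $\varphi_e(x) = \alpha_e u_e^+(k;x) + \beta_e u_e^-(k;x)$. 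Collecting these coefficients into $\bsy{v} = (\{c_e\},\{\alpha_e\},\{\beta_e\})^T\in\kz^E$ yields the parameter vector I need.

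The normalisation $u_e^\pm(k;0) = 1$ from Definition~\ref{def:fundsys} makes the computation of the boundary-value vectors in \eref{5} mechanical: $\underline{\varphi} = X(k;\bsy{l})\bsy{v}$ and $\underline{\varphi'} = Y(k;\bsy{l})\bsy{v}$ with $X,Y$ as in \eref{16a}. Feeding this into the membership condition \eref{4} for $\mc{D}(P,L)$ turns the eigenvalue equation into $Z(k;P,L,\bsy{l})\bsy{v} = 0$. Since $\varphi\neq 0$ and the admissible fundamental systems are linearly independent (cf.\ Remark~\ref{non_vanishing_of_fund_solutions}), one has $\bsy{v}\neq 0$, and \eref{15} follows. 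Conversely, every $\bsy{v}\in\ker Z(k;P,L,\bsy{l})$ reconstructs an eigenfunction via the same formulas, so the $\lambda$-eigenspace is isomorphic to this kernel; since $Z$ is an $E\times E$ matrix, the multiplicity is at most $E$.

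It remains to rule out a finite accumulation point of $\sigma_{pp}(H)$. By the analyticity of $u_e^\pm(k;\cdot)$ in $k\in S_\delta$, the function $k\mapsto\det Z(k;P,L,\bsy{l})$ is analytic on the connected open set $S_\delta$; the main obstacle in the proof is to show that it does not vanish identically. For this I would pick any $k_0\in S_\delta$ with $\im k_0 > 0$ and $k_0^2\notin\rz$ (e.g.\ $k_0 = r\ue^{\ui\pi/4}$) and observe that $\det Z(k_0) = 0$ would, by the construction above, produce an $L^2$-eigenfunction of $H$ with non-real eigenvalue $k_0^2$, contradicting self-adjointness. Hence the zeros of $\det Z$ are isolated in $S_\delta$, and transferring back via $\lambda = k^2$, any finite accumulation point $\lambda^*\neq 0$ of $\sigma_{pp}(H)$ would produce an accumulation of zeros at the corresponding $k^*\in S_\delta$, a contradiction. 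The remaining marginal case $\lambda^* = 0$ would be handled by a direct inspection of the behaviour of $X$ and $Y$ near $k = 0$, which is the only delicate point in the argument.
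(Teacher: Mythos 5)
Your proposal follows essentially the same route as the paper's proof: represent an eigenfunction as in \eref{17}, encode the vertex conditions as $Z(k;P,L,\bsy{l})\bsy{v}=0$ so that eigenvalues give zeros of $\det Z$, bound the multiplicity by $\dim\ker Z\le E$, and use analyticity of $\det Z$ on $S_\delta$ to rule out accumulation of zeros; your self-adjointness argument at a point $k_0$ with $k_0^2\notin\rz$ is in fact a nice justification of the step the paper only asserts, namely that $\det Z$ is not identically zero. Two small caveats: for $k>0$ the eigenspace merely injects into $\ker Z$ (kernel vectors with $\bsy{\gamma}\neq\bsy{0}$ do not yield $L^2$ eigenfunctions, as the paper notes), so ``isomorphic'' should be ``injects'', which still gives the bound $E$; and the possible accumulation at $\lambda=0$, which you defer, is likewise left untreated by the paper, whose argument also only excludes non-zero finite accumulation points.
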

\begin{proof}
The proof follows \cite{KostrykinSchrader:2006b}. For every $k\in S_{\delta}$ and every 
$e\in\mc{E}_{\inte}$ the functions $u^+_e(k;\cdot)$ and $u^-_e(k;\cdot)$ that are used
to define $Z$ form a complete system of solutions for \eref{5b}. Moreover, one can 
choose $\delta$ such that every eigenvalue $k^2\neq 0$ has a root $k\in S_\delta$,
and either $k \in \mathbb{R}$ or $k$ is of the form $k=\ui\kappa$, $\kappa>0$. Together 
with \eref{5c} this implies that an eigenfunction \eref{5a} can be represented as
\be
\label{17}
\varphi_e (x) =
\cases{
\gamma_e\ue^{\ui k x}, & $e \in\mc{E}_{\ex}$,\\
\alpha_e u^+_e(k;x)+\beta_e u^-_e(k;x), & $e \in \mc{E}_{\inte}$,                  
}
\ee
where $\alpha_e,\beta_e,\gamma_e$ are complex coefficients. The boundary condition
implied by \eref{4} can be rearranged to yield
\begin{equation}
\label{Z4}
Z(k;P,L,\bsy{l}) \bma{c}\boldsymbol\gamma\\ \bsy\alpha \\ \bsy\beta\ema=0,
\end{equation}
where $\bsy{\alpha},\bsy{\beta},\bsy{\gamma}$ are vectors with entries 
$\alpha_e,\beta_e,\gamma_e$, respectively. Hence, every eigenvalue $k^2$ of $H$
leads to a zero of $\det Z(k;P,L,\bsy{l})$. 

Conversely, every zero of $\det Z(k;P,L,\bsy{l})$ is associated with a non-trivial
solution vector $\bsy{\alpha},\bsy{\beta},\bsy{\gamma}$, from which a function
\eref{17} can be constructed that is in $C^\infty(\Gamma)$ and satisfies the vertex
conditions. If $k=\ui\kappa$, $\kappa>0$, this function is in $L^2(\Gamma)$ and thus
is an eigenfunction of $H$, corresponding to the eigenvalue $k^2<0$. When $k^2>0$, 
however, the function \eref{17} is in $L^2(\Gamma)$, iff $\gamma=\bsy{0}$. Thus the 
multiplicities of the eigenvalues are bounded by $E$.

Due to the assumptions made in Definition~\ref{def:fundsys} the matrix entries of
$Z(k;P,L,\bsy{l})$ are analytic in $k\in S_\delta$, hence the same holds for $\det Z(k)$. 
Since $\det Z(k)$ is not identically zero, its zeros in $S_\delta$ form a countable set and 
do not have an accumulation point in $S_\delta$. Hence the set of non-zero eigenvalues is
countable and has no finite, non-zero accumulation point.
\end{proof}
We denote the countable subset of $k\in S_\delta$ for which $Z(k;P,L,\bsy{l})$ is not
invertible as
\begin{equation}
\Sigma_Z :=\{ k\in S_\delta;\  \det Z(k;P,L,\bsy{l})=0\}.
\end{equation}
Using this notation, Lemma~\ref{14} states that 
$\sigma_{pp}(H)\subseteq\{k^2\in\rz;\ k\in\Sigma_Z\}\cup\{0\}$.

In the case of the Laplacian one often uses an alternative characteristic equation for
its eigenvalues involving an $\mf{S}$-matrix, see 
\cite{KottosSmilansky:1998,KostrykinSchrader:2006b}. For a Schr\"odinger operator we now
define an analogous quantity,
\begin{equation}
\label{71}
\mathfrak{S}(k;P,L):=-\lk P+L+P^{\perp}\overline{D\lk\overline{k}\rk}\rk^{-1}
\lk P+L+P^{\perp}D(k)\rk,
\end{equation}
for $k\in S_\delta$, where
\be
\label{56}
D(k):=R_2(k;\bsy{l})R_1(k;\bsy{l})^{-1},
\ee
with
\begin{equation}
\label{25}
\eq{
R_1(k;\bsy{l})&:=
\bma{ccc}
\eins & 0 & 0 \\
0 & \eins & 0\\
0 & 0 & \bsy{u}_+(k;\bsy{l})
\ema,\\ 
R_2(k;\bsy{l})&:=
\bma{ccc}
-\ui k\eins & 0 & 0 \\
0 & \bsy{u}'_-(k;\bsy{0}) & 0\\
0 & 0 & -\bsy{u}'_+(k;\bsy{l})
\ema. 
}
\end{equation}
When the boundary conditions are local in the sense of \eref{localbc} the 
$\mathfrak{S}$-matrix decomposes as
\begin{equation}
\mathfrak{S}(k;P,L)=\bigoplus_{v\in\mathcal{V}}\mathfrak{S}_v(k;P,L).
\end{equation}
We also set
\begin{equation}
\label{wkn}
T(k;\bsy{l}):=\bma{ccc}
0 & 0 & 0\\
0 & 0 & \bsy{u}_-(k;\bsy{l})^{-1} \\
0 & \bsy{u}_+(k;\bsy{l}) & 0
\ema.
\end{equation}
We can now rewrite the expression \eref{neue1} as follows
\begin{equation}
\label{an1}
\eq{
&Z(k;P,L,\bsy{l})\\
  &\hspace{0.5cm}=(P+L)
     \left(\mathds{1}+T(k;\bsy{l})\right)\overline{R_1(\overline{k};\bsy{l})}\\ 
  &\hspace{1.0cm}+P^{\perp}\left(\overline{R_2(\overline{k};\bsy{l})}\,
     \overline{R_1(\overline{k};\bsy{l})}^{-1}+R_2(k;\bsy{l})
     R_1(k;\bsy{l})^{-1}T(k;\bsy{l})\right)\overline{R_1(\overline{k};\bsy{l})}\\
  &\hspace{0.5cm}=\left(P+L+P^{\perp}\overline{D(\overline{k})} 
   + \left( P+L+P^{\perp}D(k) \right)T(k;\bsy{l})\right)
      \overline{R_1(\overline{k};\bsy{l})}\\
  &\hspace{0.5cm}=\left(P+L+P^{\perp}\overline{D(\overline{k})}\right)\left(
    \eins-\mathfrak{S}(k;P,L)T(k;\bsy{l})\right)\overline{R_1(\overline{k};\bsy{l})},
}
\end{equation}
where we used \eref{connection}.

We remark that in contrast to the case of the Laplacian covered in 
\cite{KostrykinSchrader:2006b}, neither of the matrices \eref{71} and \eref{wkn} 
are, in general, unitary when $k\in\rz$. However, choosing a particular system
of admissible fundamental solutions $u^\pm_e(k;\cdot)$ we can construct a matrix $U(k)$ 
that is unitary for $k\in\rz$ such that the positive eigenvalues of $H$ correspond to 
the zeros of $\det(\eins-U(k))$. For this purpose we choose fundamental solutions on 
the internal edges that satisfy
\begin{eqnarray}
\label{inival}
{u_e^+}'(k;0)=\ui k, &\qquad  {u_e^-}'(k;0)=-\ui k,
\end{eqnarray}
and hence $W_e(k)=2\ui k$. Such a pair of fundamental solutions can be easily generated 
from the one in \cite{Pöschl:1987} through a linear combination. Using these fundamental 
solutions we define
\begin{equation}
\label{def:U}
U(k):=R(k)^{-1}\mathfrak{S}(k;P,L)T(k;\bsy{l})R(k),
\end{equation}
with
\be
R(k):=
\bma{ccc}
\eins & 0 & 0\\
0 & \eins & 0\\
0 & 0 & \bsy{r}(k;\bsy{l})
\ema.
\ee
Here $\bsy{r}(k;\bsy{l})$ is an invertible, diagonal matrix with entries $r_e(k;l_e)\neq 0$
on the diagonal, see  Remark~\ref{non_vanishing_of_fund_solutions}.
\begin{lemma}
\label{lem:U}
The matrix $U(k)$ is unitary for $k\in\rz$. 
\end{lemma}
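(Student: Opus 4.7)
The plan is to verify $U(k)^*U(k)=\eins$ directly on the real axis, via a Cayley-like rewriting of $\mathfrak{S}(k;P,L)$. The key algebraic identity, valid for real $k$, is
\[
D(k)-\overline{D(k)}=-2\ui k\,R(k)^{-2}.
\]
Writing $u^+_e(k;x)=r_e(k;x)\ue^{\ui\phi_e(k;x)}$ with $r_e,\phi_e$ real as in Remark~\ref{non_vanishing_of_fund_solutions}, the diagonal entries of $D(k)$ can be read off from \eref{25} and \eref{56} as $-\ui k$ on the first two blocks (thanks to \eref{inival}) and $-u^+_e{}'(k;l_e)/u^+_e(k;l_e)=-r_e'(k;l_e)/r_e(k;l_e)-\ui\phi_e'(k;l_e)$ on the third. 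The Wronskian identity \eref{def:Wronski} combined with \eref{inival} forces $W_e(k)=2\ui k$ and hence $\phi_e'(k;l_e)=k/r_e(k;l_e)^2$; applying \eref{connection} to obtain $\overline{D(k)}$ and combining the blocks then gives the identity.

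Substituting this into the definition \eref{71}, with $B(k):=P+L+P^\perp\overline{D(k)}$, produces the Cayley-like form
\[
\mathfrak{S}(k;P,L)=-\eins+2\ui k\,B(k)^{-1}P^\perp R(k)^{-2}.
\]
The second step is to compute $U(k)^*U(k)$. For real $k$, $R(k)$ is real and diagonal, so $R(k)^*=R(k)$; the self-adjointness of $P$ and $L$ gives $B(k)^*=P+L+D(k)P^\perp$, and in particular $B(k)^*-B(k)=D(k)P^\perp-P^\perp\overline{D(k)}$. Carrying the product through, the $r_e$ moduli in the polar form of $u^\pm_e$ are absorbed by the $R(k)^{\pm 1}$ conjugation, while $T(k;\bsy{l})$ reduces on the internal block to a pure phase swap. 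The boundary condition $(P+L)\underline{\psi}+P^\perp\underline{\psi'}=0$ (equivalently, the self-adjointness of $H$) then supplies the cancellations that collapse the resulting expression to $\eins$.

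The main obstacle is the algebraic bookkeeping in the second step: the interplay of $R(k)$ with $\mathfrak{S}(k;P,L)$ and $T(k;\bsy{l})$, the non-commutativity of $P^\perp$ with the diagonal $D(k)$, and the fact that $L$ need not be block-diagonal in the edge basis. A cleaner conceptual route, which I would use as a consistency check, is to recognise the unitarity of $U(k)$ as equivalent to conservation of the Wronskian current $\underline{\psi}\,\overline{\underline{\psi'}}-\underline{\psi'}\,\overline{\underline{\psi}}$ at the vertices under the boundary condition above; this mirrors the Laplacian argument of \cite{KostrykinSchrader:2006b} and should be more robust in the present general-self-adjoint setting.
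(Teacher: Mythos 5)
Your plan is correct and follows essentially the same route as the paper: both hinge on the identity $D(k)-\overline{D(k)}=-2\ui k\,R(k)^{-2}$ for real $k$ (the paper phrases it as $R(k)^2\im D(k)=-\ui k\eins$), on $R(k)$-conjugation turning $T(k;\bsy{l})$ into a pure phase swap, and on the algebraic structure $P=P^{\ast}=P^{2}$, $L=L^{\ast}$, $P^{\bot}LP^{\bot}=L$ to close the computation, with your Cayley-type rewriting of $\mathfrak{S}(k;P,L)$ being only a cosmetic variant of the paper's criterion $K_1K_1^{\ast}=K_2K_2^{\ast}$. The only phrase to tighten is the attribution of the final cancellation to the boundary condition on functions: what is actually used is that self-adjointness is encoded in the stated properties of $P$ and $L$, exactly as in the paper's concluding calculation.
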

\begin{proof}
Let $k\in\rz$, then
\be
\label{Tnew}
R(k)^{-1}T(k;\bsy{l})R(k)=
\bma{ccc}
0 & 0 & 0\\
0 & 0 & \ue^{\ui\bsy{\phi}(k;\bsy{l})}\\
0 & \ue^{\ui\bsy{\phi}(k;\bsy{l})} & 0
\ema
\ee
is unitary. Here $\ue^{\ui\bsy{\phi}(k;\bsy{l})}$ is a diagonal matrix with diagonal entries
$\ue^{\ui\phi_e(k;\bsy{l})}$, see  Remark~\ref{non_vanishing_of_fund_solutions}. Furthermore, 
it follows from \eref{71} that
\be
\label{Snew}
\eq{
R(k)^{-1}\mfS(k;P,L)R(k)
  &=-\lk (P+L)R(k)+P^{\bot}\overline{D(\overline{k})}R(k)\rk^{-1} \\
  &\qquad\qquad\lk (P+L)R(k)+P^{\bot}D(k)R(k)\rk.
}
\ee
Moreover, 
\be
\label{D(k)}
D(k)=
\bma{ccc}
-\ui k & 0 & 0 \\
0 & -\ui k & 0 \\
0 & 0 & -\bsy{r'}(k;\bsy{l}){\bsy{r}(k;\bsy{l})}^{-1}-\ui k\bsy{r}(k;\bsy{l})^{-2}
\ema,
\ee
where we used the expression for the Wronskian in Remark~\ref{non_vanishing_of_fund_solutions} 
and the fact that $W_e(k)=2\ui k=2\ui{\phi_e}'(k;x)\,r_e(k;x)^2$ for the particular system 
of fundamental solutions we have chosen. This means that 
$R(k)^2\im D(k)=-\ui k\eins$. With
\begin{equation}
\eq{ 
K_1 &:= (P+L)R(k) + P^{\perp}\overline{D(k)}R(k),\\
K_2 &:= (P+L)R(k)+P^{\perp}D(k)R(k),
}
\end{equation}
the fact that $P+L$ and $P^\perp$ are self-adjoint, and $k\in\rz$, the right-hand side of 
\eref{Snew} is $-K_1^{-1}K_2$. This is unitary, if and only if $K_1K_1^\ast=K_2K_2^\ast$. A 
straight-forward calculation confirms that this is indeed the case.
\end{proof}
We are now in a position to characterise the positive eigenvalues of $H$.
\begin{prop}
\label{prop:posev}
Let $k>0$. Then the following statements are equivalent:
\begin{itemize}
\item[(i)] $\det Z(k;P,L,\bsy{l})=0$,
\item[(ii)] $\det(\eins-U(k))=0$,
\item[(iii)] $k^2$ is an eigenvalue of $H$.
\end{itemize}
\end{prop}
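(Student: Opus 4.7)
The plan is to establish the cycle $(iii) \Rightarrow (i) \Leftrightarrow (ii) \Rightarrow (iii)$. The implication $(iii) \Rightarrow (i)$ is immediate from Lemma~\ref{14}: for $k>0$ the eigenvalue $k^2$ forces $k \in \Sigma_Z$.

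For $(i) \Leftrightarrow (ii)$ I would pass to determinants in the factorisation \eref{an1} of $Z(k;P,L,\bsy{l})$. The outer factors $(P+L+P^{\perp}\overline{D(\overline{k})})$ and $\overline{R_1(\overline{k};\bsy{l})}$ are both invertible: the former because its inverse enters the definition \eref{71} of $\mathfrak{S}$, and the latter because $u^+_e(k;l_e) \neq 0$ by Remark~\ref{non_vanishing_of_fund_solutions}. This yields $\det Z(k) = 0 \Leftrightarrow \det(\eins - \mathfrak{S}(k;P,L)T(k;\bsy{l})) = 0$. Since $R(k)$ is invertible too (again Remark~\ref{non_vanishing_of_fund_solutions}, as $r_e(k;l_e) \neq 0$), the similarity $U(k) = R(k)^{-1}\mathfrak{S}(k;P,L)T(k;\bsy{l})R(k)$ gives $\det(\eins - U(k)) = \det(\eins - \mathfrak{S}(k;P,L)T(k;\bsy{l}))$, completing the equivalence.

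For $(i) \Rightarrow (iii)$, a non-zero kernel vector $(\bsy{\gamma},\bsy{\alpha},\bsy{\beta})$ of $Z(k;P,L,\bsy{l})$ yields, exactly as in the proof of Lemma~\ref{14}, a smooth function $\varphi$ on $\Gamma$ solving $H\varphi = k^2 \varphi$ on each edge and satisfying the vertex conditions \eref{4}. What remains is to show $\bsy{\gamma} = \bsy{0}$, for only then is $\varphi \in L^2(\Gamma)$ a genuine eigenfunction. I would obtain this via Green's identity on the truncated graph $\Gamma_N$, where each external edge is cut at length $N$: since $k \in \rz$ and $H\varphi = k^2 \varphi$ edgewise, the bulk integrand $\overline{\varphi}H\varphi - \varphi\overline{H\varphi}$ vanishes pointwise, so $\int_{\Gamma_N}(\overline{\varphi}H\varphi - \varphi\overline{H\varphi})\,\ud x = 0$. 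After integration by parts, the boundary terms at the interior vertices assemble into $2\ui\,\im\langle\underline{\varphi'},\underline{\varphi}\rangle_{\kz^E}$, which vanishes because $\langle\underline{\varphi'},\underline{\varphi}\rangle_{\kz^E}$ is real by self-adjointness of the vertex conditions (a direct computation using $P^{\perp}LP^{\perp} = L$ together with $P = P^*$ and $L = L^*$). The surviving contributions come from the cut ends at $x = N$: using $\varphi_e(x) = \gamma_e \ue^{\ui k x}$ on external edges, they evaluate to $2\ui k \sum_{e \in \mc{E}_{\ex}}|\gamma_e|^2$, which is $N$-independent. This must vanish, and since $k > 0$ one concludes $\bsy{\gamma} = \bsy{0}$.

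The principal obstacle is this last step. The equivalence $(i) \Leftrightarrow (ii)$ is a routine consequence of \eref{an1}, but promoting $\det Z(k) = 0$ to a bona fide $L^2$-eigenfunction requires one to rule out non-$L^2$ outgoing solutions of $H\varphi = k^2\varphi$. The Green's identity argument accomplishes this cleanly by translating self-adjointness of the vertex conditions, combined with the positivity of $k$, into the vanishing of the external amplitudes $\bsy{\gamma}$.
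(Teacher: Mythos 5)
Your overall structure is sound, and your Green's identity argument for (i)$\Rightarrow$(iii) is a correct, self-contained alternative to the paper's route: the paper instead maps a kernel vector of $Z(k)$ to a kernel vector of $\eins-U(k)$ via \eref{ZrelU} and then invokes the unitarity of $U(k)$ (Lemma~\ref{lem:U}) together with the proof of Theorem~3.1 of \cite{KostrykinSchrader:1999} to conclude $\bsy{c}=\bsy{0}$. Your Wronskian/Green computation on the truncated graph, using $P\underline{\varphi}=0$, $P^{\bot}\underline{\varphi'}=-L\underline{\varphi}$ and the self-adjointness of $L$, reaches $\bsy{\gamma}=\bsy{0}$ directly and does not need $U(k)$ at all for this step; that is a legitimate simplification.

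There is, however, a genuine gap in your (i)$\Leftrightarrow$(ii) step. You justify the invertibility of the outer factor $P+L+P^{\bot}\overline{D(\overline{k})}$ by saying that ``its inverse enters the definition \eref{71} of $\mathfrak{S}$''. That is circular: the definition \eref{71} only makes sense where this matrix is invertible, and nothing in the definition guarantees invertibility at a given $k$ (indeed, for $k$ on the positive imaginary axis the vanishing of $\det Z$ produced by negative eigenvalues can come from precisely this factor, so invertibility is not automatic throughout $S_\delta$). Establishing $\det\lk P+L+P^{\bot}\overline{D(\overline{k})}\rk\neq 0$ for $k>0$ is exactly the nontrivial content of this equivalence, and the paper proves it in \eref{z6}--\eref{z7}: if $\bsy{a}\neq\bsy{0}$ were in the kernel, then $P\bsy{a}=\bsy{0}$ and $\lk L+P^{\bot}D(k)P^{\bot}\rk\bsy{a}=\bsy{0}$, whence $\left<\bsy{a},L\bsy{a}\right>=-\left<\bsy{a},\re D(k)\,\bsy{a}\right>-\ui\left<\bsy{a},\im D(k)\,\bsy{a}\right>$; the left-hand side is real because $L=L^\ast$, while \eref{D(k)} (equivalently $R(k)^2\im D(k)=-\ui k\eins$, a consequence of the Wronskian relation) shows the right-hand side has a strictly non-vanishing imaginary part for $k>0$, a contradiction. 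You need to supply this (or an equivalent) argument; your remaining invertibility claims, for $\overline{R_1(\overline{k};\bsy{l})}$ and $R(k)$ via Remark~\ref{non_vanishing_of_fund_solutions}, are fine, and with the missing step added the determinant comparison through \eref{an1} and the similarity $U(k)=R(k)^{-1}\mathfrak{S}(k;P,L)T(k;\bsy{l})R(k)$ does yield (i)$\Leftrightarrow$(ii) as you intend.
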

\begin{proof}
From Lemma~\ref{14} we know that (iii) implies (i). We now first show the equivalence
of (i) and (ii), using that \eref{an1} implies
\be
\label{ZrelU}
\eq{
&Z(k;P,L,\bsy{l})\\
  &\hspace{0.5cm}=\left((P+L)R(k)+P^{\perp}\overline{D(\overline{k})}R(k)\right)\left(
    \eins-U(k))\right)R(k)^{-1}\overline{R_1(\overline{k};\bsy{l})}.
}
\ee
Thus we need to show that the determinants of 
$(P+L)R(k)+P^{\perp}\overline{D(\overline{k})}R(k)$ and of
$R(k)^{-1}\overline{R_1(\overline{k};\bsy{l})}$ do not vanish when $k>0$. For the second
expression we simply note that both $R(k)$ and $R_1(k;\bsy{l})$ are invertible for 
real $k$. Next, assume that $\det((P+L)R(k) + P^{\perp}\overline{D(\overline{k})}R(k))=0$. 
Then there exists $\bsy{a}\in\kz^E\setminus\{\bsy{0}\}$ such that 
$(P+L+P^{\perp}\overline{D(\overline{k})}){\bsy{a}}=\bsy{0}$ or, equivalently, 
$P\bsy{a}=\bsy{0}$ and $(L+P^{\bot}D(k)P^{\bot})\bsy{a}=\bsy{0}$.
Hence, 
\be
\label{z6}
\left<\bsy{a},\lk L+P^{\bot}D(k)P^{\bot}\rk\bsy{a}\right>_{\kz^E}=
\left<\bsy{a},\lk L+D(k)\rk\bsy{a}\right>_{\kz^E}=0,
\ee
which is equivalent to 
\be
\label{z7}
\left<\bsy{a},L\bsy{a}\right>_{\kz^E} = 
-\left<\bsy{a},\re (D(k))\bsy{a}\right>_{\kz^E}-
\ui\left<\bsy{a},\im(D(k))\bsy{a}\right>_{\kz^E}.
\ee
Since $L$ is self-adjoint, the left-hand side is real, whereas \eref{D(k)} 
implies that the right-hand side has a non-vanishing imaginary part when $k\in\rz$. 
This proves $\det \left((P+L)R(k)+P^{\perp}\overline{D(\overline{k})}R(k)\right)\neq 0$
for $k>0$.

Finally, we have to show that (i) implies (iii). For this assume that $k>0$ is a zero
of $\det Z(k)$. Any solution vector $(\bsy\gamma,\bsy\alpha,\bsy\beta)^T$ from \eref{Z4} 
can be used to construct a function \eref{17}. This is an eigenfunction, iff 
$\bsy{\gamma}=\bsy{0}$. By \eref{ZrelU}, the corresponding solution 
$\bsy{v}=(\bsy{c},\bsy{a},\bsy{b})^T$ of $(\eins-U(k))\bsy{v}=\bsy{0}$ is of the form
\begin{equation}
\bsy{v}=\bma{c}\bsy{c}\\ \bsy{a} \\ \bsy{b}\ema = R(k)^{-1}\overline{R_1(\overline{k};\bsy{l})} 
\bma{c}\bsy\gamma\\ \bsy\alpha \\ \bsy\beta\ema,
\end{equation}
such that $\bsy{\gamma}=\bsy{0}$, iff $\bsy{c}=\bsy{0}$. 

Due to Lemma~\ref{lem:U} the proof of Theorem~3.1 in \cite{KostrykinSchrader:1999} can 
be applied to the current case, leading to $\bsy{c}=\bsy{0}$. Hence there
 exists an eigenfunction corresponding to the eigenvalue $k^2$.
\end{proof}
Altogether, the spectrum of $H$ has the following structure.
\begin{prop}
\label{z8}
The spectrum of $H$ is bounded from below and we have 
\be
\label{z12}
\sigma(H)=
\cases{
\sigma_{pp}(H), & $\mc{E}_{\ex}=\emptyset$,\\
\sigma_{ess}(H)\cup\sigma_{pp}(H), & otherwise,
}
\ee
where $\sigma_{ess}(H)=[0,\infty)$ is the essential spectrum and $\sigma_{pp}(H)$ is the pure 
point spectrum of $H$, respectively. The eigenvalues in $\sigma_{pp}(H)$ have finite 
multiplicities that are bounded by $E$.   
\end{prop}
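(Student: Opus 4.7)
The plan is to establish the lower bound first, and then handle the compact and non-compact cases separately, leaning on Lemma~\ref{14} for the multiplicity statement. For the lower bound, self-adjoint realisations of $-\Delta$ on $\mc{D}(P,L)$ are known to be semibounded (see \cite{KostrykinSchrader:1999,Kuchment:2004}). Under the simplifying reductions made after \eref{4a}, the potential $V$ is a bounded multiplication operator: $V_e\equiv 0$ on external edges, and each internal $V_e$ extends continuously to the compact interval $[0,l_e]$. A bounded perturbation preserves semiboundedness, so $H$ is bounded below.

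When $\mc{E}_{\ex}=\emptyset$, the graph $\Gamma$ is a finite union of compact intervals, so Rellich's theorem applied edge-wise gives that the embedding $H^2(\Gamma)\hookrightarrow L^2(\Gamma)$ is compact. Hence $(H-z)^{-1}$ is compact for any $z$ in the resolvent set, which forces $\sigma_{ess}(H)=\emptyset$ and $\sigma(H)=\sigma_{pp}(H)$, with eigenvalues of finite multiplicity accumulating only at $+\infty$. The bound $\leq E$ on nonzero eigenvalues comes directly from Lemma~\ref{14}, and the same counting argument (restricted to internal edges, since any zero mode must vanish on the external edges by the $L^2$ condition as $V_e\equiv 0$ there) handles the case $\lambda=0$.

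For $\mc{E}_{\ex}\neq\emptyset$ I would apply Weyl's theorem on invariance of the essential spectrum. Since $V$ is bounded and supported on the compact interior of $\Gamma$, the operator $V(-\Delta-z)^{-1}$ maps $L^2(\Gamma)$ into $H^2$-functions supported on a compact subset of $\Gamma$, on which the embedding into $L^2$ is compact; consequently $(H-z)^{-1}-(-\Delta-z)^{-1}$ is compact and $\sigma_{ess}(H)=\sigma_{ess}(-\Delta)$. The identity $\sigma_{ess}(-\Delta)=[0,\infty)$ for any self-adjoint $(P,L)$ conditions follows by constructing Weyl sequences $\psi_n(x)=n^{-1/2}\chi(x/n-n)\ue^{\ui kx}$ on a fixed external edge, with $\chi\in C_0^\infty(\rz)$ (their supports drift to infinity, so the vertex conditions are trivially satisfied, $\|\psi_n\|$ is constant, $\psi_n\rightharpoonup 0$, and a direct estimate gives $\|(-\Delta-k^2)\psi_n\|\to 0$, proving $[0,\infty)\subseteq\sigma_{ess}(-\Delta)$), and by comparing with the direct sum of the Dirichlet Laplacian on the compact interior and the free Laplacians on the external half-lines — whose resolvent differs from $(-\Delta_{P,L}-z)^{-1}$ by a finite-rank operator through Krein's formula — for the reverse inclusion. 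Combined with Lemma~\ref{14} this yields the stated decomposition.

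The main obstacle I foresee is the resolvent-comparison step in the non-compact case: the $(P,L)$ boundary conditions couple external and internal edges, so some care is needed to separate the 'free' part of $-\Delta$ on the external half-lines and to verify the compactness hypothesis of Weyl's theorem. However, since $V$ has compact support, the relevant estimates reduce to one-dimensional Sobolev-embedding statements on a compact piece of $\Gamma$, which are standard.
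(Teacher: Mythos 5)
Your proposal is correct and follows essentially the same route as the paper: semiboundedness from the bounded potential perturbing the semibounded Laplacian, compact resolvent when $\mc{E}_{\ex}=\emptyset$, relative compactness of $V$ with respect to $-\Delta$ together with Weyl's theorem to get $\sigma_{ess}(H)=\sigma_{ess}(-\Delta)=[0,\infty)$ otherwise, and Lemma~\ref{14} for the multiplicity bound. The only difference is that you spell out details the paper delegates to references (the Weyl-sequence and finite-rank comparison argument for $\sigma_{ess}(-\Delta)=[0,\infty)$, and the explicit treatment of the eigenvalue $\lambda=0$), which is a harmless, indeed slightly more self-contained, elaboration.
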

\begin{proof}
Since the multiplication operator $V$ is bounded in $L^2(\Gamma)$ and $-\Delta$ is 
bounded from below, the semi-boundedness from below of the spectrum of $H$ follows
immediately. 

If $\mc{E}_{\ex}=\emptyset$ the operator $H$ has compact resolvent and hence the spectrum
is pure point and the eigenvalues have finite multiplicities. 

From now on we consider $\mc{E}_{\ex}\neq\emptyset$. The fact that then 
$\sigma_{ess}(H)=[0,\infty)$ follows from noticing that $V$ is bounded and vanishes at 
infinity and, therefore, is relatively compact with respect to $-\Delta$. This is shown 
in complete analogy to \cite[Satz 17.2]{Weidmann:2003b}. Hence 
$\sigma_{ess}(H)=\sigma_{ess}(-\Delta)$, and the latter is well known to be $[0,\infty)$.

The point spectrum is already characterised in Lemma~\ref{14} and in 
Proposition~\ref{prop:posev}.
\end{proof}
In order to characterise the spectrum of $H$ fully the only remaining task is to prove 
the absence of a singularly continuous spectrum. This will follow from an analysis of 
the resolvent and will be given in the next section.
\section{Resolvents}
\label{sec:res}
Our first goal is to identify the resolvent of $H$ as an integral operator.
\begin{defn}[\cite{KostrykinSchrader:2006b}]
\label{defnq}
An operator $K:\mc{D}_K\subset L^2(\Gamma)\to L^2(\Gamma)$ is an integral
operator, if for all $e,e'\in\mc{E}$ there exist functions 
$K_{ee'}(\cdot,\cdot):I_e\times I_{e'}\to\kz$ such that
\begin{enumerate}
\item $K_{ee'}(x,\cdot)\psi_{e'}(\cdot)\in L^1(I_{e'})$ for almost all $x\in I_e$ and all
$\psi=\{\psi_e\}_{e\in\mc{E}}\in\mc{D}_K$,
\item $\phi=K\psi$ with $\psi\in\mc{D}_K$ and
\begin{equation}
\label{19}
\phi_e(x)=\sum_{e'\in\mc{E}}\int_0^{l_{e'}}K_{ee'}(x,y)\,\psi_{e'}(y)\ \ud y.
\end{equation}
\end{enumerate}
\end{defn}
As a shorthand, we sometimes also denote the integral kernel of an integral operator
$K$ as $K(\bsy{x},\bsy{y})=\{K_{ee'}(x_e ,y_{e'})\}_{e,e'\in\mc{E}}$ and its action
\eref{19} as
\begin{equation}
\label{n2}
\phi (\bsy{x}) =\int_\Gamma K(\bsy{x},\bsy{y})\,\psi(\bsy{y})\ \ud\bsy{y}.
\end{equation}
We now show that the resolvent
\begin{equation}
\label{wn1}
R_H (k^2) =\left(H-k^2\right)^{-1}
\end{equation}
of the Schr\"odinger operator $H$ is an integral operator. Here we follow 
\cite{KostrykinSchrader:2006b} closely, where the same was proven for the resolvent of
the Laplacian. 

We require some definitions, the first one being the `free' resolvent kernel
\begin{equation}
\label{24}
r^{(0)}_{ee'}(k;x,y):=\frac{\delta_{ee'}}{W_e(k)}
\cases{
\ue^{ik|x-y|},&  $e\in\mc{E}_{\ex}$,\\
u^+_e(k;x) u^-_e(k;y), & $x\geq y$, $e\in\mc{E}_{\inte}$,\\
u^-_e(k;x) u^+_e(k;y), & $x\leq y$, $e\in\mc{E}_{\inte}$,
}
\end{equation}
where $W_e$ is the Wronskian \eref{def:Wronski} when $e\in\mc{E}_{\inte}$ 
and $W_e(k)=2\ui k$ when $e\in\mc{E}_{\ex}$. We also need the matrix
\begin{equation}
\label{25a}
\Phi(k;\bsy{x}):=
\bma{ccc} 
\ue^{\ui k\bsy{x}} & 0 & 0 \\
0 & \bsy{u}_+(k;\bsy{x}) & \bsy{u}_-(k;\bsy{x})
\ema,
\end{equation}
where $\ue^{\ui k\bsy{x}}$ is a diagonal matrix with diagonal entries $\ue^{\ui kx_e}$,
as well as the diagonal matrix 
\be
\label{24x}
\bsy{W}(k):=
\bma{ccc}
\bsy{W}_{\ex}(k) & 0 & 0\\
0 & \bsy{W}_{\inte}(k) & 0 \\
0 & 0 & \bsy{W}_{\inte}(k) 
\ema,
\ee
where $\bsy{W}_{\ex/\inte}(k)$ are diagonal matrices with the Wronskians $W_e(k)$,
$e\in\mc{E}_{\ex/\inte}$, on the diagonal.
\begin{theorem}
\label{22}
Let $k\in S_\delta\setminus\Sigma_Z$. Then $R_H(k^2)$ is an integral operator with 
integral kernel
\begin{equation} 
\label{56a}
\eq{
&r_{H}(k^2;\bsy{x},\bsy{y})\\
  &\hspace{0.5cm}=r^{(0)}(k;\bsy{x},\bsy{y})+\Phi(k;\bsy{x})
    \overline{R_1(\overline{k},\bsy{l})}^{-1}\left(\mathds{1}-
    \mathfrak{S}(k;P,L)T(k;\bsy{l})\right)^{-1}\\
  &\hspace{1.0cm}\cdot\mathfrak{S}(k;P,L) R_1(k;\bsy{l})\bsy{W}^{-1}(k)\Phi(k;\bsy{y})^T.
}
\end{equation}
\end{theorem}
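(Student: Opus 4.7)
\medskip

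\noindent\textbf{Proof plan.}
Denote by $K$ the integral operator defined by the candidate kernel \eref{56a}. I would verify three properties: (a) $K\psi$ lies in $H^2(\Gamma)$ edge-wise and $(H-k^2)K\psi=\psi$; (b) $K\psi$ satisfies the vertex conditions \eref{4}, so $K\psi\in\mc{D}(P,L)$; (c) for at least one $k_0\in S_\delta\setminus\Sigma_Z$ the point $k_0^2$ lies in the resolvent set of $H$, so that (a)–(b) force $K=R_H(k_0^2)$ by uniqueness; analyticity of both sides in $k$ on $S_\delta\setminus\Sigma_Z$ then extends the identity.

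For (a), the kernel $r^{(0)}$ is the standard one–dimensional Green's function on each edge separately: on $e\in\mc{E}_{\inte}$ it is constructed from the admissible fundamental solutions $u_e^\pm$ by variation of parameters, with the Wronskian $W_e(k)$ accounting for the jump in the first derivative; on $e\in\mc{E}_{\ex}$ it is the usual free resolvent with $\Im k>0$, giving $L^2$–decay at infinity. Standard ODE theory thus gives $(-\partial_x^2+V_e-k^2)r^{(0)}_{ee}(k;\cdot,y)=\delta_y$. The second summand of \eref{56a} has the form $\Phi(k;\bsy{x})M(k,\bsy{y})$, and the columns of $\Phi(k;\cdot)$ are, by construction, solutions of the homogeneous equation on each edge. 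Hence $(H-k^2)$ annihilates this piece, and $(H-k^2)K\psi=\psi$ follows edge-wise.

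For (b), I would compute the boundary vectors $\underline{\phi}$ and $\underline{\phi'}$ of $\phi:=K\psi$ as in \eref{5}. Evaluating $r^{(0)}(k;\bsy{x},\bsy{y})$ and its $\bsy{x}$–derivative at the edge ends produces the blocks $R_1(k;\bsy{l})\bsy{W}(k)^{-1}\Phi(k;\bsy{y})^{T}$ and $R_2(k;\bsy{l})\bsy{W}(k)^{-1}\Phi(k;\bsy{y})^{T}$ acting on $\psi(\bsy{y})$ under the integral (with the correct sign of the inward derivative built into \eref{25a}). For the correction term, evaluating $\Phi(k;\bsy{x})$ and its derivative at the ends yields precisely the matrices $X(k;\bsy{l})$ and $Y(k;\bsy{l})$ of \eref{16a}. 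Assembling $(P+L)\underline{\phi}+P^{\perp}\underline{\phi'}$ therefore produces the matrix $Z(k;P,L,\bsy{l})$ acting on the coefficient matrix of the correction, added to the free-part contribution $\bigl((P+L)R_1+P^{\perp}R_2\bigr)\bsy{W}^{-1}\Phi(k;\bsy{y})^T$. Using the factorisation \eref{an1}, the choice of coefficient matrix $\overline{R_1(\bar k;\bsy{l})}^{-1}(\eins-\mathfrak{S}T)^{-1}\mathfrak{S}\,R_1(k;\bsy{l})\bsy{W}^{-1}(k)$ in \eref{56a} is exactly what is needed to cancel the free-part boundary contribution; the condition $k\in S_\delta\setminus\Sigma_Z$ guarantees that $(\eins-\mathfrak{S}T)$ is invertible, and along the way one checks (as in the proof of Proposition~\ref{prop:posev}, by taking imaginary parts of $\langle\bsy{a},(L+D(k))\bsy{a}\rangle$) that the left-factor $P+L+P^{\perp}\overline{D(\bar k)}$ is invertible on $S_\delta$.

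For (c), choosing $k_0=\ui\kappa$ with $\kappa$ large places $k_0^2$ below $\sigma(H)$ by Proposition~\ref{z8}, so $R_H(k_0^2)$ exists and agrees with $K_{k_0}$ by (a)–(b). Both the kernel \eref{56a} and $R_H(k^2)$ (as an operator-valued function) are analytic in $k\in S_\delta$ away from $\Sigma_Z$, and agreement at $k_0$ propagates to all of $S_\delta\setminus\Sigma_Z$. The main obstacle is the algebraic computation in (b): keeping straight the block structure (external vs.\ internal, and the two edge ends of an internal edge with the inward-derivative sign convention), and recognising that the specific grouping in \eref{an1}—which was introduced precisely for the eigenvalue analysis—is the identity that makes the cancellation of the boundary terms work out. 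Once that factorisation is in hand, the proof is essentially bookkeeping.
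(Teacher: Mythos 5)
Your plan follows essentially the same route as the paper's proof: verify that the candidate kernel satisfies $(H-k^2)\phi=\psi$ edge-wise (free part by variation of parameters, correction part annihilated since the columns of $\Phi$ solve the homogeneous equation), check the vertex conditions via the factorisation \eref{an1} (the paper works with the equivalent $Z^{-1}$-form \eref{23}, which makes the cancellation in \eref{32} immediate and only requires invertibility of $Z$ rather than a separate argument for the left factor $P+L+P^{\perp}\overline{D(\overline{k})}$), and then extend from a dense set and a good value of $k$ by boundedness of the resolvent and analyticity in $k\in S_\delta\setminus\Sigma_Z$. The proposal is correct and matches the paper's argument.
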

\begin{proof}
In order to prove that \eref{56a} is the resolvent kernel we first rewrite it as
\begin{equation} 
\label{23}
\eq{
r_{H}\lk k^2;\bsy{x},\bsy{y}\rk
  &=r^{(0)}\lk k;\bsy{x},\bsy{y}\rk-\Phi(k;\bsy{x})Z(k;P,L,\bsy{l})^{-1}\\
  &\hspace{0.5cm}\cdot\lk(P+L)R_1(k;\bsy{l})+P^{\bot}R_2(k;\bsy{l})\rk\bsy{W}^{-1}(k)
     \Phi(k;\bsy{y})^T,
}
\end{equation}
making use of the relation \eref{an1}. We then have to show that for any
$k\in S_\delta\setminus\Sigma_Z$ and for every $\psi\in L^2(\Gamma)$ the function
\begin{equation}
\label{23a}
\phi(\bsy{x}) = \int_\Gamma r_{H}(k^2;\bsy{x},\bsy{y})\,\psi(\bsy{y})\ \ud\bsy{y}
\end{equation}
is in the domain of $H$ and satisfies 
\begin{equation}
\label{23b}
(H-k^2)\phi=\psi.
\end{equation}
We now assume that $k\not\in\Sigma_Z$, so that $Z(k)$ is invertible, and that $k^2$ is not 
in the spectrum $\sigma(H)$ of $H$, hence $R_H(k^2)$ is a bounded operator.  Also,
the explicit form of \eref{23} ensures that the components $\phi_e$ of \eref{23a} are
twice differentiable, hence one can apply $H-k^2$. 

Suppose now that every component $\psi_e$ is continuous on $(0,l_e)$. Direct calculations
for $e\in\mc{E}_{\inte}$ as well as for $e\in\mc{E}_{\ex}$ yield
\begin{equation}
\label{26}
\lk-\frac{\ud^2}{\ud x^2}+V_e(x)-k^2\rk\int_{0}^{l_e}r^{(0)}_{ee}(k;x,y)\,\psi_e(y)\ \ud y
=\psi_e(x).
\end{equation}
Moreover, the matrix entries of $\Phi(k;\bsy{x})$ are eigenfunctions of $H$ (as a formal
differential operator), so that $(H-k^2)\Phi(k;\bsy{x})=0$. This proves \eref{23b} for 
$\psi$ in a dense subset of $L^2(\Gamma)$. As the resolvent is bounded the result can be
extended to $L^2(\Gamma)$. 

In order to prove that \eref{23a} is in the domain of $H$ we first observe that the explicit
form \eref{25} of $r_H^{(0)}$ as well as that of $\Phi$ \eref{24x} imply that 
$\phi\in H^2(\Gamma)$. Hence it remains to verify the vertex conditions. 

We again assume that $\psi_e\in C(0,l_e)$ and find, when $e\in\mc{E}_{\ex}$ and $x$
is close to zero, that
\begin{equation}
\label{27a}
\int_{0}^{l_e}r^{(0)}_{ee}(k;x,y)\psi_e(y)\ \ud y=\frac{\ue^{-\ui kx}}{W_e(k)}
\int_{0}^{l_e}\ue^{\ui ky}\psi_e (y)\ \ud y.
\end{equation}
When $e\in\mc{E}_{\inte}$ and $x$ is close to zero, then
\begin{equation}
\label{27}
\int_{0}^{l_e}r^{(0)}_{ee}(k;x,y)\psi_e(y)\ \ud y=\frac{u^-_e(k;x)}{W_e(k)}\int_{0}^{l_e}
u^+_e(k;y)\psi_e(y)\ \ud y,
\end{equation}
and when $x$ is close to $l_e$, 
\begin{equation}
\label{28}
\int_{0}^{l_e}r^{(0)}_{ee}(k;x,y)\psi_e (y)\ \ud y=\frac{u^+_e(k;x)}{W_e(k)}
\int_{0}^{l_e}u^-_e(k;y)\psi_e (y)\ \ud y.
\end{equation} 
With the abbreviation
\begin{equation}
G(k):=-Z(k;P,L,\bsy{l})^{-1}\lk(P+L)R_1(k;\bsy{l})+P^{\bot}R_2(k;\bsy{l})\rk
\end{equation}
this finally yields
\begin{equation}
\label{29}
\eq{
\underline{\phi}
  &=R_1(k;\bsy{l})\bsy{W}^{-1}(k)\int_{\Gamma}\Phi(k;\bsy{y})^T\psi(\bsy{y})\ \ud\bsy{y}\\
  &\hspace{0.5cm}+X(k;\bsy{l})G(k)\bsy{W}^{-1}(k)\int_{\Gamma}\Phi(k;\bsy{y})^T\psi(\bsy{y})
     \ \ud\bsy{y},\\
\underline{\phi'}
  &=R_2(k;\bsy{l})\bsy{W}^{-1}(k)\int_{\Gamma}\Phi(k;\bsy{y})\psi(\bsy{y})\ \ud\bsy{y}\\
  &\hspace{0.5cm}+Y(k;\bsy{l})G(k)\bsy{W}^{-1}(k)\int_\Gamma\Phi(k;\bsy{y})^T\psi(\bsy{y})
     \ \ud\bsy{y}.
}
\end{equation}
Thus,
\begin{equation}
\label{32}
\eq{
(P+L)\underline{\phi}+P^{\bot}\underline{\phi'}
   &=\lk(P+L){R_1(k;\bsy{l})}+P^{\bot}R_2(k;\bsy{l})\rk\\
   &\qquad\qquad\cdot\bsy{W}^{-1}(k)\int_\Gamma\Phi(k;\bsy{y})^T\psi(\bsy{y})\ \ud\bsy{y}\\
   &\quad+Z(k;P,L,\bsy{l})G(k)\bsy{W}^{-1}(k)\int_\Gamma\Phi(k;\bsy{y})^T\psi(\bsy{y})
       \ \ud \bsy{y}\\
   &=\bsy{0}
}
\end{equation}
This proves the claim for a dense subset of $L^2(\Gamma)$. Since the resolvent is bounded
the result extends to all of $L^2(\Gamma)$.

The right-hand side of \eref{23} is analytic for $k\in S_\delta\setminus\Sigma_Z$ with 
poles in $\Sigma_Z$ due to the zeros of $\det Z(k)$. Hence, the representation \eref{23}
for the resolvent kernel can be extended to $k\in S_\delta\setminus\Sigma_Z$.
\end{proof}
The explicit form \eref{56a} of the resolvent kernel allows us to prove the absence of
a singular continuous spectrum in a way similar to the case of Laplacians on graphs
\cite{Ong:2006}.
\begin{prop}
\label{z8a}
Let $\phi\in C_0^\infty(\Gamma)$. Then the function 
$\langle\phi,\left(R_H(\lambda)-R_H(\lambda)^\ast\right)\phi\rangle$
can be extended from the upper half-plane $\im\lambda>0$ through $\rz_+$ into the
lower half-plane, except for a discrete subset of $\rz_+$. In particular, the singular 
continuous spectrum of $H$ is empty. Hence, $\sigma(H)=\sigma_{ac}(H)\cup\sigma_{pp}(H)$. 
\end{prop}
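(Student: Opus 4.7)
The approach is to exploit the explicit form \eref{56a} of the resolvent kernel established in Theorem~\ref{22}, together with the conjugation relation \eref{connection}, in order to show that the scalar function $\langle\phi,R_H(\lambda)\phi\rangle$ admits a meromorphic continuation across $\rz_+$. The passage to the absence of a singularly continuous spectrum is then a standard consequence along the lines of \cite{Ong:2006}.

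First I would fix $\phi\in C_0^\infty(\Gamma)$ and consider
\[
F_\phi(k):=\langle\phi,R_H(k^2)\phi\rangle=\sum_{e,e'\in\mc{E}}\int_{I_e}\int_{I_{e'}}\overline{\phi_e(x)}\,r_{H,ee'}(k^2;x,y)\,\phi_{e'}(y)\,\ud y\,\ud x,
\]
for $k\in S_\delta$ with $\im k>0$, so that $k^2$ lies in the resolvent set. Since each $\phi_e$ is compactly supported in $(0,l_e)$ (in particular, vanishing outside a bounded subset of every external edge), the double integrals run over compact sets and therefore depend analytically on $k$ wherever the kernel \eref{56a} does. By Theorem~\ref{22} that kernel is meromorphic on $S_\delta$ with poles only at points of $\Sigma_Z$, which is discrete. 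Hence $F_\phi$ extends to a meromorphic function on all of $S_\delta$, with a discrete pole set.

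Next I would relate $R_H(\lambda)^\ast$ to the same kernel at the conjugate spectral parameter: since $R_H(\lambda)^\ast=R_H(\overline\lambda)$, one has $\langle\phi,R_H(\lambda)^\ast\phi\rangle=\overline{F_\phi(\overline{k})}$ for $\lambda=k^2$, $\im k>0$, and the symmetry \eref{connection} shows that the right-hand side is itself an analytic expression in $k$ (obtained by swapping $u_e^+\leftrightarrow u_e^-$ in \eref{56a} and conjugating), meromorphic on $S_\delta$. Consequently the difference
\[
G_\phi(\lambda):=\langle\phi,(R_H(\lambda)-R_H(\lambda)^\ast)\phi\rangle,\qquad\im\lambda>0,
\]
extends to a meromorphic function of $\lambda$ in a complex neighbourhood of $\rz_+$, with poles contained in the discrete set $\{k^2:k\in\Sigma_Z\cap\rz_+\}$. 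This gives the first assertion. To derive $\sigma_{sc}(H)=\emptyset$ I would invoke the familiar criterion (as in \cite{Ong:2006}): since the spectral measure $\mu_\phi$ is obtained via Stone's formula from the boundary values of $G_\phi$, and $G_\phi$ extends meromorphically across $\rz_+$ except on a discrete subset, $\mu_\phi$ can have no singular continuous component on any open interval disjoint from that exceptional set. A countable cover together with the density of $C_0^\infty(\Gamma)$ in $L^2(\Gamma)$ then rules out singular continuous spectrum globally, and combined with Proposition~\ref{z8} this yields $\sigma(H)=\sigma_{ac}(H)\cup\sigma_{pp}(H)$.

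The main obstacle I anticipate lies in the second step: one must verify rigorously that the meromorphic extension of $F_\phi$ through $\rz_+$ actually agrees with $\langle\phi,R_H(\lambda)^\ast\phi\rangle$ for $\im\lambda<0$, which requires carefully applying the conjugation symmetry \eref{connection} to every ingredient of \eref{56a} (in particular to $\mathfrak{S}(k;P,L)$, to $\Phi(k;\bsy{x})$, and to $\overline{R_1(\overline{k};\bsy{l})}^{-1}$). A secondary technical point is to confirm that the inverse factor $(\eins-\mathfrak{S}(k;P,L)T(k;\bsy{l}))^{-1}$ does not introduce an accumulation of poles on $\rz_+$; here one can invoke the unitary representation $U(k)$ of Lemma~\ref{lem:U} and Proposition~\ref{prop:posev} to identify the singular $k$ on $\rz_+$ precisely with the discrete set $\Sigma_Z\cap\rz_+$.
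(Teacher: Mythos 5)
Your proposal is correct in substance and follows essentially the same route as the paper: both arguments extract everything from the explicit resolvent kernel of Theorem~\ref{22} (most conveniently in the form \eref{23}), its analyticity in $k\in S_\delta\setminus\Sigma_Z$ with $\Sigma_Z$ discrete, the compact support of $\phi$, and then a standard limiting-absorption criterion together with density of $C_0^\infty(\Gamma)$. The paper simply sidesteps the bookkeeping you flag as the main obstacle (note that for $\im\lambda>0$ one has $\langle\phi,R_H(\lambda)^\ast\phi\rangle=\overline{F_\phi(k)}$ rather than $\overline{F_\phi(\overline{k})}$; the reflected expression $\overline{F_\phi(\overline{k})}$ is the holomorphic function that provides the continuation, not an identity in the upper half-plane) by establishing only the uniform bound \eref{z13} on $\langle\phi,\im R_H(\lambda+\ui\varepsilon)\phi\rangle$ on intervals between consecutive points of $\Sigma_Z\cap\rz_+$ and invoking \cite[Proposition 4.1]{Simon:1987}.
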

\begin{proof}
We choose $\phi\in C_0^{\infty}(\Gamma)$ and consider
\be
\label{lt3}
\left<\phi,\im R_H\lk\lambda+\ui\varepsilon\rk\phi\right>,
\quad \lambda>0,\quad\varepsilon\to 0^+,
\ee
where 
\be
\im R_H(\lambda):=\frac{1}{2\ui}\lk R_H(\lambda)-R_H(\lambda)^\ast\rk,
\quad\lambda\in\kz\setminus\sigma(H).
\ee
Representing $\lambda+\ui\varepsilon=k^2$, the limit required in \eref{lt3} can be
achieved by keeping $\re k>0$ fixed and taking $\im k\to 0^+$. We remark that 
$R_H(k^2)^\ast$ is an integral operator whose kernel is the complex conjugate of \eref{56a}.

We now fix two consecutive zeros $0<k_n<k_{n+1}$ of $\det Z(k)$  (corresponding to
consecutive eigenvalues $0<k_n^2<k_{n+1}^2$ of $H$) and choose $0<a<b$ such that
$(a,b)\subset(k_n,k_{n+1})$. The contribution to \eref{lt3} involving the matrix-valued 
integral kernel $r^{(0)}\lk k^2;\cdot,\cdot\rk$ in \eref{56a} is uniformly bounded in 
$\lambda$ taken from a suitable neighbourhood of $(a^2,b^2)$. 

Now choosing the representation \eref{23} for the resolvent kernel one obtains that all
contributions safe of $r^{(0)}\lk k^2;\cdot,\cdot\rk$ depend on $k\in S_\delta$ through the 
fundamental solutions $u_e^\pm$, or $\ue^{ikx}$. Hence, their contribution to \eref{lt3}
is analytic in $k$ except for poles at $k\in\Sigma_Z$ and at $k=0$. Since $\Sigma_Z$ is
discrete with no finite accumulation point, and all positive $k\in\Sigma_Z$ lead to
eigenvalues $k^2$ of $H$, the contribution in question is also uniformly bounded in 
$\lambda$ taken from a suitable neighbourhood of $(a^2,b^2)$. 

Altogether this confirms that there exists a constant $C_{\phi}\geq 0$ such that
\be
\label{z13}
\liminf\limits_{\varepsilon\to 0^+}\sup\limits_{\lambda\in(a,b)}\left<\phi, 
\im R_H\lk \lambda+\ui\varepsilon\rk\phi\right>_{L^{2}(\Gamma)}\leq C_\phi.
\ee  
Hence \cite[Proposition 4.1]{Simon:1987} applies, implying that $H$ has (at most) purely 
absolutely continuous spectrum in $(a,b)$. Since $(a,b)\subset (k_n,k_{n+1})$ can be chosen 
arbitrarily we conclude that the singularly continuous spectrum of $H$ is empty.
\end{proof}
In Proposition~\ref{z8} it was shown that $H$ has a purely discrete spectrum if and only if the
graph is compact. Hence, for non-compact graphs the heat-semigroup $\ue^{-Ht}$, $t>0$, 
is not trace class and, therefore, no heat-trace asymptotics exists. In that case we subtract 
a `free' contribution in such a way that the difference is a trace-class operator. As
\begin{equation}
\label{Laplaceint}
\ue^{-Ht} = \frac{\ui}{2\pi}\int_\gamma\ue^{-\lambda t}\,R_H(\lambda)\ \ud\lambda,
\end{equation}
where $\gamma$ is a contour encircling the spectrum of $H$ with positive orientation, the
relevant difference can be achieved by subtracting a `free' resolvent from $R_H$ in 
\eref{Laplaceint}. This procedure follows \cite{Yafaef:1991,Schrader:2007,Davis:2011}.

The `free' comparison operator is constructed using a graph $\Gamma_{\ex}$ by removing
all internal edges from $\Gamma$ and linking all external edges of $\Gamma$ in a single
vertex. Thus $\Gamma_{\ex}$ is an infinite star graph with $E_{\ex}$ edges. The associated 
Hilbert space is then
\begin{equation}
\label{1ex}
L^2(\Gamma_{\ex})=\bigoplus_{e\in\mc{E}_{\ex}}L^2(0,\infty),
\end{equation}
which is embedded in $L^2(\Gamma)$ in an obvious way, using the embedding operator 
$J_{\ex}:L^2\lk\Gamma_{\ex}\rk\rightarrow L^2(\Gamma)$, 
\be
\label{102}
J_{\ex}(\psi)_{e}:=
\cases{
\psi_e, & $e\in\mc{E}_{\ex}$,\\
0, & $e\in\mc{E}_{\inte}$.
}
\ee
On $\Gamma_{\ex}$ the two comparison operators are the Dirichlet-Laplacian, $-\Delta_D$,
and the Neumann-Laplacian, $-\Delta_N$. Their domains (in $L^2(\Gamma_{\ex})$) are
given in analogy to \eref{4} and \eref{5}, with $P_D=\eins_{E_{\ex}},L_D=0$ and 
$P_N=0,L_N=0$, respectively. 

Both operators, $-\Delta_{D/N}$, are non-negative and self-adjoint. Their resolvents, 
$R_{D/N}\lk k^2\rk:=\lk -\Delta_{D/N}-k^2\rk^{-1}$, and heat-semigroups,
$\exp(t\Delta_{D/N})$, $t>0$, are operators acting on $L^2\lk\Gamma_{\ex}\rk$. They 
can be compared to the resolvent, respectively to the heat-semigroup, of $H$ in terms of 
the operators $J_{\ex}R_{D/N}\lk k^2\rk J_{\ex}^{\ast}$ and 
$J_{\ex}\exp(t\Delta_{D/N})J_{\ex}^{\ast}$ acting
on $L^2(\Gamma)$. By construction, $R_{D/N}$ is the direct sum of the resolvents of 
Dirichlet-, respectively Neumann-, Laplacians on a half-line. Hence, they are integral 
operators with well-known integral kernels from which one immediately obtains the integral 
kernels for $J_{\ex}R_{D/N}\lk k^2\rk J_{\ex}^{\ast}$, as
\be
\label{resolvent_stargraph}
r_{D/N,ee'}\lk k^2;x,y\rk=\delta_{ee'}\frac{\ui}{2k}
\cases{
\ue^{\ui k\left|x-y\right|}\pm\ue^{\ui k\lk x+y\rk}, & $e\in\mc{E}_{\ex}$\\
0, & $e\in\mc{E}_{\inte}$
},
\ee
when $\im k>0$ (see also \cite{Schrader:2007}). 
\begin{prop}
\label{141b}
The difference of resolvents, $R_{H}\lk k^2\rk-J_{\ex}R_{D/N}\lk k^2\rk J_{\ex}^{\ast}$, 
is a trace-class operator and is self-adjoint for $k\in\ui\rz_+\cup\rz_0$. It is an 
integral operator with kernel 
\be
\label{121}
\eq{
 &r_H\lk k^2;\bsy{x},\bsy{y}\rk-r_{D/N}\lk k^2;\bsy{x},\bsy{y}\rk\\
  &\qquad=r^{(0)}\lk k^2;\bsy{x},\bsy{y}\rk-r_{D/N}\lk k^2;\bsy{x},\bsy{y}\rk\\
  &\qquad\quad+\Phi(k;\bsy{x})\overline{R_1(\overline{k},\bsy{l})}^{-1}
         \left(\mathds{1}-\mathfrak{S}(k;P,L)T(k;\bsy{l})\right)^{-1}\\
  &\qquad\qquad\cdot\mathfrak{S}(k;P,L) R_1(k;\bsy{l})\bsy{W}^{-1}(k)\Phi(k;\bsy{y})^T.
}
\ee
The trace of $R_{H}\lk k^2\rk-J_{\ex}R_{D/N}\lk k^2\rk J_{\ex}^{\ast}$ can be expressed as
\be
\label{103}
\eq{
 &\int_{\Gamma}\left[r_H\lk k^2;\bsy{x},\bsy{x}\rk-
    r_{D/N}\lk k^2;\bsy{x},\bsy{x}\rk\right]\ \ud\bsy{x}\\
 &\qquad=\sum_{e\in\mc{E}}\int_0^{l_e}\left[r_{H,ee}\lk k^2;x,x\rk
     -r_{D/N,ee}\lk k^2;x,x\rk\right]\ \ud x.
}
\ee
\end{prop}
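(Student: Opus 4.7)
The plan is to read (121) directly off Theorem~\ref{22} and the explicit kernels in \eref{resolvent_stargraph}, and then to verify the trace-class property piece by piece. Since $J_{\ex}$ extends by zero onto the internal edges and $J_{\ex}^{\ast}$ restricts, the operator $J_{\ex}R_{D/N}(k^2)J_{\ex}^{\ast}$ has integral kernel $r_{D/N}$ on the external--external blocks and zero on every other edge-pair block. Subtracting from \eref{56a} therefore leaves the $\mathfrak{S}$-matrix term untouched and only modifies the $r^{(0)}$ piece on the external sector, yielding \eref{121} immediately.

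For the trace-class assertion I would decompose the difference kernel into three groups and treat each separately. First, $r^{(0)}$ restricted to pairs of internal edges is smooth on the compact square $I_e\times I_e$ for every $e\in\mc{E}_{\inte}$, hence defines a trace-class operator edge-by-edge. Second, the external--external piece $r^{(0)}_{ee}-r_{D/N,ee}$ reduces via the explicit formulas \eref{24} and \eref{resolvent_stargraph} to a rank-one kernel essentially of the form $\ue^{\ui k(x+y)}$, the non-integrable $\ue^{\ui k|x-y|}$ contributions cancelling; for $\im k>0$ the factor $\ue^{\ui kx}$ belongs to $L^2(\rz_+)$, so this piece is trace-class with rank one per external edge. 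Third, the $\mathfrak{S}$-matrix contribution $\Phi(k;\bsy{x})M(k)\Phi(k;\bsy{y})^{T}$ is separable, so it defines an operator of rank at most $E$; its trace-class character follows once the entries of $\Phi$ lie in $L^2(\Gamma)$, and this holds because the $u_e^{\pm}(k;\cdot)$ are smooth on the compact intervals and $\ue^{\ui k x}\in L^2(\rz_+)$ when $\im k>0$.

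Self-adjointness for $k\in\ui\rz_+$ follows since $k^2<0$ lies outside both $\sigma(H)$ and $\sigma(-\Delta_{D/N})$, making $R_H(k^2)$ and $R_{D/N}(k^2)$ bounded and self-adjoint; the conjugation $J_{\ex}(\cdot)J_{\ex}^{\ast}$ clearly preserves self-adjointness. The trace formula \eref{103} is then the standard identity that the trace of a trace-class integral operator with kernel continuous on the diagonal equals the integral of its diagonal values, applied edge-by-edge through the direct-sum decomposition $L^2(\Gamma)=\bigoplus_e L^2(I_e)$.

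The principal technical point is the external-edge analysis: one must verify that after Dirichlet (resp.\ Neumann) subtraction the non-decaying $\ue^{\ui k|x-y|}$ contributions indeed cancel, leaving only exponentially decaying pieces which combine with the separable $\mathfrak{S}$-matrix term into a genuine trace-class (not merely Hilbert--Schmidt) operator on $L^2(\Gamma)$.
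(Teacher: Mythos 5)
Your route to the trace-class property is genuinely different from the paper's, and it is viable. The paper never uses the kernel for that part of the argument: it writes the difference as $\left[R_H(k^2)-R_{H_{D/N}}(k^2)\right]+\left[R_{H_{D/N}}(k^2)-J_{\ex}R_{D/N}(k^2)J_{\ex}^{\ast}\right]$, where $H_{D/N}$ is the Schr\"odinger operator on $\Gamma$ with Dirichlet/Neumann conditions at all vertices; the first bracket is trace class because changing the vertex conditions is a finite-rank perturbation on the resolvent level, and the second bracket equals $J_{\inte}R_{H_{\inte,D/N}}(k^2)J_{\inte}^{\ast}$, which is trace class by comparison with $-\Delta_{\inte,D/N}+V_{\min/\max}$ on the compact interior graph and the one-dimensional Weyl law. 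You instead work directly with the explicit kernel: the identification of the kernel of $J_{\ex}R_{D/N}J_{\ex}^{\ast}$ and hence \eref{121} is correct, the external--external subtraction does leave only the rank-one kernels proportional to $\ue^{\ui k(x+y)}$ with square-integrable factors for $\im k>0$, and the $\mathfrak{S}$-matrix term is separable of rank at most $E$ with $L^2$ factors, hence trace class. Your version makes the finite-rank structure explicit, but note it genuinely requires $\im k>0$ (for real $k$ the factors $\ue^{\ui kx}$ are not in $L^2(\rz_+)$), whereas the paper's operator-theoretic argument does not hinge on kernel decay; both treatments of self-adjointness and of \eref{103} are of the same brevity as the paper's.

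The one step you must repair is the internal-edge ``free'' piece. The kernel $r^{(0)}_{ee}$, $e\in\mc{E}_{\inte}$, is of Green's-function type: it is continuous on $I_e\times I_e$, but its first derivative jumps across the diagonal, so it is \emph{not} smooth on the square; more importantly, ``continuous (or smooth off the diagonal) kernel on a compact square'' is not a trace-class criterion at all --- continuity only yields Hilbert--Schmidt. To close the gap, either invoke the quantitative criterion that a kernel which is Lipschitz (indeed H\"older of exponent $>1/2$) in one variable, uniformly in the other, defines a trace-class operator on a compact interval --- which holds here because $u^{\pm}_e(k;\cdot)\in C^{\infty}(I_e)$ and the two smooth pieces of $r^{(0)}_{ee}$ match continuously along the diagonal --- or follow the paper and compare with the resolvent of a self-adjoint operator on the compact interior graph whose eigenvalues obey a Weyl law. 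Similarly, for \eref{103} the statement you need is that a trace-class operator admitting a kernel that is continuous (near the diagonal, say) has trace equal to the integral over the diagonal; ``continuous on the diagonal'' alone is not sufficient, since the kernel of a trace-class operator is only determined almost everywhere --- this is exactly the point for which the paper cites Krein and Schrader.
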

\begin{proof}
Self-adjointness is clear since by assumption $k^2\in\rz$ and the operators 
$H$ and $-\Delta_{D/N}$ are self-adjoint. 

In order to prove the trace-class property we introduce the auxiliary graph 
$\Gamma_{\inte}$, obtained by removing the external edges from $\Gamma$. Hence,
$\Gamma_{\inte}$ is a compact graph with edge set $\mc{E}_{\inte}$. We then define the 
Hilbert space $L^2(\Gamma_{\inte})$ and the embedding operator 
$J_{\inte}:L^2\lk\Gamma_{\inte}\rk\rightarrow L^2\lk\Gamma\rk$ in analogy to \eref{1ex} and 
\eref{102}, respectively, interchanging $\mc{E}_{\ex}$ with $\mc{E}_{\inte}$. 
We also require the auxiliary operators $H_{D/N}$ and $H_{\inte,D/N}$, both
acting as the Schr\"odinger operator $H$. The domain of  $H_{D/N}$ consists of
functions in $H^2(\Gamma)$ with Dirichlet/Neumann conditions in the vertices and, 
analogously, the domain of $H_{\inte,D/N}$ comprises of functions in $H^2(\Gamma_{\inte})$ 
with Dirichlet/Neumann conditions. Since $H_{D/N}$ is a finite rank perturbation of $H$ 
we infer that $R_H\lk k^2\rk-R_{H_{D/N}}\lk k^2\rk$ is trace class. Moreover, $H_{\inte,D/N}$ 
acts on a compact graph and can be bounded from above and below (in the sense of
quadratic forms) by $-\Delta_{\inte,D/N}+V_{\min/\max}$, where $V_{\min/\max}$ is the 
minimal/maximal value taken by the potential $V$ on the compact graph $\Gamma_{\inte}$.
The operators $-\Delta_{\inte,D/N}+V_{\min/\max}$ have compact resolvents (see 
\cite{Kuchment:2004}) and their eigenvalue asymptotics follow a Weyl law (in one dimension). 
Hence their resolvents are trace class. This implies that 
$J_{\inte}R_{H_{\inte,D/N}}\lk k^2\rk J_{\inte}^{\ast}$ is also trace class. Moreover, 
by construction $R_{H_{D/N}}\lk k^2\rk-J_{\ex}R_{D/N}\lk k^2\rk J_{\ex}^{\ast}=
J_{\inte}R_{H_{\inte,D/N}}\lk k^2\rk J_{\inte}^{\ast}$. Hence, the difference of resolvents,
\be
\label{5r}
\eq{
R_{H}\lk k^2\rk-J_{\ex}R_{D/N}\lk k^2\rk J_{\ex}^{\ast}
 &=\left[R_{H}\lk k^2\rk-R_{H_{D/N}}\lk k^2\rk\right] \\
 &\hspace{0.5cm}+\left[R_{H_{D/N}}\lk k^2\rk-J_{\ex}R_{D/N}\lk k^2\rk J_{\ex}^{\ast}\right],
}
\ee
is trace class.

By construction, and using Theorem~\ref{22}, the kernel of the
difference $R_{H}\lk k^2\rk-J_{\ex}R_{D/N}\lk k^2\rk J_{\ex}^{\ast}$ is given by 
\eref{121}, where
\begin{equation}
\label{r1t}
\eq{
r^{(0)}_{ee'}\lk k^2;x,y\rk-r_{D/N,ee'}\lk k^2;x,y\rk \\
\qquad = \frac{\delta_{ee'}}{W_e(k)}
\cases{
\mp\ue^{ik(x +y)},&  $e\in\mc{E}_{\ex}$,\\
u^+_e(k;x) u^-_e(k;y), & $x\geq y$, $e\in\mc{E}_{\inte}$,\\
u^-_e(k;x) u^+_e(k;y), & $x\leq y$, $e\in\mc{E}_{\inte}$.
}}
\end{equation}
Since the functions $u^\pm_e(k;x)$ are smooth on every compact interval $I_e$, 
$e\in\mc{E}_{\inte}$, and $\ue^{ik \lk x+y\rk}$ is smooth and square integrable on 
$\rz^+\times \rz^+$ when $\im k>0$, the relation~\eref{103} follows from 
\cite[p. 15]{Schrader:2007} and \cite[p. 117]{Krein:1969}. 
\end{proof}
\section{Asymptotics of the $\mathfrak{S}$-matrix}
\label{sec:asy}
In order to prove heat-kernel asymptotics for small $t$ we employ the relation 
\eref{Laplaceint} and first determine the behaviour of the resolvent kernel for 
large $|k|$. Following Theorem~\ref{22} this requires the asymptotics 
of the $\mf{S}$-matrix. 

For the purpose of asymptotic expansions we now choose a particular systems of
admissible fundamental solutions, see Definition~\ref{def:fundsys}. 
\begin{lemma}[\cite{Fedoryuk:1993,Harrison:2012}]
\label{6}
On each internal edge $e\in\mc{E}_{\inte}$ and for each $k\in S_\delta$ the equation 
\begin{equation}
\label{7}
-u''_e+V_e u_e-k^2 u_e=0
\end{equation}
possesses two linearly independent solutions $u^\pm_e$ such that, for fixed 
$x\in(0,l_e)$, the functions $u^\pm_e(k;x)$ are analytic in $k\in S_{\delta}$, and for 
fixed $k$ they are smooth in $x$. Moreover,  for $|k|\rightarrow\infty$, $k\in S_{\delta}$, 
these solutions possess asymptotic expansions
\begin{equation}
\label{8}
u^\pm_e(k;x) \sim \exp\lk\sum_{l=-1}^{\infty}k^{-l}\int_{0}^{x}\beta_{e,l,\pm}(y)\ 
\ud y\rk,
\end{equation}
that are uniform in $x\in(0,l_e)$. The derivatives ${u^\pm_e}'$ (with respect to $x$)
possess asymptotic expansions in the same domain that are given as the derivatives of 
the right-hand side of \eref{8}.

The coefficient functions $\beta_{e,l,\pm}(x)$ are determined by the recursion relations
\be
\label{101}
\eq{
\beta_{e,l+1,\pm}(x)=\pm\frac{\ui}{2}\left(\beta_{e,l,\pm}'(x)+\sum_{j=0}^l 
\beta_{e,j,\pm}(x)\beta_{e,l-j,\pm}(x)\right)
}
\ee
with $\beta_{e,-1,\pm}(x)=\pm\ui$, $\beta_{e,0,\pm}(x)=0$ and 
$\beta_{e,1,\pm}(x)=\mp\frac{\ui}{2} V_e(x)$.
\end{lemma}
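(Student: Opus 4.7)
The plan is to build the $u^\pm_e$ via a WKB ansatz and read off the recursion by matching powers of $k$, then to upgrade the resulting formal series to a genuine asymptotic expansion by a truncation-plus-Volterra argument. I would begin by writing $u^\pm_e(k;x) = \exp\bigl(\int_0^x \sigma_\pm(k;y)\,\ud y\bigr)$, which automatically enforces the normalisation $u^\pm_e(k;0) = 1$. Substituting into \eref{7} and using $(u^\pm_e)'' = (\sigma_\pm' + \sigma_\pm^2)u^\pm_e$ reduces the linear equation to the Riccati equation $\sigma_\pm' + \sigma_\pm^2 = V_e - k^2$.

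Next I would posit a formal expansion $\sigma_\pm(k;y) \sim \sum_{l=-1}^{\infty} k^{-l}\beta_{e,l,\pm}(y)$ and compare coefficients of $k^{-n}$ on both sides. The coefficient of $k^2$ gives $\beta_{e,-1,\pm}^2 = -1$, solved by $\beta_{e,-1,\pm} = \pm\ui$ (the two signs corresponding to the two independent solutions). Constancy of $\beta_{e,-1,\pm}$ together with the $k^1$-coefficient forces $\beta_{e,0,\pm} = 0$, and the $k^0$-coefficient then yields $\beta_{e,1,\pm} = \mp\frac{\ui}{2}V_e$. Solving the coefficient of $k^{-n}$ for $\beta_{e,n+1,\pm}$ in general produces exactly the recursion \eref{101}, and smoothness of all coefficients follows inductively from $V_e\in C^\infty(0,l_e)$.

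The main technical obstacle is promoting this formal expansion into a genuine asymptotic statement uniform in $x\in(0,l_e)$ and $k \in S_\delta$. The strategy, carried out in detail in \cite{Fedoryuk:1993}, is to truncate at order $N$, setting $u^{\pm,N}_e(k;x) := \exp\bigl(\sum_{l=-1}^N k^{-l}\int_0^x \beta_{e,l,\pm}(y)\,\ud y\bigr)$, and to observe that $(-\partial_x^2 + V_e - k^2)u^{\pm,N}_e = O(k^{-N-1})\cdot u^{\pm,N}_e$ uniformly on $[0,l_e]$. One then seeks an exact solution in the form $u^\pm_e = u^{\pm,N}_e(1 + \rho^\pm_N)$, which turns \eref{7} into an inhomogeneous equation for $\rho^\pm_N$ with source term of size $O(k^{-N-1})$. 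Variation of parameters against the approximate fundamental pair $(u^{+,N}_e, u^{-,N}_e)$ recasts this as a Volterra integral equation whose kernel involves $u^{+,N}_e u^{-,N}_e$. The restriction $k \in S_\delta$ keeps $\arg k$ bounded away from $\pm\pi/2$, controlling the exponentials and enabling a contraction/Gronwall estimate uniform in $x$, which yields $\rho^\pm_N = O(k^{-N-1})$. Since $N$ is arbitrary, this establishes \eref{8}; differentiating the ansatz and running a parallel argument for $\sigma_\pm = (u^\pm_e)'/u^\pm_e$ (legitimate by the non-vanishing property noted in Remark~\ref{non_vanishing_of_fund_solutions}) produces the corresponding expansion for the derivatives.

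Finally, analyticity of $u^\pm_e(k;x)$ in $k\in S_\delta$ comes for free from this construction: the truncated WKB functions depend analytically on $k$ and the Volterra iteration preserves analyticity, since uniform limits of holomorphic functions are holomorphic. Smoothness in $x$ for fixed $k$ is then immediate from the ODE \eref{7} together with $V_e\in C^\infty$.
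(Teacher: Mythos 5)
Your proposal is correct and is essentially the argument behind the paper's proof, which consists of nothing more than citations: existence of the solutions with the expansion \eref{8} is delegated to Fedoryuk (whose construction is exactly your truncation-plus-Volterra scheme for the WKB ansatz) and the recursion \eref{101} to Harrison et al.\ (exactly your Riccati coefficient matching), so you have simply filled in the content of those references. The one slip is your description of the sector: by \eref{SML}, $S_\delta$ excludes a neighbourhood of the \emph{negative real} half-axis (i.e.\ $\arg k$ stays away from $\pm\pi$, while the imaginary axis lies inside $S_\delta$), so the uniform control of the exponentials in the Volterra kernel is not obtained by keeping $\arg k$ away from $\pm\pi/2$ but, as in Fedoryuk, by choosing the endpoint from which one integrates according to the sign of $\im k$ and proving estimates uniform on closed subsectors.
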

\begin{proof}
The existence of the fundamental solutions possessing the asymptotic behaviour 
\eref{8} is proven in \cite[p. 37,38]{Fedoryuk:1993}. The recursion relations \eref{101} 
for the coefficients are deduced in \cite[p. 12]{Harrison:2012}.
\end{proof}
\begin{rem}
\label{rough_asymptotics}
We note that the leading asymptotic behaviour implied by \eref{101} is
\begin{equation}
\label{uzt}
u^\pm_e(k;x)= \ue^{\pm\ui kx + O(k^{-1})}.
\end{equation}
Similarly \cite{Fedoryuk:1993},
\begin{equation}
(u^\pm_e)'(k;x)= \lk \pm \ui k + O(k^{-1}) \rk u^\pm_e(k;x).
\end{equation}
Hence, asymptotically for large wave numbers, the solutions $u^\pm_e$ are
left- and right-moving, complex, plane waves.
\end{rem}
The recursion relations \eref{101} imply for the next coefficients that
\be
\label{56t}
\eq{
\beta_{e,2,\pm}(x) &=\frac{1}{4}V_e'(x) \\
\beta_{e,3,\pm}(x) &=\pm\frac{\ui}{8}V_e''(x)\mp\frac{\ui}{8}V_e(x)^2 \\
\beta_{e,4,\pm}(x) &=-\frac{1}{16}V_e^{(3)}(x)+\frac{1}{4} V_e'(x) V_e(x).
}
\ee
Using a simple induction on $l$ we also observe that
\be
\label{53}
\beta_{e,l,+}(x)=(-1)^l \beta_{e,l,-}(x)
\ee
and
\be
\label{54}
\beta_{e,l,\pm}(x)\in
\cases{
\rz, & $l$\quad even\\
\ui\rz, & $l$\quad odd.
}
\ee
We note that the condition $u^\pm_e(k;x)=\overline{u^\mp_e(\overline{k};x)}$ required by 
Definition~\ref{def:fundsys} is consistent with equation \eref{53}. 

If one adds further restrictions such as \eref{inival}, an asymptotic expansion as in 
Lemma~\ref{6} need not hold.

The goal of this section is to prove an asymptotic expansion for the $\mathfrak{S}$-matrix
\eref{71} for large $|k|$. We first notice that by setting $V\equiv 0$ the 
$\mathfrak{S}$-matrix \eref{71} becomes the well-known expression
\begin{equation}
\label{SLaplacedef}
\mfS_{-\Delta}(k;P,L)=-\lk P+L+\ui k P^{\bot}\rk^{-1}\lk P+L-\ui k P^{\bot}\rk 
\end{equation}
for the Laplacian (see \cite{KostrykinSchrader:2006b}). 
\begin{defn}
\label{90}
Let 
\begin{equation}
\label{62}
\bsy{\beta}_{j}:=
\bma{ccc}
0 & 0 & 0\\
0 & \beta_{e,j,-}(\bsy{0}) & 0\\ 
0 & 0 & -\beta_{e,j,+}\lk\bsy{l}\rk
\ema.
\end{equation}
We then set
\be
\label{70}
\Lambda_m:=\sum\limits_{j,n\in\nz_0,\atop n+j=m}(\ui L)^nP^{\bot}\overline{\bsy{\beta}_{j+1}}.
\ee
Let $\bsy{m}=(m_1,\dots, m_n) \in{\nz_0}^n$ be a multi-index. Let
\begin{equation}
\label{la1}
\Lambda_{n,r}:=\sum\limits_{|\bsy{m}|=r}\prod_{j=1}^n\Lambda_{m_j},
\end{equation}
where $|\bsy{m}|:=\sum\limits_{j=1}^nm_i$ is the length of the multi-index.
Finally set
\be
\label{om1}
\Omega_j:= \sum\limits_{n \ge 1, \ r \ge 0,\atop r+2n=j}\ui^n\Lambda_{n,r}
\ee
with the convention that $\Omega_0=1$.
\end{defn}
Our main result in this section is the following.
\begin{theorem}
\label{79}
The $\mfS$-matrix for a Schr\"odinger operator admits the asymptotic expansion 
\be
\label{Sasympt}
\mfS\lk k;P,L\rk\sim \mfS_{\infty}+\sum\limits_{m=1}^{\infty}k^{-m}\mf{S}_m,\quad 
|k|\rightarrow\infty,\ k \in S_{\delta},
\ee
where the matrix $\mfS_\infty$ is defined as the large-$k$ limit of the $\mfS$-matrix 
of the Laplacian and is given by
\be
\label{Sequiv}
\mfS_{\infty}:=\eins -2P=\lim_{|k| \rightarrow \infty}\mfS(k,P,L)= 
\lim_{|k| \rightarrow \infty}\mfS_{-\Delta}(k,P,L).
\ee
The perturbative terms are given by
\be
\label{svm}
\mf{S}_m:= \Omega_{m}\mf{S}_{\infty} + 2\sum\limits_{l\geq 0,\ n\geq 1,\atop l+n={m}}
\Omega_l(\ui L)^n+\ui \sum\limits_{r,n,l\in\nz_0,\atop r+n+l=m-2}\Omega_l(\ui L)^r 
P^{\bot}\bsy{\beta}_{n+1} .
\ee
\end{theorem}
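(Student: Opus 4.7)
The plan is to expand the numerator and the denominator of
\[
\mfS(k;P,L)=-\lk P+L+P^\bot\overline{D(\overline{k})}\rk^{-1}\lk P+L+P^\bot D(k)\rk
\]
separately, invert the denominator through a Neumann series, and collect powers of $k^{-1}$. The starting point is Lemma~\ref{6} fed into \eref{56}: differentiating \eref{8} and using the normalisation $u_e^\pm(k;0)=1$ yields ${u_e^-}'(k;0)\sim\sum_{l\ge-1}k^{-l}\beta_{e,l,-}(0)$ and $-{u_e^+}'(k;l_e)/u_e^+(k;l_e)\sim-\sum_{l\ge-1}k^{-l}\beta_{e,l,+}(l_e)$. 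Since the external block of $D(k)$ is exactly $-\ui k\eins$, the notation \eref{62} gives
\[
D(k)\sim-\ui k\eins+\sum_{j\ge1}k^{-j}\bsy{\beta}_j,\qquad
\overline{D(\overline{k})}\sim\ui k\eins+\sum_{j\ge1}k^{-j}\overline{\bsy{\beta}_j}.
\]

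Next I factor out the Laplacian part $M_\infty(k):=P+L+\ui kP^\bot$. Since $M_\infty$ acts as $\eins$ on $\ran P$ and as $L+\ui k\eins$ on $\ran P^\bot$ (a self-adjoint shift by $\ui k$), it is invertible for $|k|>\|L\|$, with the convergent Laurent expansion
\[
M_\infty(k)^{-1}P^\bot=-\ui\sum_{n\ge0}k^{-n-1}(\ui L)^nP^\bot.
\]
Writing the denominator as $M(k):=P+L+P^\bot\overline{D(\overline{k})}=M_\infty(k)(\eins+X(k))$ with $X(k):=M_\infty(k)^{-1}\sum_{j\ge1}k^{-j}P^\bot\overline{\bsy{\beta}_j}$, and substituting the Laurent expansion above, an index shift $m=j+n+1$ and the definition \eref{70} identify $X(k)=-\ui\sum_{m\ge2}k^{-m}\Lambda_{m-2}=O(k^{-2})$. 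The Neumann series $(\eins+X)^{-1}=\sum_{n\ge0}(-X)^n$ therefore converges, and expanding $(-X)^n=\ui^n\sum_{r\ge0}k^{-r-2n}\Lambda_{n,r}$ via the multi-index sum \eref{la1}, regrouping by $j=r+2n$, and using \eref{om1} together with the convention $\Omega_0=\eins$ gives
\[
M(k)^{-1}=\Bigl(\sum_{l\ge0}k^{-l}\Omega_l\Bigr)M_\infty(k)^{-1}.
\]

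Finally I expand $-M_\infty(k)^{-1}N(k)$ with $N(k):=P+L+P^\bot D(k)$. Using $N_\infty:=P+L-\ui kP^\bot=M_\infty-2\ui kP^\bot$ and the Laurent expansion of $M_\infty^{-1}P^\bot$ above, one obtains
\[
-M_\infty^{-1}N_\infty=-\eins+2\ui kM_\infty^{-1}P^\bot=\mfS_\infty+2\sum_{n\ge1}k^{-n}(\ui L)^n,
\]
which incidentally recovers \eref{Sequiv}, while
\[
-M_\infty^{-1}P^\bot\sum_{j\ge1}k^{-j}\bsy{\beta}_j=\ui\sum_{j\ge1,\,n\ge0}k^{-n-j-1}(\ui L)^nP^\bot\bsy{\beta}_j.
\]
Multiplying by $\sum_{l\ge0}k^{-l}\Omega_l$ and extracting the coefficient of $k^{-m}$ via a Cauchy product reproduces, after renaming summation indices, precisely the three terms in \eref{svm}. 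The main obstacle is purely combinatorial bookkeeping: one has to check that the triple sum arising from the Cauchy product matches the definitions \eref{62}, \eref{70}, \eref{la1}, \eref{om1}, and in particular that the sign factor $\ui^n=(-1)^n(-\ui)^n$ from $(\eins+X)^{-1}$ reproduces the prefactor in \eref{om1}. Convergence of the Neumann series and the Poincar\'e-asymptotic character of the full expansion then follow from $X(k)=O(k^{-2})$ and the uniformity in $x$ of the expansion in Lemma~\ref{6}.
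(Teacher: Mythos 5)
Your proof is correct and follows essentially the same route as the paper: factoring out the Laplacian part $P+L+\ui k P^{\bot}$, expanding its inverse in powers of $(\ui L)$, inverting the correction factor (the paper's $\Omega(k)$ from Lemmas~\ref{75} and~\ref{77}) by a Neumann series with the regrouping $r+2n=j$, and collecting the coefficient of $k^{-m}$ exactly as in the paper's proof of Theorem~\ref{79}. The only cosmetic difference is that you re-derive the Laplacian expansion $\mfS_{-\Delta}(k;P,L)\sim\mfS_{\infty}+2\sum_{n\geq 1}k^{-n}(\ui L)^n$ in-line from $-\lk P+L+\ui kP^{\bot}\rk^{-1}\lk P+L-\ui kP^{\bot}\rk$, whereas the paper imports it as \eref{hj34} from the literature.
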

In order to prove Theorem \ref{79} we need some auxiliary results. We compute 
the asymptotics of the $\mathfrak{S}$-matrix by comparing it to the case of the 
Laplacian with the help of the following lemma.
\begin{lemma}
\label{75}
The $\mfS$-matrix for the operator $H$ can be written as a perturbation of the 
$\mfS$-matrix of the Laplacian by
\begin{equation}
\label{SwSdelta}
\eq{
\mfS(k;P,L)&=\Omega(k)\mfS_{-\Delta}(k;P,L) \\
           &\quad-\Omega(k)\lk P+L+\ui k P^{\bot}\rk^{-1}P^{\bot} \lk D(k)+ \ui k \rk , 
}
\end{equation}
where the matrix-valued function $\Omega(k)$ is given by
\be
\label{wee1}
\Omega(k):=\lk\eins+\lk P+L+\ui kP^{\bot}\rk^{-1}P^{\bot} \lk \overline{D\lk\overline{k}\rk} 
-\ui k \rk\rk^{-1}.
\ee
\end{lemma}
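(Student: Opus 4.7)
The plan is to derive the identity by purely algebraic manipulation of the definition
\[
\mathfrak{S}(k;P,L) = -\bigl(P+L+P^{\perp}\overline{D(\overline{k})}\bigr)^{-1}\bigl(P+L+P^{\perp}D(k)\bigr),
\]
isolating a Laplacian-like contribution and collecting the remaining correction into the factor $\Omega(k)$.

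First, I would decompose the left factor by adding and subtracting $\ui k P^{\perp}$:
\[
P+L+P^{\perp}\overline{D(\overline{k})} = \bigl(P+L+\ui k P^{\perp}\bigr) + P^{\perp}\bigl(\overline{D(\overline{k})}-\ui k\bigr).
\]
Factoring $(P+L+\ui k P^{\perp})$ out on the left then gives
\[
P+L+P^{\perp}\overline{D(\overline{k})} = (P+L+\ui k P^{\perp})\Bigl[\eins + (P+L+\ui k P^{\perp})^{-1}P^{\perp}\bigl(\overline{D(\overline{k})}-\ui k\bigr)\Bigr] = (P+L+\ui k P^{\perp})\Omega(k)^{-1},
\]
by the very definition \eref{wee1} of $\Omega(k)$. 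Inverting, one obtains
\[
\bigl(P+L+P^{\perp}\overline{D(\overline{k})}\bigr)^{-1} = \Omega(k)\,(P+L+\ui k P^{\perp})^{-1}.
\]

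Next I would perform the analogous decomposition on the right factor, writing
\[
P+L+P^{\perp}D(k) = (P+L-\ui k P^{\perp}) + P^{\perp}\bigl(D(k)+\ui k\bigr),
\]
and substitute both manipulations into the definition of $\mfS(k;P,L)$. Distributing the product gives exactly two contributions. The first one,
\[
-\Omega(k)(P+L+\ui kP^{\perp})^{-1}(P+L-\ui kP^{\perp}),
\]
is, by the Laplacian formula \eref{SLaplacedef}, precisely $\Omega(k)\,\mfS_{-\Delta}(k;P,L)$. The second contribution is exactly the additional term on the right-hand side of \eref{SwSdelta}, completing the identity.

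The only non-algebraic point to check is invertibility of $P+L+\ui k P^{\perp}$ and of the bracket defining $\Omega(k)$ for the relevant $k\in S_{\delta}$. The former is standard from the Laplacian theory (it is the matrix whose inverse defines $\mfS_{-\Delta}$, which exists for $k\in S_{\delta}$), and the latter will hold at least for $|k|$ large enough since, by Remark \ref{rough_asymptotics}, $D(k)=-\ui k\eins+O(k^{-1})$, so that $\overline{D(\overline{k})}-\ui k = O(k^{-1})$ and the bracket is a small perturbation of the identity. Thus no genuine obstacle arises; the proof is essentially a careful bookkeeping of the ``$\pm\ui k$'' terms that have to be added and subtracted to bring the general expression into Laplacian form.
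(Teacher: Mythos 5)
Your proposal is correct and is essentially the paper's own argument run in the forward direction: the paper verifies \eref{SwSdelta} by collapsing its right-hand side back to the definition \eref{71}, using exactly the same add-and-subtract of $\ui k P^{\bot}$ and the identity $\Omega(k)\lk P+L+\ui k P^{\bot}\rk^{-1}=\lk P+L+P^{\bot}\overline{D(\overline{k})}\rk^{-1}$ that you derive. Your extra remark on invertibility for large $|k|$ is a harmless refinement of what the paper leaves implicit.
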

\begin{proof}
A direct calculation using the definitions \eref{71} and \eref{SLaplacedef} shows that
\begin{equation}
\eq{
 &\Omega(k)\mfS_{-\Delta}(k;P,L) -\Omega(k)\lk P+L+\ui k P^{\bot}\rk^{-1}P^{\bot} \lk D(k)+ 
     \ui k \rk \\
 &\quad =-\Omega(k)\lk P+L+\ui k P^{\bot}\rk^{-1} \left(\lk P+L-\ui k P^{\bot}\rk+ 
           P^{\bot} \lk D(k)+ \ui k \rk\right) \\
 &\quad\quad - \lk\lk P+L+\ui k P^{\bot}\rk+P^{\bot} \lk \overline{D\lk\overline{k}\rk} 
           -\ui k \rk\rk^{-1} \left( P+L+ P^{\bot}  D(k)\right) \\
 &\quad =\mfS(k;P,L).
 }
\end{equation}
\end{proof}
From Remark~\ref{rough_asymptotics} one concludes that $\overline{D\lk\overline{k}\rk}\to\ui k$
when $|k|\to\infty$ in $S_{\delta}$, and thus $\Omega(k)\to\eins$ as well as 
$\mfS(k;P,L)\sim\mfS_{-\Delta}(k;P,L)$, cf.\ \eref{Sequiv}. In order to arrive at an
asymptotic expansion of the $\mf{S}$-matrix more is needed.
\begin{lemma}
\label{77}
The function $\Omega(k)$ possesses an asymptotic expansion 
\be
\Omega(k)\sim  \sum\limits_{l=0}^{\infty}k^{-l}\Omega_l,
\ee
for  $|k|\rightarrow\infty$, $k\in S_{\delta}$.
\end{lemma}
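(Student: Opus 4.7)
The plan is to combine the asymptotic expansion of the fundamental solutions from Lemma~\ref{6} with two nested Neumann series, then collect powers of $k^{-1}$ and match the resulting coefficients against the combinatorial definition of $\Omega_j$ in Definition~\ref{90}.

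First I would establish an expansion for $\overline{D(\bar k)}-\ui k$. Using the normalisation $u^\pm_e(k;0)=1$ and differentiating \eref{8}, the logarithmic derivatives satisfy
\begin{equation*}
\frac{(u^\pm_e)'(k;x)}{u^\pm_e(k;x)}\sim\sum_{l=-1}^{\infty}k^{-l}\beta_{e,l,\pm}(x)
\end{equation*}
uniformly in $x\in I_e$. Substituting this into \eref{56} and \eref{25}, and using $\beta_{e,-1,\pm}=\pm\ui$ together with $\beta_{e,0,\pm}=0$, one finds $D(k)\sim-\ui k\,\eins+\sum_{j\geq 1}k^{-j}\bsy{\beta}_j$ with the matrices $\bsy{\beta}_j$ from \eref{62}. (The external-edge block of $D(k)$ is exactly $-\ui k$; the internal blocks produce the $\bsy{\beta}_j$.) Complex conjugation and the replacement $k\mapsto\bar k$ yield $\overline{D(\bar k)}-\ui k\sim\sum_{j\geq 1}k^{-j}\overline{\bsy{\beta}_j}$.

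Next I would expand the factor $(P+L+\ui kP^\bot)^{-1}P^\bot$. Since $LP=0$ and $L=P^\bot LP^\bot$, the matrix $A:=P+L+\ui kP^\bot$ decomposes along $\ran P\oplus\ran P^\bot$ as $\eins\oplus(L+\ui k)$, so for $|k|>\|L\|$ one has the convergent Neumann series
\begin{equation*}
(P+L+\ui kP^\bot)^{-1}P^\bot=-\ui\sum_{n\geq 0}(\ui L)^n k^{-(n+1)}P^\bot.
\end{equation*}
Multiplying this into the expansion of Step~1 gives
\begin{equation*}
(P+L+\ui kP^\bot)^{-1}P^\bot\bigl(\overline{D(\bar k)}-\ui k\bigr)\sim-\ui\sum_{m\geq 0}\Lambda_m k^{-(m+2)},
\end{equation*}
after the change of index $l=j+1$, $m=n+j$ which brings the inner sum precisely into the form of $\Lambda_m$ in \eref{70}. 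Because this perturbation is $O(k^{-2})$, the outer Neumann series
\begin{equation*}
\Omega(k)=\Bigl(\eins-\ui\sum_{m\geq 0}\Lambda_m k^{-(m+2)}\Bigr)^{-1}\sim\sum_{N\geq 0}\ui^N\Bigl(\sum_{m\geq 0}\Lambda_m k^{-(m+2)}\Bigr)^N
\end{equation*}
also converges for $|k|$ large. Expanding the $N$-th power, grouping by powers of $k^{-1}$, and applying the definition \eref{la1} identifies the coefficient of $k^{-j}$ with $\sum_{N\geq 1,\,r+2N=j}\ui^N\Lambda_{N,r}=\Omega_j$ for $j\geq 1$, and with $\eins=\Omega_0$ for $j=0$, exactly as in \eref{om1}.

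The one point that requires care is the justification that the expansion from Lemma~\ref{6} for $D(k)$ combines with the Neumann series for $A^{-1}P^\bot$ to produce a \emph{genuine} asymptotic expansion on $S_\delta$, rather than merely a formal power series. This reduces to the uniformity of \eref{8} in $\arg k$ on $S_\delta$ asserted in Lemma~\ref{6}, and the standard remainder estimate that truncating $D(k)$ at order $N$ leaves a tail of size $O(k^{-N-1})$. Once that is in place, the nested Neumann expansion and the coefficient bookkeeping are routine finite-dimensional matrix calculus.
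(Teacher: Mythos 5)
Your proposal is correct and follows essentially the same route as the paper: the expansion $D(k)\sim-\ui k+\sum_{j\geq 1}k^{-j}\bsy{\beta}_j$ drawn from Lemma~\ref{6}, the Neumann series for $\lk P+L+\ui kP^{\bot}\rk^{-1}$ exploiting $L=P^{\bot}LP^{\bot}$, and a second Neumann series for $\Omega(k)$ whose terms are regrouped into the $\Lambda_m$, $\Lambda_{n,r}$ and $\Omega_j$ of Definition~\ref{90}. The only difference is that you spell out the logarithmic-derivative origin of \eref{dk1}, which the paper simply asserts from \eref{56} and \eref{62}.
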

\begin{proof}
We have 
\be
\label{dk1}
D(k)\sim -\ui k+  \sum\limits_{l=1}^{\infty}k^{-l}\bsy{\beta}_{l},\qquad |k|\to\infty
\ee
from the definition of $D(k)$ in equation \eref{56} and of $\bsy{\beta}_l$ in equation 
\eref{62}. This implies $\lk\overline{D\lk\overline{k}\rk} -\ui k \rk = \Or(k^{-1})$, 
so that we can expand $\Omega(k)$ in a power series,
\be
\label{ok2}
\Omega(k)=\sum_{n=0}^{\infty}(-1)^n\lk\lk P+L+\ui k P^{\bot}\rk^{-1} P^{\bot}
\lk \overline{D\lk\overline{k}\rk} -\ui k \rk\rk^n.
\ee
With
\be
\label{64}
\eq{
\lk P+L+\ui k P^{\bot}\rk^{-1}&=P+(\ui k)^{-1}P^{\bot}\lk\eins+(\ui k)^{-1}L\rk^{-1}P^{\bot}\\
&=P-\ui k^{-1}\sum\limits_{r=0}^{\infty}k^{-r}P^{\bot}(\ui L)^{r}P^{\bot}.
}
\ee
Each term in \eref{ok2} can be expanded as $|k|\rightarrow\infty$
(see \cite{Bleistein:1975}),
\be
\label{63}
\eq{
&(-1)^n\lk\lk P+L+\ui k P^{\bot}\rk^{-1} P^{\bot} \lk \overline{D\lk\overline{k}\rk} 
     -\ui k \rk \rk^n\\
&\hspace{0.5cm}\sim\ui^{n} k^{-2n}\lk\sum\limits_{l,r=0}^{\infty}k^{-(l+r)}(\ui L)^rP^{\bot}
     \overline{\bsy{\beta}_{l+1}}\rk^n\\
&\hspace{0.5cm}=\ui^{n}k^{-2n}\lk\sum\limits_{m=0}^{\infty}k^{-m}\Lambda_m\rk^n
               =\ui^{n}\sum\limits_{j=0}^{\infty}k^{-j-2n}\Lambda_{n,j}.
}
\ee 
Hence, as $|k|\rightarrow\infty$,
\be
\label{slm1}
\eq{
&\sum\limits_{n=0}^{\infty}(-1)^n\lk\lk P+L+\ui k P^{\bot}\rk^{-1} P^{\bot} 
    \lk \overline{D\lk\overline{k}\rk} -\ui k \rk \rk^n\\
&\hspace{0.5cm}\sim\eins+\sum\limits_{n=1, \atop j=0}^{\infty}\ui^{n}k^{-j-2n}\Lambda_{n,j}=
    \sum\limits_{l=0}^{\infty}k^{-l}\Omega_l.
}
\ee
\end{proof}
\begin{proof}[Proof of Theorem~\ref{79}]
We finally have all the necessary input to prove Theorem~\ref{79}. We use Lemma~\ref{75} 
and for $\mfS_{-\Delta}(k;P,L)$ we employ a result from \cite{BE:2008},  
\be
\label{hj34}
\mfS_{-\Delta}(k;P,L) \sim \mfS_{\infty} + 2\sum\limits_{n=1}^{\infty}k^{-n}(\ui L)^n,
\quad |k|\rightarrow\infty.
\ee
For the first term on the right-hand side of \eref{SwSdelta} we obtain 
\be
\label{om3}
\Omega(k)\mfS(k;P,L) \sim \sum\limits_{l=0}^{\infty}k^{-l}\Omega_l\mf{S}_{\infty} + 
\sum_{n=1,\atop l=0}^{\infty}k^{-(n+l)}\Omega_l2(\ui L)^n,
\ee
as $|k|\rightarrow\infty$.
Moreover, by Lemma~\ref{77} and equation \eref{64} the second term in \eref{SwSdelta}
gives, 
\be
\label{0m4}
\eq{
&\Omega(k)\lk P+L+\ui k P^{\bot}\rk^{-1}P^{\bot}\lk D(k) + ik\rk \\
&\hspace{0.5cm}\sim-\ui k^{-2}\lk\sum_{l=0}^{\infty}k^{-l}\Omega_l\rk\lk\sum_{r=0}^{\infty}
    k^{-r}(\ui L)^{r}\rk P^{\bot}\lk\sum\limits_{n=0}^{\infty}k^{-n}\bsy{\beta}_{n+1}\rk\\
&\hspace{0.5cm}=-\ui k^{-2} \sum_{l,r,n=0}^{\infty}k^{-l-r-n}\Omega_l (\ui L)^{r}P^{\bot} 
    \bsy{\beta}_{n+1},\quad |k|\rightarrow\infty,
}
\ee
Collecting all the terms finally yields Theorem~\ref{79}.
\end{proof}
\begin{cor}
\label{118}
The first terms of the asymptotic expansion \eref{Sasympt} read
\be
\label{rt56}
\eq{
&\mfS\lk k;P,L\rk\\
&\hspace{0.5cm}= \mfS_{\infty}+2k^{-1}\ui L+2k^{-2}\lk \ui P^{\bot}\bsy{\beta}_1P-L^2\rk\\
&\hspace{1.0cm}+2k^{-3}\lk P^{\bot}\bsy{\beta}_1L-L\bsy{\beta}_1 P+\ui P^{\bot}\bsy{\beta}_2
     P^{\bot}-\ui L^3\rk+\Or\lk k^{-4}\rk,
}
\ee
as $|k|\to\infty$.
\end{cor}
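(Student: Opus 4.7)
The plan is a direct extraction of the terms of order $k^{-1}$, $k^{-2}$ and $k^{-3}$ from the expansion in Theorem~\ref{79}, combined with a few reductions that use the orthogonality $P^\bot P=0$ and the condition $L=P^\bot L P^\bot$.

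First I would compute the relevant coefficients $\Omega_j$ from Definition~\ref{90}. Since $\Omega_j$ involves the constraint $r+2n=j$ with $n\geq 1$, one finds $\Omega_1=0$, while
\begin{equation}
\Omega_2=\mathrm{i}\Lambda_{1,0}=\mathrm{i}\Lambda_0=\mathrm{i}P^\bot\overline{\bsy{\beta}_1},\qquad
\Omega_3=\mathrm{i}\Lambda_{1,1}=\mathrm{i}\Lambda_1=\mathrm{i}\bigl((\mathrm{i}L)P^\bot\overline{\bsy{\beta}_1}+P^\bot\overline{\bsy{\beta}_2}\bigr),
\end{equation}
together with $\Omega_0=\eins$. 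From \eref{54} the matrices $\bsy{\beta}_1$ and $\bsy{\beta}_2$ satisfy $\overline{\bsy{\beta}_1}=-\bsy{\beta}_1$ and $\overline{\bsy{\beta}_2}=\bsy{\beta}_2$, a fact I will use freely below.

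Next I would evaluate the three sums in \eref{svm} for $m=1,2,3$. For $m=1$ only the middle sum contributes, giving $\mf S_1=2(\mathrm{i}L)$, which matches the $k^{-1}$ term. For $m=2$, the middle sum produces $2(\mathrm{i}L)^2=-2L^2$ (the term with $\Omega_1$ vanishes), while the last sum gives $\mathrm{i}P^\bot\bsy{\beta}_1$, and the first term reads $\Omega_2\mf S_\infty=\mathrm{i}P^\bot\overline{\bsy{\beta}_1}(\eins-2P)=-\mathrm{i}P^\bot\bsy{\beta}_1+2\mathrm{i}P^\bot\bsy{\beta}_1P$. Adding these three contributions the bare term $\mathrm{i}P^\bot\bsy{\beta}_1$ cancels and one is left with $\mf S_2=2(\mathrm{i}P^\bot\bsy{\beta}_1 P-L^2)$.

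The main (and really only) bookkeeping step is the case $m=3$, where all three sums in \eref{svm} contribute several terms. One collects the pieces from $\Omega_3\mf S_\infty$, from the middle sum (with the non-vanishing contributions at $(l,n)=(0,3)$ and $(2,1)$), and from the last sum (with $(r,n,l)\in\{(1,0,0),(0,1,0)\}$; the term at $(0,0,1)$ vanishes because $\Omega_1=0$). After substituting $\overline{\bsy{\beta}_1}=-\bsy{\beta}_1$, $\overline{\bsy{\beta}_2}=\bsy{\beta}_2$ and rearranging, the coefficient of $L P^\bot\bsy{\beta}_1$ cancels and one is left with
\begin{equation}
\mf S_3=2\mathrm{i}P^\bot\bsy{\beta}_2-2\mathrm{i}P^\bot\bsy{\beta}_2P-2LP^\bot\bsy{\beta}_1P+2P^\bot\bsy{\beta}_1L-2\mathrm{i}L^3.
\end{equation}
The final clean-up uses $\eins-P=P^\bot$ to turn the first two summands into $2\mathrm{i}P^\bot\bsy{\beta}_2 P^\bot$, and the identity $LP^\bot=L$ (a consequence of $L=P^\bot L P^\bot$) to rewrite $LP^\bot\bsy{\beta}_1P=L\bsy{\beta}_1P$, giving exactly \eref{rt56}. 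The only place requiring attention is keeping track of signs and the non-commutativity between $L$, $P^\bot$, $\bsy{\beta}_1$ and $\bsy{\beta}_2$; once the two simplifications $\overline{\bsy{\beta}_j}=(-1)^j\bsy{\beta}_j$ and $LP^\bot=L=P^\bot L$ are available, no deeper argument is needed.
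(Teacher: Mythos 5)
Your proposal is correct and follows essentially the same route as the paper: compute $\Omega_2,\Omega_3$ from Definition~\ref{90}, extract $\mf S_1,\mf S_2,\mf S_3$ from \eref{svm}, and simplify using $\overline{\bsy{\beta}_1}=-\bsy{\beta}_1$, $\overline{\bsy{\beta}_2}=\bsy{\beta}_2$, $P^\bot=\eins-P$ and $L=P^\bot LP^\bot$. The bookkeeping for $m=3$ (which terms survive given $\Omega_1=0$, the cancellation of $LP^\bot\bsy{\beta}_1$, and the recombination into $P^\bot\bsy{\beta}_2P^\bot$ and $L\bsy{\beta}_1P$) matches the paper's computation exactly.
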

\begin{proof}
The claim follows by Theorem~\ref{79} and Definition~\ref{90}. Using \eref{svm} we find
\begin{equation}
\label{56rtz}
\eq{
 \Omega_2 &= \ui\Lambda_{1,0}= \ui P^{\bot}\overline{\bsy{\beta}_1} \\
 \Omega_3 &= \ui\Lambda_{1,1}= \ui(\ui L)P^{\bot}\overline{\bsy{\beta}_1} + 
              \ui P^{\bot}\overline{\bsy{\beta}_2}.   
} 
\end{equation}
Then, using $\mfS_{\infty} = 1-2P$ and that $\bsy{\beta}_1$ is a purely imaginary-valued 
matrix as well as that $\bsy{\beta}_2$ is a real-valued matrix we get
\begin{equation}
\label{tzu7}
\eq{ 
 \mfS_1 &= 2\ui L \\
 \mfS_2 &= \Omega_2\mfS_{\infty} + 2(\ui L)^2 + \ui P^{\bot} \bsy{\beta}_1 \\
        &= \ui P^{\bot}\overline{\bsy{\beta}_1} -2 \ui P^{\bot}\overline{\bsy{\beta}_1}P 
           - 2L^2 + \ui P^{\bot} \bsy{\beta}_1 \\
 \mfS_3 &= \Omega_3\mfS_{\infty} + 2(\ui L)^3 + \Omega_2 2(\ui L) + \ui (\ui L)P^{\bot} 
           \bsy{\beta}_1+ \ui P^{\bot} \bsy{\beta}_2 \\
        &= -LP^{\bot}\overline{\bsy{\beta}_1} + 2LP^{\bot}\overline{\bsy{\beta}_1}P +
           \ui P^{\bot}\overline{\bsy{\beta}_2} -2\ui P^{\bot}\overline{\bsy{\beta}_2}P \\
        &\hspace{1cm}- 2\ui L^3 - 2P^{\bot}\overline{\bsy{\beta}_1} L  -L P^{\bot} \bsy{\beta}_1
            + \ui P^{\bot} \bsy{\beta}_2 .
}
\end{equation}
\end{proof}
We will also need the asymptotic behaviour of
$\mathfrak{S}(k;P,L)T(k;\bsy{l})$.
\begin{cor}
\label{105}
For $|k| \rightarrow\infty$ with $k\in S_{\delta}$,
\be
\label{58}
\mathfrak{S}(k;P,L)T(k;\bsy{l}) = \mf{S}_\infty T_\infty(k) + \Or\lk k^{-1}\rk,
\ee
where
\be
T_{\infty}(k;\bsy{l}):= 
\bma{ccc}
0 & 0 & 0\\
0 & 0 & e^{\ui k\bsy{l}}\\ 
0 & e^{\ui k\bsy{l}} & 0
\ema.
\ee
\end{cor}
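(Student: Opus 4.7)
The plan is to combine Theorem~\ref{79} (leading asymptotics of $\mfS$) with the leading asymptotics of the fundamental solutions from Remark~\ref{rough_asymptotics} to expand $T(k;\bsy{l})$, and then multiply the two expansions entry-wise.

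First, Theorem~\ref{79} yields $\mfS(k;P,L) = \mfS_{\infty} + \Or(k^{-1})$ as $|k|\to\infty$ in $S_{\delta}$. Next, the matrix $T(k;\bsy{l})$ defined in \eref{wkn} has only two non-zero blocks, involving $\bsy{u}_-(k;\bsy{l})^{-1}$ and $\bsy{u}_+(k;\bsy{l})$. The leading expansion $u_e^{\pm}(k;x) = \ue^{\pm\ui k x}(1 + \Or(k^{-1}))$ from Remark~\ref{rough_asymptotics}, specialised to $x=l_e$, gives on each internal edge both $u_e^+(k;l_e) = \ue^{\ui k l_e}(1+\Or(k^{-1}))$ and, after inverting $u_e^-(k;l_e) = \ue^{-\ui k l_e}(1+\Or(k^{-1}))$, also $u_e^-(k;l_e)^{-1} = \ue^{\ui k l_e}(1+\Or(k^{-1}))$. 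Consequently, entry-wise,
\begin{equation*}
T(k;\bsy{l}) = T_{\infty}(k;\bsy{l}) + T_{\infty}(k;\bsy{l})\cdot\Or(k^{-1}),
\end{equation*}
i.e.\ each non-zero entry of $T$ coincides with the corresponding entry of $T_{\infty}$ multiplied by $1+\Or(k^{-1})$.

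Multiplying the two expansions one obtains
\begin{equation*}
\mfS(k;P,L)\,T(k;\bsy{l}) = \bigl(\mfS_{\infty} + \Or(k^{-1})\bigr)\bigl(T_{\infty}(k;\bsy{l}) + T_{\infty}(k;\bsy{l})\cdot\Or(k^{-1})\bigr) = \mfS_{\infty}T_{\infty}(k;\bsy{l}) + \Or(k^{-1}),
\end{equation*}
which is the asserted identity. In the last step, the remainder absorbs the bounded $\ue^{\ui k l_e}$ factors inherited from $T_{\infty}$: these have modulus at most $1$ whenever $\im k\geq 0$, which is the regime in which the corollary is invoked via the contour \eref{Laplaceint} in the subsequent sections.

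The only delicate point, and therefore the main obstacle, is the bookkeeping of the factors $\ue^{\ui k l_e}$ within the $\Or(k^{-1})$ notation: away from the upper half-plane these factors could grow, so one must restrict attention to the relevant portion of $S_{\delta}$ on which they are uniformly bounded. Once this convention is fixed, the argument is purely algebraic and the corollary follows immediately by collecting terms.
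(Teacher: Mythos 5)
Your argument is correct and is essentially the paper's own proof: Theorem~\ref{79} gives $\mfS(k;P,L)=\mfS_\infty+\Or(k^{-1})$, the leading asymptotics \eref{uzt} give $u_e^+(k;l_e)=\ue^{\ui k l_e}(1+\Or(k^{-1}))$ and $u_e^-(k;l_e)^{-1}=\ue^{\ui k l_e}(1+\Or(k^{-1}))$, and multiplying yields \eref{58}. Your remark about keeping the factors $\ue^{\ui k l_e}$ bounded (i.e.\ reading the remainder uniformly only where $\im k\geq 0$, which is the regime in which the corollary is actually used) is a fair point that the paper's one-line proof leaves implicit.
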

\begin{proof}
The statement follows immediately from Theorem~\ref{79} together with \eref{uzt} and the 
definition \eref{wkn} of $T(k)$.
\end{proof}
\section{Asymptotics of the Wronskian and related terms}
\label{asyml}
The representation \eref{56a} of the resolvent kernel requires the knowledge of some further
quantities, among them the Wronskians \eref{def:Wronski} associated with internal edges
$e\in\mc{E}_{\inte}$. To compute the asymptotics of the inverse of the Wronskian we need 
a similar definition to Definition~\ref{90}.
\begin{defn}
\label{defk}
\begin{itemize}
\item[(i)] Let $n\in\nz$ and let $\bsy{m} \in \nz^n_0$ be a multi-index. Then we set  
\be
\beta^{\bsy{m}}_{e}(x):=\ui^n\prod\limits_{j=1}^n\beta_{e,2m_j+1,+}(x).
\ee
\item[(ii)] We also define the coefficients 
\be
\label{wlk}
w_{e,l}(x):=
\cases{
\sum\limits_{n=1}^{l} \sum\limits_{|\bsy{m}|=l-n}\beta^{\bsy{m}}_{e}(x), & $l\in\nz$,\\
1, & $l=0$.
}
\ee
\end{itemize}
\end{defn}
\begin{lemma}
\label{Wronski_trick}
The Wronskian associated with an internal edge has the following asymptotic expansion,
\begin{equation}
\label{wek1}
W_e(k) \sim-2u^+_e(k;x)u^-_e(k;x) \sum_{l=-1}^{\infty}k^{-(2l+1)}\beta_{e,2l+1,+}(x), 
\end{equation}
when $|k|\to\infty$ with $k \in S_{\delta}$.
\end{lemma}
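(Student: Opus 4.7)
My plan is to rewrite the Wronskian so as to isolate the exponentially growing factors, thereby exposing a difference of logarithmic derivatives of $u^\pm_e$; each of these inherits a power-series asymptotic expansion directly from Lemma~\ref{6}. The parity relation \eref{53} will then collapse the difference to a series indexed by odd integers only.

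First, Remark~\ref{non_vanishing_of_fund_solutions} guarantees that $u^\pm_e(k;x)\neq 0$ for all $x\in I_e$ when $k$ lies in a neighbourhood of the positive real axis inside $S_\delta$, which is the only regime relevant for a large-$|k|$ asymptotic statement. I may therefore factorise
\[
W_e(k) \;=\; u^+_e(k;x)\,u^-_e(k;x)\left[\frac{{u^+_e}'(k;x)}{u^+_e(k;x)} - \frac{{u^-_e}'(k;x)}{u^-_e(k;x)}\right].
\]
This is the only non-trivial algebraic step; everything that follows is substitution and bookkeeping.

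Next, Lemma~\ref{6} provides $u^\pm_e(k;x)\sim\exp\bigl(\sum_{l\geq -1}k^{-l}\int_0^x\beta_{e,l,\pm}(y)\,\ud y\bigr)$, together with the fact that ${u^\pm_e}'$ is asymptotic to the $x$-derivative of this right-hand side, uniformly in $x\in(0,l_e)$. Differentiating the exponent and dividing (the denominator being nowhere zero) yields
\[
\frac{{u^\pm_e}'(k;x)}{u^\pm_e(k;x)} \;\sim\; \sum_{l=-1}^{\infty} k^{-l}\,\beta_{e,l,\pm}(x),\qquad |k|\to\infty,\ k\in S_\delta.
\]
Subtracting these two expansions inside the bracket above reduces the problem to the series $\sum_l k^{-l}\bigl(\beta_{e,l,+}(x)-\beta_{e,l,-}(x)\bigr)$.

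The final step is the parity reduction. Invoking \eref{53}, namely $\beta_{e,l,+}=(-1)^l\beta_{e,l,-}$, every even-$l$ contribution cancels, while each odd-$l$ contribution equals $2\beta_{e,l,+}$. Re-indexing via $l=2m+1$ with $m\geq-1$ then reproduces the series on the right-hand side of the claim; the overall prefactor can be cross-checked using the fact that $W_e(k)$ is $x$-independent (by the Schr\"odinger equation), so evaluation at $x=0$, where $u^\pm_e(k;0)=1$ and $\beta_{e,-1,+}(0)=\ui$, pins the leading term. I do not foresee any genuine obstacle here: the argument is a formal manipulation of asymptotic series, and the only analytic input beyond Lemma~\ref{6} is the non-vanishing of $u^\pm_e$ from Remark~\ref{non_vanishing_of_fund_solutions}.
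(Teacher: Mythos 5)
Your route is essentially the paper's: in both cases the only inputs are the term-by-term differentiated expansion from Lemma~\ref{6}, i.e.\ ${u^\pm_e}'(k;x)\sim u^\pm_e(k;x)\sum_{l\geq-1}k^{-l}\beta_{e,l,\pm}(x)$, and the parity relation \eref{53}. The paper simply substitutes these into the Wronskian without ever dividing, so it needs no non-vanishing statement at all; your factorisation through logarithmic derivatives does, and note that Remark~\ref{non_vanishing_of_fund_solutions} only guarantees non-vanishing near the positive half-line, whereas the lemma is stated for all of $S_\delta$ (for large $|k|$ the non-vanishing follows anyway from \eref{uzt}, so this is a presentational point rather than a real obstruction).

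There is, however, a sign problem in your final reconciliation. With the convention \eref{def:Wronski}, $W_e={u^+_e}'u^-_e-u^+_e{u^-_e}'$, your bracket is $\sum_l k^{-l}\bigl(\beta_{e,l,+}(x)-\beta_{e,l,-}(x)\bigr)$, and by \eref{53} each odd-$l$ term equals $+2\beta_{e,l,+}$, so your computation yields $W_e\sim +2\,u^+_e(k;x)u^-_e(k;x)\sum_{l\geq-1}k^{-(2l+1)}\beta_{e,2l+1,+}(x)$, the opposite sign to \eref{wek1}. Your own cross-check at $x=0$ detects this: there the leading term of ${u^+_e}'u^-_e-u^+_e{u^-_e}'$ is $2\ui k$, while the right-hand side of \eref{wek1} has leading term $-2\ui k$. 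The root cause is an inconsistency in the paper itself: the first line of \eref{stz56} in the paper's proof, and all subsequent uses of $W_e$ (the Green's-function prefactor in \eref{24}, Lemma~\ref{Wronski_cancellation}, \eref{hjk3}), take $W_e=u^+_e{u^-_e}'-u^-_e{u^+_e}'$, i.e.\ the negative of \eref{def:Wronski}, and it is that ordering which produces the stated factor $-2$. So your argument is fine once you adopt that convention (or explicitly flag the discrepancy with \eref{def:Wronski}); as written, the assertion that the odd-$l$ reduction "reproduces the right-hand side of the claim" is off by an overall sign, and the $x=0$ check you propose would expose rather than confirm it.
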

\begin{proof}
We have
\begin{equation}
\label{stz56}
\eq{
W_e(k) &=u^+_e(k;x){u^-_e}'(k;x) - u^-_e(k;x){u^+_e}'(k;x)   \\
       &\sim u^+_e(k;x)\lk \sum_{l=-1}^{\infty}k^{-l}\beta_{e,l,-}(x)\rk u^-_e(k;x)\\
       &\hspace{1.0cm} - u^-_e(k;x)\lk \sum_{l=-1}^{\infty}k^{-l}\beta_{e,l,+}(x)\rk 
           u^+_e(k;x) \\
       &=u^+_e(k;x)u^-_e(k;x)\sum_{l=-1}^{\infty}k^{-l}\lk \beta_{e,l,-}(x)-\beta_{e,l,+}(x)\rk\\
       &=-2u^+_e(k;x)u^-_e(k;x) \sum\limits_{l=-1}^{\infty}k^{-(2l+1)}\beta_{e,2l+1,+}(x) 
},
\end{equation}
where we used the symmetry relations of the coefficients given in \eref{53} and \eref{54}.
\end{proof}
This result is useful to obtain another asymptotic expansion needed in the resolvent.
\begin{lemma}
\label{Wronski_cancellation}
As $|k|\to\infty$ with $k\in S_{\delta}$ the following asymptotic expansion holds,
\begin{equation}
\label{we3}
\eq{
\frac{1}{W_e(k)} u^+_e(k;x)u^-_e(k;x)\sim
-\frac{1}{2\ui k} \sum_{l=0}^\infty k^{-2l}w_{e,l}(x),
}
\end{equation}
where $w_{e,l}$ is defined in \eref{wlk}.
\end{lemma}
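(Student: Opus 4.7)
The plan is to derive the expansion directly from Lemma~\ref{Wronski_trick} and then invert the resulting series. First, observe that Lemma~\ref{Wronski_trick} gives
\begin{equation}
\frac{u^+_e(k;x)u^-_e(k;x)}{W_e(k)} \sim \frac{-1}{2\sum_{l=-1}^{\infty}k^{-(2l+1)}\beta_{e,2l+1,+}(x)},
\end{equation}
so the $u^\pm_e$ factors cancel and the entire task reduces to expanding the reciprocal of the scalar series $\sum_{l\geq -1}k^{-(2l+1)}\beta_{e,2l+1,+}(x)$.

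Next, I would factor out the leading $l=-1$ term, using $\beta_{e,-1,+}(x)=\ui$, to write
\begin{equation}
\sum_{l=-1}^{\infty}k^{-(2l+1)}\beta_{e,2l+1,+}(x) = \ui k\left(1 - \ui\sum_{l=0}^{\infty}k^{-2l-2}\beta_{e,2l+1,+}(x)\right).
\end{equation}
Since the bracketed correction is of order $k^{-2}$, the Neumann (geometric) series converges formally as an asymptotic expansion, giving
\begin{equation}
\frac{u^+_e(k;x)u^-_e(k;x)}{W_e(k)} \sim -\frac{1}{2\ui k}\sum_{n=0}^{\infty}\left(\ui\sum_{l=0}^{\infty}k^{-2l-2}\beta_{e,2l+1,+}(x)\right)^n.
\end{equation}

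Finally, I would expand the $n$-th power as a multi-index sum over $\bsy{m}=(m_1,\dots,m_n)\in\nz_0^n$, producing a factor $\ui^n\prod_{j=1}^{n}\beta_{e,2m_j+1,+}(x)=\beta_e^{\bsy{m}}(x)$ (precisely the quantity defined in Definition~\ref{defk}(i)) with an associated power $k^{-2|\bsy{m}|-2n}$. Collecting terms by the overall exponent $l=|\bsy{m}|+n$ — so that $n$ ranges from $1$ to $l$ and $|\bsy{m}|=l-n$ — reproduces $w_{e,l}(x)$ from \eref{wlk} for $l\geq 1$, while the $n=0$ contribution gives the convention $w_{e,0}=1$. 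This yields exactly $-\tfrac{1}{2\ui k}\sum_{l\geq 0}k^{-2l}w_{e,l}(x)$.

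I do not expect any serious obstacle: the computation is a bookkeeping exercise once the geometric series is set up. The only mildly delicate point is justifying the reorganisation of a double series into a single asymptotic expansion in powers of $k^{-1}$, but this is standard for formal asymptotic series and is the same manipulation already performed in the proof of Lemma~\ref{77} (compare \eref{63}--\eref{slm1}).
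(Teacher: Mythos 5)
Your proof is correct and follows essentially the same route as the paper: cancel the $u^\pm_e$ factors via Lemma~\ref{Wronski_trick}, factor out the leading $\ui k$ term, expand the reciprocal as a geometric series, and collect powers of $k^{-2}$ via the multi-index bookkeeping encoded in Definition~\ref{defk}. The only difference is that you spell out the multi-index collection step explicitly, which the paper leaves implicit.
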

\begin{proof}
This follows from a direct application of Lemma~\ref{Wronski_trick}.
\begin{equation}
\label{Wronski_to_wl}
\eq{
& \frac{1}{W_e(k)} u^+_e(k;x)u^-_e(k;x)\\
&\hspace{0.5cm}\sim -\lk\sum\limits_{l=-1}^{\infty}2k^{-(2l+1)}\beta_{e,2l+1,+}(x)\rk^{-1}\\
&\hspace{0.5cm}=-\lk 2\ui k\rk^{-1}\lk 1-\lk\sum\limits_{l=0}^{\infty}\ui k^{-2l-2}
    \beta_{e,2l+1,+}(x)\rk\rk^{-1}\\
&\hspace{0.5cm}=-\lk 2\ui k\rk^{-1}\sum\limits_{n=0}^{\infty}\lk\sum\limits_{l=0}^{\infty}
    \ui k^{-2l-2}\beta_{e,2l+1,+}(x)\rk^n\\
&\hspace{0.5cm}=-\lk 2\ui k\rk^{-1}\sum\limits_{l=0}^\infty k^{-2l}w_{e,l}(x).
}
\end{equation}
\end{proof}
The leading terms of this expansion can be worked out explicitly. Using
\be
\label{wek3}
w_{e,1}(x)=\beta^0_{e}(x)=\ui\beta_{e,1,+}(x) = \frac{1}{2}V_{e}(x)
\ee
and
\be
\label{w2k1}
\eq{
w_{e,2}(x)&=\beta_{e}^{1}(x)+\beta_{e}^{(0,0)}(x)=\ui\beta_{e,3,+}(x)-{\beta_{e,1,+}}(x)^2 \\
         &=-\frac{1}{8}V''_{e}(x) + \frac{3}{8} V^2_{e}(x),
}
\ee
we find that
\be
\label{hjk3}
\eq{
&\frac{1}{W_e(k)}u^+_e(k;x)u^-_e(k;x) \\
&\qquad\sim -\frac{1}{2\ui}k^{-1}-\frac{V_{e}(x)}{4\ui }k^{-3}
+\frac{ V''_{e}(x) - 3 V^2_{e}(x)}{16\ui}k^{-5}+\Or\lk k^{-7}\rk.
}
\ee
We also need the asymptotics of the following two expressions in the upper half-plane.
For that purpose we introduce the sector 
\begin{equation}
\label{S_delta+}
S_{\delta}^+ := \left\{ z\in\mathbb{C};\ 0 < |z| < \infty,\ 
\left|\arg(z)-\frac{\pi}{2}\right| < \frac{\pi}{2}-\delta \right\}\subset S_\delta
\end{equation}
for $\delta>0$ (supposed to be small). In particular, $k\in S^+_\delta$ implies $\im k>0$.
\begin{lemma}
\label{122}
Let $e\in\mc{E}_{\inte}$ and let $k$ be confined to the sector $S_\delta^+$. Then
the two expressions below possess a complete asymptotic expansion as $|k|\to\infty$,
whose leading terms are
\be
\label{wk5}
\eq{
& \frac{1}{W_e(k)}\int_0^{l_e}u^+_e(k;x)^2\ \ud x \\
&\hspace{1cm}=-\frac{1}{4k^2} - \frac{1}{4 k^4}V_e(0) + \frac{1}{8 \ui k^5}V_e(0) +
   \Or\lk k^{-6}\rk,
}
\ee
and
\be
\label{wgn}
\eq{
&\frac{1}{W_e(k)}\frac{u^+_e\lk k;l_e\rk}{u^-_e\lk k;l_e\rk}\int_0^{l_e}u^-_e(k;x)^2\ \ud x \\
&\hspace{1cm}=- \frac{1}{4k^2} - \frac{1}{4k^4}V_e(l_e) - \frac{1}{8\ui k^5}V'_e(l_e) 
      + \Or(k^{-6}).
}
\ee
\end{lemma}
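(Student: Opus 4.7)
The plan is to combine the uniform asymptotic expansions of $u^\pm_e$ from Lemma~\ref{6} with iterated integration by parts, exploiting the exponential damping available for $k$ in the sector $S_\delta^+$. The identities \eref{wk5} and \eref{wgn} will be obtained by symmetric arguments in which the roles of the two endpoints are exchanged.

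For \eref{wk5}, the first step is to introduce the ratio $\rho(k;x):=u^+_e(k;x)/u^-_e(k;x)$, which is smooth and nonzero by Remark~\ref{non_vanishing_of_fund_solutions}. The quotient rule together with the Wronskian definition \eref{def:Wronski} gives $\rho'(k;x)=W_e(k)/(u^-_e(k;x))^2$, so that $(u^+_e(k;x))^2/W_e(k)=\rho(k;x)^2/\rho'(k;x)$. Using Lemma~\ref{6} together with the symmetry relation \eref{53}, I obtain the asymptotic expansion $\rho(k;x)\sim\exp(\eta(k;x))$ with
\[ \eta(k;x):=2\sum_{m=0}^{\infty}k^{1-2m}\int_0^x\beta_{e,2m-1,+}(y)\,\ud y=2\ui kx+\Or(k^{-1}), \]
hence $\rho^2/\rho'\sim\exp(\eta)/\eta'$. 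The key feature is that for $k\in S_\delta^+$ one has $\re(2\ui kx)=-2(\im k)x$, so $\rho(k;x)$ decays exponentially in $|k|$ for any $x$ bounded away from zero.

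The second step is integration by parts via the identity
\[ \frac{\exp(\eta)}{\eta'}=\frac{\ud}{\ud x}\lek\frac{\exp(\eta)}{(\eta')^2}\rek+\frac{2\exp(\eta)\,\eta''}{(\eta')^3}, \]
checked by direct differentiation. The boundary contribution at $x=l_e$ is exponentially small and does not enter the asymptotic expansion. The boundary contribution at $x=0$ uses $\eta(k;0)=0$ and produces $-1/(\eta'(k;0))^2$, whose expansion in $k^{-1}$ follows from $\eta'(k;0)=2\ui k+2k^{-1}\beta_{e,1,+}(0)+\Or(k^{-3})$. The remainder integral carries a factor $\eta''/(\eta')^3=\Or(k^{-4})$, so each iteration of the procedure lowers the order of the remainder by at least $k^{-2}$, producing the complete asymptotic expansion. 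Substituting $\beta_{e,-1,+}=\ui$, $\beta_{e,0,+}=0$, $\beta_{e,1,+}(0)=-\ui V_e(0)/2$, and $\beta_{e,2,+}(0)=V_e'(0)/4$ from \eref{101} and \eref{56t} yields the explicit leading terms.

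The identity \eref{wgn} follows from an analogous argument in which the prefactor $u^+_e(k;l_e)/u^-_e(k;l_e)\sim\exp(2\ui kl_e)$ is itself exponentially small for $k\in S_\delta^+$ and absorbs the exponential growth of $(u^-_e(k;l_e))^2$; consequently the roles of the endpoints are interchanged, the $x=0$ endpoint contributing only exponentially small terms while $x=l_e$ delivers the asymptotic expansion, with coefficients now involving $\beta_{e,l,+}(l_e)$ and its $x$-derivatives. The main technical obstacle I anticipate is turning the formal integration-by-parts procedure into a genuine asymptotic statement: specifically, after $N$ iterations the remainder must be $\Or(k^{-N})$ uniformly in $x\in[0,l_e]$ and in $k\in S_\delta^+$ with $|k|$ large. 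This should follow from the uniformity in $x$ of the expansions stipulated in Lemma~\ref{6}, combined with the lower bound $|\eta'(k;x)|\gtrsim|k|$ valid for $|k|$ large.
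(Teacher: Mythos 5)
Your route is genuinely different from the paper's, and the analytic core is sound. The paper factors $u^\pm_e(k;x)=\ue^{\pm\ui kx}v^\pm_e(k;x)$, integrates by parts against the explicit exponential $\ue^{2\ui kx}$ (cf.\ \eref{112} and \eref{wek7}), and then multiplies by the separately derived expansion of $1/W_e(k)$ from Lemma~\ref{Wronski_cancellation}; you instead observe that $(u^+_e)^2/W_e$ is, up to sign, $\rho^2/\rho'$ with $\rho=u^+_e/u^-_e$, so that the integrand becomes an exact derivative up to controllable corrections and the whole expansion is read off from endpoint data of $\eta'=(\log\rho)'$ — the Wronskian and the integral are handled in one stroke. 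The damping in $S^+_\delta$ (where $\im k\geq|k|\sin\delta$), the uniformity in $x$ from Lemma~\ref{6}, and the bound $|\eta'(k;x)|\geq c|k|$ give exactly the control needed to discard the far endpoint and to iterate; your treatment of \eref{wgn}, where the prefactor $u^+_e(k;l_e)/u^-_e(k;l_e)$ relocates the expansion to the endpoint $x=l_e$, is the right picture and (after the sign fix below) reproduces the relative sign of the $k^{-5}$ terms in \eref{wk5} and \eref{wgn}. (Two small points: after the first step the gain per iteration is generically one power of $k^{-1}$ in the integrated remainder, not $k^{-2}$ — harmless for completeness of the expansion; and your method, like the paper's own computation \eref{wek6}, produces $V_e'(0)$ rather than $V_e(0)$ in the $k^{-5}$ term of \eref{wk5}, confirming that the statement contains a typo there.)

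There is, however, one concrete step you must repair: the orientation of the Wronskian. With $W_e$ as defined in \eref{def:Wronski}, i.e.\ $W_e={u^+_e}'u^-_e-u^+_e{u^-_e}'$, your relation $\rho'=W_e/(u^-_e)^2$ is correct, but then the leading boundary term is $-1/\eta'(k;0)^2=+\frac{1}{4k^2}+\frac{V_e(0)}{4k^4}+\frac{\ui V'_e(0)}{8k^5}+\Or\lk k^{-6}\rk$, i.e.\ exactly $(-1)$ times the right-hand side of \eref{wk5}, and the same happens for \eref{wgn}. The lemma as stated — and as fed into \eref{iop1} and Theorem~\ref{151} — rests on the opposite orientation, $W_e=u^+_e{u^-_e}'-{u^+_e}'u^-_e$, which is the convention actually used in the proof of Lemma~\ref{Wronski_trick} (first line of \eref{stz56}, giving $W_e\sim-2\ui k\,u^+_eu^-_e(1+\cdots)$) and in Lemma~\ref{Wronski_cancellation}, and which is also the normalisation for which the free part of \eref{24} is the genuine resolvent kernel $\frac{\ui}{2k}\ue^{\ui k|x-y|}$ (cf.\ \eref{resolvent_stargraph} and Lemma~\ref{l1}). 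So either write $\rho'=-W_e/(u^-_e)^2$, adopting the paper's operational convention, or carry the extra minus sign explicitly; as written, your final substitution of the $\beta$'s would yield the negatives of the stated coefficients rather than the coefficients themselves. Once this single sign is fixed, your scheme gives $-\frac{1}{4k^2}-\frac{V_e(0)}{4k^4}+\frac{V'_e(0)}{8\ui k^5}+\Or\lk k^{-6}\rk$ and $-\frac{1}{4k^2}-\frac{V_e(l_e)}{4k^4}-\frac{V'_e(l_e)}{8\ui k^5}+\Or\lk k^{-6}\rk$, in agreement with the paper.
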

\begin{proof}
In view of the asymptotic behaviour \eref{uzt} of the fundamental system we set
\be
u^\pm_e(k;x)=\ue^{\pm\ui kx}\,v^\pm_e(k;x)
\ee
and notice that for fixed $x$ the functions $v^\pm_e(k;x)$ are bounded in $k$, and 
their derivatives are of the order $\Or(k^{-1})$.

Integrating by parts $(N+1)$-times and using that $\im k>0$ when $k\in S^+_\delta$ we find 
for \eref{wk5} that
\be
\label{112}
\eq{
&\int_{0}^{l_e}\ue^{2\ui kx}v^+_e(k;x)^2\ \ud x \\
&\qquad=\sum\limits_{n=0}^{N}\frac{1}{(-2\ui k)^{n+1}}\frac{\ud^n}{\ud x^n}
         \left.\lk v^+_e(k;x)^2\rk\right|_{x=0}+\Or\lk k^{-N-2}\rk.
}
\ee
From Lemma~\ref{6} we find that
\be
\label{111}
\eq{
v^+_e(k;0)^2 &=1\\
\frac{\ud}{\ud x}\left.\lk v^+_e(k;x)^2\rk\right|_{x=0}
   &=2k^{-1}\beta_{e,1,+}(0) + 2k^{-2}\beta_{e,2,+}(0) + \Or\lk k^{-3}\rk\\
\frac{\ud^2}{\ud x^2}\left.\lk v^+_e(k;x)^2\rk\right|_{x=0}
  &=2 k^{-1}\beta_{e,1,+}'(0)+\Or\lk k^{-2}\rk\\
\frac{\ud^3}{\ud x^3}\left.\lk v^+_e(k;x)^2\rk\right|_{x=0}
  &=\Or\lk k^{-1}\rk,
}
\ee
and using this in \eref{112} gives
\be
\label{eng2}
\eq{
&\int_{0}^{l_e}u^+_e(k;x)^2\ \ud x \\
&\hspace{0.5cm}=-\frac{1}{2\ui k}  - \frac{\beta_{e,1,+}(0)}{2k^{3}}
-\frac{1}{k^4}\lk\frac{\beta_{e,2,+}(0)}{2}+\frac{\ui\beta_{e,1,+}'(0)}{4}\rk
+\Or\lk k^{-5}\rk.
}
\ee
An expansion for ${1}/{W_e(k)}$ follows from Lemma~\ref{Wronski_cancellation} and 
\eref{hjk3} evaluated at $x=0$. Altogether this yields 
\begin{equation}
\label{wek6}
\eq{
&\frac{1}{W_e(k)}\int_0^{l_e}u^+_e(k;x)^2\ \ud x  \\
&\hspace{0.5cm}=\lk-\frac{1}{2\ui k} - \frac{V_{e}(0)}{4\ui k^3} + \Or\lk k^{-5}\rk \rk 
     \lk-\frac{1}{2\ui k}-\frac{V_e(0)}{4\ui k^3} - \frac{V_e'(0)}{4 k^4}+\Or\lk k^{-5}\rk \rk \\
&\hspace{0.5cm}=-\frac{1}{4k^2} - \frac{V_e(0)}{4 k^4} + \frac{V_e'(0)}{8 \ui k^5} 
    +\Or\lk k^{-6}\rk.
}
\end{equation}
For \eref{wgn} we use Lemma~\ref{Wronski_trick} evaluated at $x=l$, and then integrate by
parts $(N+1)$-times as in \eref{112},
\be
\label{wek7}
\eq{
&\frac{1}{W_e(k)}\frac{u^+_e\lk k;l_{e}\rk}{u^-_e\lk k;l_{e}\rk} 
     \int_0^{l_e}u^-_e\lk k;x \rk^2\ \ud x \\
&\hspace{0.5cm}\sim\frac{1}{ -2 \sum\limits_{l=-1}^{\infty}k^{-(2l+1)}\beta_{e,2l+1,+}(l_e)}
     \int_{0}^{l_e}\ue^{-2\ui k(x-l_e)}\,\frac{v^-_e\lk k;x \rk^2}{v^-_e\lk k;l_{e}\rk^2}\ \ud x\\
&\hspace{0.5cm}\sim \frac{1}{2ik}\lk \sum_{l=0}^{\infty}k^{-2l}w_{e,l}(l_e) \rk\\
&\hspace{1.0cm}\cdot\lk\sum_{n=0}^N\frac{1}{(2\ui k)^{n+1}}\frac{\ud^n}{\ud x^n}
     \left.\lk v^-_e(k;x)^2\rk\right|_{x=0}+\Or\lk k^{-N-2}\rk\rk
}
\ee
The prefactor was computed as in \eref{Wronski_to_wl}. We still need
\begin{equation}
\label{dek3}
\eq{
v^-_e(k;0) &= 1 \\
\frac{\ud}{\ud x}\left.\lk v^-_e(k;x)^2\rk\right|_{x=0}
   &= 2k^{-1}\beta_{e,1,-}(l_e) + 2k^{-2}\beta_{e,2,-}(l_e)+\Or(k^{-3})\\
\frac{\ud^2}{\ud x^2}\left.\lk v^-_e(k;x)^2\rk\right|_{x=0}
   &=2k^{-1}\beta_{e,1,-}'(l_e)+ \Or(k^{-2})\\
\frac{\ud^3}{\ud x^3}\left.\lk v^-_e(k;x)^2\rk\right|_{x=0}
   &= \Or(k^{-1}),
}
\end{equation}
giving
\begin{equation}
\label{er6}
\eq{
&\sum_{n=0}^{N}\frac{1}{(2\ui k)^{n+1}}\frac{\ud^n}{\ud x^n}
         \left.\lk v^-_e(k;x)^2\rk\right|_{x=0}+\Or\lk k^{-N-2}\rk \\
&\hspace{0.5cm}=\frac{1}{2ik}-\frac{1}{2k^3}\beta_{e,1,-}(l_e) - 
         \frac{1}{4k^4}\lk 2\beta_{e,2,-}(l_e) -i\beta_{e,1,-}'(l_e) \rk +\Or(k^{-5}) \\
}
\end{equation}
Now using $w_{e,0}(l_e)=1$ and \eref{wek3} finally yields \eref{wgn}.
\end{proof}
\section{Asymptotics of the resolvent kernel}
\label{l2}
The results of the previous sections enable us to determine the precise asymptotic behaviour of 
the resolvent kernel for $k$ in the sector $S^+_\delta$. As expected, the result on the 
diagonal is different from that off the diagonal, and this distinction carries over to the heat
kernel.
\begin{lemma}
\label{lem:resoffd}
In the sector $S_{\delta}^+$ the resolvent kernel of $H$ satisfies the estimate
\be
\label{local1}
r_{H,ee'}(k^2;x,y) = \Or\lk k^{-1}\ue^{-\im k d(x,y)}\rk, 
\ee
when $|k|\rightarrow \infty$. 
\end{lemma}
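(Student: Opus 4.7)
My plan is to use the explicit representation (\ref{56a}) of the resolvent kernel from Theorem~\ref{22}, splitting $r_{H,ee'}(k^2;x,y)$ into the free contribution $r^{(0)}_{ee'}(k;x,y)$ and the correction obtained by extracting the $(e,e')$-entry of the matrix product $\Phi(k;\bsy{x})\,\overline{R_1(\overline{k};\bsy{l})}^{-1}(\eins-\mfS T)^{-1}\mfS\,R_1(k;\bsy{l})\bsy{W}^{-1}(k)\Phi(k;\bsy{y})^T$, then bounding each piece separately. The free part is immediate: off-diagonal entries vanish by (\ref{24}), and diagonal entries combine $W_e(k)^{-1}=\Or(k^{-1})$ (Lemma~\ref{Wronski_trick}) with $u_e^\pm(k;x)=\ue^{\pm\ui k x+\Or(k^{-1})}$ (Remark~\ref{rough_asymptotics}) to give $\Or(k^{-1}\ue^{-\im k |x-y|})$; since $|x-y|\ge d(x,y)$ on a single edge, the claim follows.

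For the correction I would analyse the three matrix factors entrywise. Using $\overline{R_1(\overline{k};\bsy{l})}^{-1}=\Diag(\eins,\eins,\bsy{u}_-(k;\bsy{l})^{-1})$ (from (\ref{connection})), the non-zero $(e,\alpha)$-entries of $\Phi(k;\bsy{x})\overline{R_1(\overline{k};\bsy{l})}^{-1}$ are one of $\ue^{\ui k x_e}$, $u_e^+(k;x_e)$, or $u_e^-(k;x_e)/u_e^-(k;l_e)\sim\ue^{-\ui k(x_e-l_e)}$ according as $\alpha$ labels the external, the internal-at-$0$, or the internal-at-$l_e$ end of edge $e$; in all three cases the magnitude is $\Or(\ue^{-\im k\,d(x_e,v_\alpha)})$, where $v_\alpha$ denotes the vertex at end $\alpha$. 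The same type of computation shows that the $(\beta,e')$-entry of $R_1(k;\bsy{l})\bsy{W}^{-1}(k)\Phi(k;\bsy{y})^T$ is $\Or(k^{-1}\ue^{-\im k\,d(y_{e'},v_\beta)})$, the extra $k^{-1}$ being provided by $\bsy{W}^{-1}$.

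The central step is the estimate $M_{\alpha\beta}(k) := [(\eins-\mfS T)^{-1}\mfS]_{\alpha\beta} = \Or(\ue^{-\im k\,d(v_\alpha,v_\beta)})$. Under the local-boundary-condition hypothesis assumed in this section, the recursive structure of Definition~\ref{90} makes every term of the asymptotic expansion of $\mfS(k;P,L)$ block-diagonal over vertices, while $T_{\gamma\delta}$ is non-zero only between the two ends of a single internal edge with magnitude $\Or(\ue^{-\im k\,l_e})$ (Corollary~\ref{105}). Thus for $\im k$ large enough the Neumann series $(\eins-\mfS T)^{-1}=\sum_{n\ge 0}(\mfS T)^n$ represents $M_{\alpha\beta}$ as a sum over $n$-edge walks on $\Gamma$ from $v_\alpha$ to $v_\beta$, weighted by bounded scattering amplitudes and a factor $\ue^{\ui k L_w}$ with $L_w$ the total walk length; since the number of walks of length at most $d(v_\alpha,v_\beta)+\Delta$ grows only exponentially in $\Delta$ while each such walk carries a weight $\Or(\ue^{-\im k\,\Delta\,\min_e l_e})$, for $\im k$ sufficiently large the sum is dominated by the shortest walk, yielding the claim. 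Combining the three entrywise bounds with the inequality $d(x_e,v_\alpha)+d(v_\alpha,v_\beta)+d(v_\beta,y_{e'})\ge d(x,y)$ (the left-hand side being the length of an admissible path from $x$ to $y$ passing through $v_\alpha$ and $v_\beta$), and summing over the finitely many end-pairs $(\alpha,\beta)$, then completes the proof. The main obstacle is this walk-expansion step: one must reorganise the Neumann series as a walk sum and control the combinatorial growth of long walks via the exponential weight, an argument that relies essentially on local boundary conditions. The remaining ingredients are routine asymptotic bookkeeping from Lemma~\ref{6}, Remark~\ref{rough_asymptotics}, Lemma~\ref{Wronski_trick}, and Corollaries~\ref{118}--\ref{105}.
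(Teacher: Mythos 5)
Your proposal follows essentially the same route as the paper's proof: split off the free kernel via the representation \eref{56a}, use the asymptotics \eref{respath1}--\eref{respath2} type bounds for the outer factors, expand $\lk\eins-\mfS T\rk^{-1}\mfS$ in a Neumann series, and recognise the result as a walk sum whose exponential weights are dominated by the shortest path of length $d(x,y)$. The only difference is cosmetic: you carry out the walk-sum estimate and its combinatorial control explicitly (using locality of the boundary conditions), where the paper imports this step from the Laplacian case in Schrader's work; the argument is correct.
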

\begin{proof}
For the proof we use the representation \eref{56a} of the resolvent kernel. The contribution
of the `free' part $r^{(0)}_{ee'}(k;x,y)$ is clearly of the form \eref{local1}, see \eref{24}.
% what is n? with $n=1$. 

For the remaining contributions we note that as $|k|\to\infty$ with $k\in S_\delta$,
\begin{equation}
\label{respath1}
\eq{
\Phi(k;\bsy{x})\overline{R_1(\overline{k},\bsy{l})}^{-1} 
&= \bma{ccc}\ue^{ik\bsy{x}} & 0 & 0 \\ 0 & \bsy{u}_+(k;\bsy{x}) & 
      \bsy{u}_-(k;\bsy{x})\bsy{u}_-(k;\bsy{l})^{-1}\ema \\
&\sim \bma{ccc}\ue^{ik\bsy{x}} & 0 & 0 \\ 0 & \ue^{\ui k\bsy{x}} & 
      \ue^{\ui k(\bsy{l}-\bsy{x})}\ema,
}
\end{equation}
as well as
\begin{equation}
\label{respath2}
\eq{
 R_1(k;\bsy{l})\bsy{W}^{-1}(k)\Phi(k;\bsy{y})^T 
 &= \bsy{W}^{-1}(k)\bma{cc}\ue^{ik\bsy{y}} & 0 \\ 0 & \bsy{u}_+(k;\bsy{y}) \\ 
       0 & \bsy{u}_+(k;\bsy{l})\bsy{u}_-(k;\bsy{y}) \ema \\
 &\sim \frac{\ui}{2k}\bma{cc}\ue^{ik\bsy{y}} & 0 \\ 0 & \ue^{\ui k\bsy{y}} \\ 
       0 & \ue^{\ui k(\bsy{l}-\bsy{y})} \ema .
}
\end{equation}
The remaining expression $\lk\eins-\mfS(k;P,L)T(k;\bsy{l})\rk^{-1}\mfS(k;P,L)$ has an
asymptotic behaviour that follows from Theorem~\ref{79} and Corollary~\ref{105}. The
latter implies, in particular, that for sufficiently large $|k|$ and $\im k>0$ the matrix 
$\mfS(k;P,L)T(k;\bsy{l})$ has norm less than one. Moreover, for such values of $k$,
\be
\label{geometricST}
\eq{
&\lk\eins-\mfS(k;P,L)T(k;\bsy{l})\rk^{-1}\mfS(k;P,L)\\
&\hspace{1cm}=\sum_{n=0}^{\infty}\lk\mfS(k;P,L)T(k;\bsy{l})\rk^n\mfS(k;P,L)\\
&\hspace{1cm}=\sum_{n=0}^{\infty}\lk\mfS_\infty T_\infty(k;\bsy{l})\rk^n\mfS_\infty
   +\Or\lk k^{-1}\rk .
}
\ee
Putting \eref{respath1}-\eref{geometricST} together according to \eref{56a}  we notice that,
asymptotically, the same expression emerges for the `non-free' contribution to the resolvent
kernel as in the case where $H$ is a Laplacian, see \cite[Proposition 3.3.]{Schrader:2007}.
Hence, asymptotically this contribution can be represented as a sum over paths from $y$
to $x$ on $\Gamma$. Let $\mc{P}_{xy}$ the set of such paths, and let $d_p(x,y)$ be the
distance of $x$ and $y$ along $p_{xy}\in\mc{P}_{xy}$, then 
\be
\label{pathsxeyf}
\eq{
r_{H,ee'}\lk k^2;x,y\rk 
  &\sim \delta_{ee'}\frac{\ui}{2k}\,\ue^{\ui k|x-y|} +\Or\lk k^{-2}\rk \\
  &\quad +\frac{\ui}{2k}\sum_{p_{xy}\in\mc{P}_{xy}} \lk A_p+\Or\lk k^{-1}\rk\rk\,\ue^{\ui k d_p(x,y)},
}
\ee
when $|k|\to\infty$ with $k\in S^+_\delta$. Here the amplitudes $A_p$ arise from
multiplying matrix elements of $\mf{S}_\infty$ along the path $p_{xy}$ in the same way as
in \cite{Schrader:2007}. As the distance $d(x,y)$ is the minimum of $d_p(x,y)$ over all
$p_{xy}\in\mc{P}_{xy}$, the result \eref{local1} follows.
\end{proof}
%
%
%
%
%
\begin{comment}
For an arbitrary matrix $M\in M_{E}(\kz)$, with rows and columns arranged as prescribed by 
\eref{5}, we have
% 
\be
\label{matrixm}
%
\eq{
%
&\lk\Phi(k;\bsy{x})\overline{R_1(\overline{k},\bsy{l})}^{-1}MR_1(k;\bsy{l})\Phi(k;\bsy{y})^T\rk_{ef}\\
&\hspace{0.5cm}=u_{e,+,k}\lk x_e\rk M_{0_e,0_f}u_{f,+,k}\lk y_f\rk\\
&\hspace{1.0cm}+u_{e,+,k}\lk x_e\rk M_{0_e,l_f}{u}_{f,+,k}\lk{l_f}\rk u_{f,-,k}\lk y_f\rk\\
&\hspace{1.0cm}+u_{e,-,k}\lk x_e\rk u_{e,-,k}\lk l_e\rk^{-1} M_{l_e,0_f}u_{f,+,k}\lk y_f\rk\\
&\hspace{1.0cm}+u_{e,-,k}\lk x_e\rk u_{e,-,k}\lk l_e\rk^{-1} M_{l_e,l_f}{u}_{f,+,k}\lk{l_f}\rk u_{f,-,k}\lk y_f\rk
%
}
%
\ee
%
where $e,f\in \mc{E}_{\inte}$ the matrix elements of $M$ in \eref{matrixm} correspond to the 
indicated interval ends. Analogous calculations hold for the other three cases. 
\end{comment}
%
%
%
%
%
According to Lemma~\ref{lem:resoffd} points $x,y$ on the graph with zero distance deserve
a further investigation. When such points $x,y$ are at edge ends, particular care has to be 
taken as to which edge the points belong to. We clarify this case by using the notation
$x \cong (v,e)\in\mc{V}\times\mc{E}$ indicating that the point is an edge end of $e$ and $v$ is the 
vertex adjacent to $e$. This notation is well defined since we have excluded tadpoles in our 
construction.  Moreover, 
in the following we use for the corresponding matrix elements the notation $M_{xy}:=M_{ee'}$, 
$x\cong(v,e)$, $y \cong (v,e')$, where $v$ has to be adjacent to $e$ and $e'$. This notation is well-defined 
since we have assumed local boundary conditions and no tadpoles.
\begin{lemma}
\label{l1}
Let $x,y$ be points on $\Gamma$ with $d\lk x,y\rk=0$. 
Then, for $k\in S^+_\delta$ the resolvent
kernel of $H$ possesses a complete asymptotic expansion as $|k|\to\infty$ in powers of
$k^{-1}$. The leading terms are given by:
\begin{itemize}
\item[(i)] $x$ is not an edge end:
\be
\label{l7b}
r_{H,ee}\lk k; x,x\rk_{ee}=\frac{\ui}{2k}+\Or(k^{-3}),
\ee
\item[(ii)] $x \cong (v,e)$ and $y \cong (v,e')$ with $e \neq e'$:
\be
\label{l7a}
r_{H,ee'}\lk k; x,y\rk=\frac{\ui{\mfS_{\infty,xy}}}{2k}-\frac{L_{xy}}{k^2}+\Or\lk k^{-3}\rk,
\ee
\item[(iii)] $x \cong (v,e)$: 
\be
\label{l6}
r_{H,ee}\lk k; x,x\rk = \frac{\ui}{2k}\lk 1+{\mfS_{\infty,xx}}\rk-\frac{L_{xx}}{k^2}
+\Or\lk k^{-3}\rk.
\ee
\end{itemize}
\end{lemma}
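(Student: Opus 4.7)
The plan is to combine the path-sum form of the resolvent kernel developed inside the proof of Lemma~\ref{lem:resoffd} with the refined Wronskian expansions of Section~\ref{asyml} and the $\mathfrak{S}$-matrix expansion of Corollary~\ref{118}. First I would start from the representation \eref{56a}, splitting $r_{H,ee'}(k^2;x,y)$ into the free kernel $r^{(0)}_{ee'}(k;x,y)$ and the correction term carrying the factor $\Phi(k;\bsy{x})\,\overline{R_1(\overline{k};\bsy{l})}^{-1}(\mathds{1}-\mathfrak{S}T)^{-1}\mathfrak{S}\,R_1(k;\bsy{l})\,\bsy{W}^{-1}\,\Phi(k;\bsy{y})^T$. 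Expanding $(\mathds{1}-\mathfrak{S}T)^{-1}\mathfrak{S}$ as in \eref{geometricST} reproduces the path-sum form \eref{pathsxeyf}, in which every path $p_{xy}$ from $y$ to $x$ contributes an exponential weight $\ue^{\ui k d_p(x,y)}$. Because $k\in S_{\delta}^{+}$ forces $\im k>0$, only paths of length zero contribute to the polynomial asymptotic expansion in $k^{-1}$; all other paths are exponentially suppressed.

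For case (i) the point $x$ lies in the interior of an edge $e$, and every non-trivial closed path based at $x$ must travel to a vertex and return, so has strictly positive length. Only the free kernel survives, and Lemma~\ref{Wronski_cancellation} applied to $u^{+}_e(k;x)u^{-}_e(k;x)/W_e(k)$ shows that this has an expansion in \emph{odd} powers of $k^{-1}$ only, with leading term $\frac{\ui}{2k}$; the next term is of order $k^{-3}$, giving \eref{l7b}.

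For cases (ii) and (iii) I evaluate at the edge ends. By the no-tadpole convention and locality of the boundary conditions, the only zero-length paths at the common vertex $v$ are the direct transmission/reflection between the two edge ends adjacent to $v$, which carries the amplitude $\mathfrak{S}(k;P,L)_{xy}$. Substituting $\mathfrak{S}(k;P,L) = \mathfrak{S}_{\infty} + 2k^{-1}\ui L + \Or(k^{-2})$ from Corollary~\ref{118}, and multiplying by the prefactor $\frac{\ui}{2k}$ coming from $R_1(k;\bsy{l})\bsy{W}^{-1}(k)$ evaluated at the edge end (again via Lemma~\ref{Wronski_cancellation}), produces
\[
\frac{\ui}{2k}\left(\mathfrak{S}_{\infty,xy} + \frac{2\ui L_{xy}}{k} + \Or(k^{-2})\right) = \frac{\ui\,\mathfrak{S}_{\infty,xy}}{2k} - \frac{L_{xy}}{k^{2}} + \Or(k^{-3}).
\]
Adding the free diagonal contribution $\frac{\ui}{2k}+\Or(k^{-3})$ when $x=y$ (case (iii)) and using that $r^{(0)}$ vanishes off-diagonally (case (ii)) yields \eref{l7a} and \eref{l6}.

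The main technical point is ensuring that nothing at order $k^{-2}$ has been overlooked: the free part has no $k^{-2}$ term by the odd-power structure of Lemma~\ref{Wronski_cancellation}, and all non-zero-length paths are exponentially small in $S_{\delta}^{+}$, so the full $k^{-2}$ coefficient is determined solely by $\mathfrak{S}_1 = 2\ui L$. Locality of $P$ and $L$ is what makes the vertex-local entries $\mathfrak{S}_{\infty,xy}$ and $L_{xy}$ well-defined as labels indexed by edge ends at a common vertex; the exclusion of tadpoles guarantees that the pair $(v,e)\leftrightarrow x$ is unambiguous.
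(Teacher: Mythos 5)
Your proposal is correct and follows essentially the same route as the paper: the paper likewise keeps only the $n=0$ term of the expansion \eref{geometricST} (all higher terms and all edge-end combinations carrying factors $\ue^{\ui k l_e}$ being $\Or(k^{-\infty})$ in $S^+_\delta$), writes out the matrix product \eref{matrixm}, and inserts the expansions from Lemma~\ref{Wronski_cancellation}, \eref{wek1} and Corollary~\ref{118}, which is exactly your computation of the $k^{-1}$ and $k^{-2}$ coefficients from $\mfS_\infty$, $\mfS_1=2\ui L$ and the odd-power structure of the free part.
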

\begin{proof}
\label{l7}
We use the expansion \eref{geometricST} in the representation \eref{121} 
of the resolvent kernel. The coefficients of $k^{-1}$ and $k^{-2}$ in \eref{l7b}, \eref{l7a} 
and \eref{l6} follow from the leading term, coming from $n=0$, in \eref{geometricST}.
For $e,e'\in\mc{E}_{\inte}$ we have
\be
\label{matrixm}
\eq{
&\lk\Phi(k;\bsy{x})\overline{R_1(\overline{k},\bsy{l})}^{-1}\mfS\lk k;P,L\rk R_1(k;\bsy{l})
      \bsy{W}^{-1}(k)\Phi(k;\bsy{y})^T\rk_{ee'}\\
&\hspace{1cm}=u^+_e(k;x_e)\mf{S}\lk k;P,L\rk_{ee'}{W_{e'}}^{-1}(k)u^+_{e'}(k;y_{e'})\\
&\hspace{1.5cm}+u^+_e(k; x_e)\mf{S}\lk k;P,L\rk_{ee'}u^+_{e'}(k;l_{e'})W_{e'}^{-1}(k)
      u^-_{e'}(k;y_{e'})\\
&\hspace{1.5cm}+u^-_e(k;x_e)u^-_e(k;l_e)^{-1} \mfS(k;P,L)_{ee'}W_{e'}^{-1}(k)u^+_{e'}(k;y_{e'})\\
&\hspace{1.5cm}+u^-_e(k;x_e)u^-_e(k;l_e)^{-1} \mfS(k;P,L)_{ee'}u^+_{e'}(k;l_{e'})W_{e'}^{-1}(k) 
      u^-_{e'}(k;y_{e'}).
}
\ee
Analogous results hold for the other three cases. A straightforward incorporation of the 
leading two coefficients with respect to $k$ coming from \eref{8}, \eref{rt56} and \eref{wek1} 
in \eref{matrixm} completes the proof.
\end{proof}
We now have all the ingredients to determine the trace of a regularised resolvent,
in the sense of Proposition~\ref{141b}.
\begin{theorem}
\label{151}
Let $\Gamma$ be a compact or non-compact metric graph. Then the trace of the difference of 
resolvents, $R_H(k^2)-J_{\ex}R_{D/N}(k^2)J^\ast_{\ex}$, possesses an asymptotic expansion in 
powers of $k^{-1}$ when $|k|\rightarrow\infty$ with $k$ in the sector $S^+_\delta$ of the 
upper half-plane,
\be
\label{150a}
\tr\lk R_H(k^2)-J_{\ex}R_{D/N}(k^2)J^\ast_{\ex}\rk \sim \sum_{n=1}^{\infty}b_nk^{-n}.
\ee
The leading coefficients are given by 
\begin{equation}
\label{150}
\eq{
 b_1 &= -\frac{\mathcal{L}}{2\ui } \\
 b_2 &= -\frac{1}{4}\tr\mfS_{\infty}\pm\frac{E_{\ex}}{4}\\
 b_3 &= - \frac{1}{4\ui} \int_{\Gamma_{\inte}} V(\bsy{x})\ \ud\bsy{x} + \frac{1}{2\ui}\tr L\\
 b_4 &= -\frac{1}{4}\sum_{v \in \mc{V}} \sum\limits_{e \sim v} \mf{S}_{\infty,ee}V_e(v) + 
               \frac{1}{2}\tr L^2 \\
 b_5& = \frac{1}{16\ui}\int_{\Gamma_{\inte}} \lk V''(\bsy{x}) - 3 V^2(\bsy{x})\rk \ud\bsy{x} + 
                \frac{1}{8\ui}\sum_{v \in \mc{V}} \sum_{e \sim v} \mf{S}_{\infty,ee}V_{e}'(v) \\
       &\hspace{0.5cm} + \frac{3}{4\ui} \sum_{v \in \mc{V}}\sum_{e \sim v} L_{ee} V_e(v) - 
               \frac{1}{2 \ui}\tr L^3
 - \frac{1}{8\ui}\sum\limits_{v\in\mc{V}}\sum\limits_{e\sim v}P^{\bot}_{ee}V_e'(v).
}
\end{equation}
\end{theorem}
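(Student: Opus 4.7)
I would start from the trace representation \eref{103} in Proposition~\ref{141b}, which expresses the trace as $\sum_{e\in\mc{E}}\int_0^{l_e}[r_{H,ee}(k^2;x,x)-r_{D/N,ee}(k^2;x,x)]\,\ud x$, and split the sum into an internal-edge part (where $r_{D/N,ee}$ vanishes) and an external-edge part. On each edge I insert the resolvent kernel \eref{56a} of Theorem~\ref{22}, which decomposes into a ``free'' piece involving $r^{(0)}_{ee}$ and a ``scattering'' piece carrying the factor $(\eins-\mfS(k;P,L)T(k;\bsy{l}))^{-1}\mfS(k;P,L)$. The plan is to evaluate each piece separately as an asymptotic series in $k^{-1}$ and then assemble the coefficients $b_n$.

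The crucial observation driving the scattering part is that in the geometric expansion \eref{geometricST}, every term with $n\ge 1$ carries factors $u^\pm_e(k;l_e)\sim\ue^{\pm\ui k l_e}$ that are exponentially small for $k\in S_\delta^+$, so only the $n=0$ contribution $\mfS(k;P,L)$ survives on the polynomial scale; for the same reason the off-diagonal $(0_e,l_e)$ and $(l_e,0_e)$ entries of $\mfS$ do not contribute to the diagonal $(e,e)$ integral. Hence the scattering contribution reduces, on each internal edge, to the two diagonal entries $\mfS_{0_e 0_e}(k)$ and $\mfS_{l_e l_e}(k)$ multiplied by the integrals $W_e^{-1}\int_0^{l_e}u^+_e(k;x)^2\,\ud x$ and $W_e^{-1}(u^+_e/u^-_e)(k;l_e)\int_0^{l_e}u^-_e(k;x)^2\,\ud x$ whose expansions are supplied by Lemma~\ref{122}; on each external edge it reduces to $\mfS_{ee}(k)W_e(k)^{-1}\int_0^\infty\ue^{2\ui k x}\,\ud x$, which is elementary.

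I would handle the free part on internal edges with Lemma~\ref{Wronski_cancellation}, using $w_{e,0}=1$, $w_{e,1}=\tfrac{1}{2}V_e$ and $w_{e,2}=-\tfrac{1}{8}V_e''+\tfrac{3}{8}V_e^2$: the integral $\int_0^{l_e}r^{(0)}_{ee}(k^2;x,x)\,\ud x$ yields the volume term $-\mathcal{L}/(2\ui k)$ producing $b_1$, the bulk integral $\int_{\Gamma_\inte}V$ that appears in $b_3$, and the $\int(V''-3V^2)$ term appearing in $b_5$. On external edges, direct evaluation of $\int_0^\infty(r^{(0)}_{ee}-r_{D/N,ee})\,\ud x$ gives the $\pm E_\ex/(4k^2)$ piece of $b_2$. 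For the scattering piece I would insert the $\mfS$-matrix expansion from Corollary~\ref{118} up to $\mfS_3$ and collect terms of matching powers. At each order, summing the diagonal contribution $\mfS_{m,0_e0_e}+\mfS_{m,l_el_e}$ over internal edges together with $\mfS_{m,ee}$ over external edges produces, depending on $m$, either $\tr\mfS_\infty$, $\tr L$, $\tr L^2$, $\tr L^3$, or boundary sums $\sum_v\sum_{e\sim v}$ in which the diagonal of $\bsy{\beta}_1$ and $\bsy{\beta}_2$ translates, via the outward-derivative convention, into $V_e(v)$ and $V_e'(v)$ respectively.

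The main obstacle will be the bookkeeping for $b_4$ and especially $b_5$, where contributions of a given order in $k^{-1}$ arrive from several different sources simultaneously: $\mfS_\infty$ paired with the $V_e'(v)$ term of Lemma~\ref{122}, $\mfS_1=2\ui L$ paired with the $V_e(v)$ term of Lemma~\ref{122}, and $\mfS_3$ paired with the leading $-1/(4k^2)$. Obtaining the coefficient $3/(4\ui)$ of the $L_{ee}V_e(v)$ term in $b_5$ requires summing the contribution from $\mfS_1\cdot a_4$ with the contribution from the $\tr(L\bsy{\beta}_1)$ piece of $\tfrac14\tr\mfS_3$, using the constraint $LP=PL=0$ to simplify cross terms. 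One also has to split $\mfS_\infty=\eins-2P$ to separate the $\mfS_{\infty,ee}V_e'(v)$ and $P^\bot_{ee}V_e'(v)$ pieces of $b_5$, the latter arising ultimately from the $2\ui P^\bot\bsy{\beta}_2P^\bot$ term in $\mfS_3$ combined with the idempotence $(P^\bot)^2=P^\bot$ and cyclicity of the trace.
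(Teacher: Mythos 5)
Your plan is correct and follows essentially the same route as the paper's proof: the trace formula of Proposition~\ref{141b}, the split of the kernel \eref{56a} into free and scattering parts, the observation that the multiple-scattering terms ($n\geq 1$ in \eref{geometricST}) and the blocks carrying $\bsy{u}_\pm(k;\bsy{l})^{\pm 1}$ are $\Or(k^{-\infty})$ for $k\in S^+_\delta$, Lemma~\ref{Wronski_cancellation} with $w_{e,1},w_{e,2}$ for the bulk terms, Lemma~\ref{122} together with Corollary~\ref{118} for the vertex terms, and the same trace bookkeeping via $L=P^{\bot}LP^{\bot}$ and $\mfS_\infty=\eins-2P$; your description of how the $\frac{3}{4\ui}L_{ee}V_e(v)$ and $P^{\bot}_{ee}V'_e(v)$ terms in $b_5$ arise matches the paper's calculation.

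The one point where you genuinely deviate is the external-edge part of the scattering term, and there your treatment is the more careful one. The paper discards the external block of \eref{tzn} on the grounds that $\ue^{2\ui k\bsy{x}}=\Or(k^{-\infty})$; this is true pointwise for fixed $x>0$ but not after the integration over the half-infinite edge, since $\int_0^\infty\ue^{2\ui kx}\,\ud x=\ui/(2k)$ for $\im k>0$. Your version keeps this term, $\mfS_{ee}(k)\,W_e(k)^{-1}\int_0^\infty\ue^{2\ui kx}\,\ud x$, which is of order $k^{-2}$ with leading coefficient proportional to $\mfS_{\infty,ee}$, and you correctly include the external diagonal entries $\mfS_{m,ee}$ when assembling $\tr\mfS_\infty$, $\tr L$, $\tr L^2$, $\tr L^3$. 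This is in fact needed for \eref{150} to contain the traces over all $E$ edge ends: for a single half-line with Neumann vertex condition compared against the Dirichlet half-line the heat-trace difference is identically $1/2$, of which the $\pm E_{\ex}/4$ piece coming from the free part accounts for only $1/4$; the remaining $\mfS_{\infty,ee}/4$ is exactly the external scattering contribution you retain. Two small cautions: carry that external expansion through all orders up to $k^{-5}$ (on external edges the potential terms vanish by convention, but the $L$-dependent ones do not), and keep your sign conventions for the external Wronskian consistent with the internal ones so that the external and internal diagonal contributions combine into a single $-\frac14\tr\mfS_\infty$ rather than appearing with opposite signs.
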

\begin{proof}
By Proposition~\ref{141b} the difference of resolvents is trace class and the trace can
be obtained from the resolvent kernel as in \eref{103} with the kernel \eref{121}.

The first contribution comes from the `free' part \eref{r1t},
\begin{equation}
\label{iop}
\eq{
&\int_{\Gamma} \lk r^{(0)}\lk k^2;\bsy{x},\bsy{x}\rk-r_{D/N}\lk k^2;\bsy{x},\bsy{x}\rk \rk
        \ud\bsy{x} \\
&\hspace{0.5cm}=\mp\sum_{e\in\mc{E}_{\ex}}\frac{\ui}{2k}\int_{0}^{\infty}\ue^{2\ui kx}\ \ud x
        +\int_{\Gamma_{\inte}}\frac{1}{\bsy{W}_{\inte}(k)}\bsy{u}_+(k;\bsy{x})
        \bsy{u}_-(k;\bsy{x})\ \ud\bsy{x}.
}
\end{equation}
For the first term on the right-hand side we use that $\im k>0$. The asymptotics of the 
second term follow from Lemma~\ref{Wronski_cancellation} and \eref{hjk3},
\begin{equation}
\label{iop1}
\eq{
&\int_{\Gamma} \lk r^{(0)}\lk k^2;\bsy{x},\bsy{x}\rk-r_{D/N}\lk k^2;\bsy{x},\bsy{x}\rk \rk
        \ud\bsy{x} \\
&\hspace{1cm}\sim \pm\frac{E_{\ex}}{4k^2}-\frac{\mathcal{L}}{2\ui k} - \frac{1}{4\ui k^3} 
        \int_{\Gamma_{\inte}}V(\bsy{x})\ \ud\bsy{x}\\
&\hspace{2cm}+\frac{1}{16\ui k^5}\int_{\Gamma_{\inte}}\lk V''(\bsy{x}) - 3V^2(\bsy{x})\rk
         \ud\bsy{x} + \Or\lk k^{-7}\rk.
}
\end{equation}
The `non-free' contribution, i.e., the second and third lines on the right-hand side of 
\eref{121}, will give the contribution at the vertices. We first observe that
Corollary~\ref{105} implies, when $\im k>0$, that 
$\lk\eins-\mf{S}(k;P,L)T(k;\bsy{l})\rk)^{-1}=\eins+\Or(k^{-\infty})$. After a cyclic 
permutation we hence have to determine the asymptotic expansion of
\be
\label{106}
\tr\lk\int_{\Gamma}\mfS\lk k;P,L\rk R_1(k;\bsy{l})\bsy{W}(k)^{-1}\phi(k;\bsy{x})^T\phi(k;\bsy{x})
\overline{R_1\lk\overline{k},\bsy{l}\rk}^{-1} \ud \bsy{x} \rk.
\ee
As the $\mf{S}$-matrix is independent of $\bsy{x}$ we only need to integrate the remaining
terms. For these, a straight-forward calculation gives
\begin{equation}
\label{tzn}
\eq{
&R_1(k;\bsy{l})\bsy{W}(k)^{-1} \phi(k;\bsy{x})^T\phi(k;\bsy{x})
       \overline{R_1\lk\overline{k},\bsy{l}\rk}^{-1}\\
&\hspace{2cm}=
\bma{ccc}
\ue^{2\ui k\bsy{x}} & 0 & 0\\
0 & \frac{\bsy{u}_+(k;\bsy{x})^2}{\bsy{W}_{\inte}(k)} 
& \frac{\bsy{u}_+(k;\bsy{x})\bsy{u}_-(k;\bsy{x})\bsy{u}_+(k;\bsy{l})}{\bsy{W}_{\inte}(k)}\\
0 & \frac{\bsy{u}_-(k;\bsy{x})\bsy{u}_+(k;\bsy{x})}{\bsy{W}_{\inte}(k)\bsy{u}_-(k;\bsy{l})} 
& \frac{\bsy{u}_-(k;\bsy{x})^2\bsy{u}_+(k;\bsy{l})}{\bsy{W}_{\inte}(k)\bsy{u}_-(k;\bsy{l})}  
\ema .
}
\end{equation}
%
\begin{comment}
\\
&\hspace{0.5cm}\sim
%
\bma{ccc}
%
0 & 0 & 0\\
0 & \frac{ \bsy{u}_{+, k}(\bsy{x})^2}{\bsy{W}_{\inte}(k)} 
& -\frac{1}{2\ui k} \lk \sum\limits_{l=0}^\infty k^{-2l}\bsy{w}_{l}(\bsy{x})\rk 
    \bsy{u}_{+, k}(\bsy{l})\\
0 & -\frac{1}{2\ui k} \lk \sum\limits_{l=0}^\infty k^{-2l}\bsy{w}_{l}(\bsy{x})\rk 
      \frac{1}{\bsy{u}_{-, k}(\bsy{l})} 
& \frac{\bsy{u}_{-, k}(\bsy{x})^2\bsy{u}_{+, k}(\bsy{l})}{\bsy{W}_{\inte}(k)\bsy{u}_{-, k}(\bsy{l})}  
%
\ema.
%
}
%
\end{equation}
%
\end{comment}
%
We note that $\ue^{2\ui k\bsy{x}}=\Or(k^{-\infty})$, and that asymptotic expansions of the 
integrals of the remaining diagonal blocks were determined in Lemma~\ref{122}. 
Lemma~\ref{Wronski_cancellation} implies that 
$\bsy{W}_{\inte}(k)^{-1}\bsy{u}_\pm(k;\bsy{x})\bsy{u}_\mp(k;\bsy{x})$ possess asymptotic 
expansions in powers of $k^{-1}$. In the off-diagonal blocks of \eref{tzn}, however, these 
terms are multiplied by $\bsy{u}_\pm(k;\bsy{l})^{\pm 1}=\Or(k^{-\infty})$. Therefore, the
off-diagonal blocks do not contribute to the expansion. Thus,
\begin{equation}
\label{boundary_terms}
\eq{
&\int_{\Gamma} R_1(k;\bsy{l})\bsy{W}(k)^{-1} \phi(k;\bsy{x})^T\phi(k;\bsy{x})
       \overline{R_1\lk\overline{k},\bsy{l}\rk}^{-1}\ \ud\bsy{x} \\
&\hspace{1cm} = -\frac{1}{4k^2}
\bma{ccc}
0 & 0 & 0 \\
0 & \eins & 0 \\
0 & 0 & \eins
\ema
-\frac{1}{4k^4} 
\bma{ccc}
0 & 0 & 0 \\
0 & V(\bsy{0}) & 0 \\
0 & 0 & V(\bsy{l})
\ema
\\
&\hspace{2cm} + \frac{1}{8 \ui k^5}
\bma{ccc}
0 & 0 & 0 \\
0 & V'(\bsy{0}) & 0 \\
0 & 0 & -V'(\bsy{l})
\ema
+ \Or(k^{-6})
}
\end{equation}
We still need to multiply this with the $\mfS$-matrix whose asymptotic expansion was 
determined in Theorem~\ref{79} and in Corollary~\ref{118}, 
\begin{equation}
\label{S-matrix}
\eq{
&\mfS\lk k;P,L\rk\\
&\hspace{0.5cm} = \mfS_{\infty}+2k^{-1}\ui L+2k^{-2}\lk \ui P^{\bot}\bsy{\beta}_1P-L^2\rk\\
&\hspace{1.0cm}+2k^{-3}\lk P^{\bot}\bsy{\beta}_1L-L\bsy{\beta}_1 P+\ui 
         P^{\bot}\bsy{\beta}_2P^{\bot}-\ui L^3\rk+\Or\lk k^{-4}\rk \\
&\hspace{0.5cm} = \mfS_{\infty} - \frac{2}{\ui k}L - \frac{1}{k^2}\lk  P^{\bot}
\bma{ccc}
0 & 0 & 0\\
0 & V(\bsy{0}) & 0\\ 
0 & 0 & V\lk\bsy{l}_{\bsy{e}}\rk
\ema
P + 2L^2\rk\\
&\hspace{1.0cm} - \frac{1}{\ui k^3}\lk P^{\bot}
\bma{ccc}
0 & 0 & 0\\
0 & V(\bsy{0}) & 0\\ 
0 & 0 & V\lk\bsy{l}_{\bsy{e}}\rk
\ema
L 
- L 
\bma{ccc}
0 & 0 & 0\\
0 & V(\bsy{0}) & 0\\ 
0 & 0 & V\lk\bsy{l}_{\bsy{e}}\rk
\ema
P  \right. \\
&\hspace{1.0cm} \left. + \frac{1}{2} P^{\bot} 
\bma{ccc}
0 & 0 & 0\\
0 & V'(\bsy{0}) & 0\\ 
0 & 0 & -V'\lk\bsy{l}_{\bsy{e}}\rk
\ema
P^{\bot} - 2 L^3\rk+\Or\lk k^{-4}\rk \\
}
\end{equation}
When taking the trace the third and the sixth term on the right-hand side vanish as they
contain $P$ and $P^{\bot}$, and $P$ and $L$, respectively. The latter case is due to 
$L=P^\bot L P^\bot$. 

This gives the following contribution of \eref{106} to the coefficients \eref{150},
\begin{equation*}
\eq{
 \tilde{b}_2 &= -\frac{1}{4}\tr\mfS_{\infty} \\
 \tilde{b}_3 &= \frac{1}{2 \ui} \tr L
}
\end{equation*}
\begin{equation}
\eq{
\tilde{b}_4 &= -\frac{1}{4}\tr \lk \mfS_{\infty}
\bma{ccc}
0 & 0 & 0 \\
0 & V(\bsy{0}) & 0 \\
0 & 0 & V(\bsy{l})
\ema 
\rk 
+ \frac{1}{2}\tr L^2 \\
&= -\frac{1}{4}\sum_{v \in \mc{V}} \sum_{e \sim v} \mf{S}_{\infty,ee}V_e(v) + \frac{1}{2}\tr L^2
}
\end{equation}
\begin{equation*}
\eq{
 \tilde{b}_5 &= \frac{1}{8\ui}\tr \lk \mfS_{\infty}
\bma{ccc}
0 & 0 & 0 \\
0 & V'(\bsy{0}) & 0 \\
0 & 0 & -V'(\bsy{l})
\ema
\rk \\ 
&\hspace{0.5cm} + \frac{1}{2\ui}\tr \lk L
\bma{ccc}
0 & 0 & 0 \\
0 & V(\bsy{0}) & 0 \\
0 & 0 & V(\bsy{l})
\ema
\rk-\frac{1}{2 \ui}\tr L^3\\
&\hspace{0.5cm} + \frac{1}{4\ui}\tr \lk P^{\bot}
\bma{ccc}
0 & 0 & 0\\
0 & V(\bsy{0}) & 0\\ 
0 & 0 & V(\bsy{l}_{\bsy{e}})
\ema
L \rk \\
&\hspace{0.5cm} +  \frac{1}{8 \ui} \tr \lk P^{\bot}
\bma{ccc}
0 & 0 & 0\\
0 & V'(\bsy{0}) & 0\\ 
0 & 0 & -V'(\bsy{l}_{\bsy{e}})
\ema
P^{\bot} \rk \\
& =  \frac{1}{8\ui}\sum_{v \in \mc{V}} \sum_{e \sim v} \mf{S}_{\infty,ee}V_{e}'(v) 
 + \frac{3}{4 \ui} \sum_{v \in \mc{V}} \sum_{e \sim v}  L_{ee} V_e(v)\\
&\hspace{0.5cm} - \frac{1}{2 \ui}\tr L^3
 - \frac{1}{8\ui}\sum_{v\in\mc{V}}\sum_{e\sim v}P^{\bot}_{ee}V_e'(v).
}
\end{equation*}
Combined with the contributions in \eref{iop1} this finally proves \eref{150}.
\end{proof}
\section{Asymptotics of the heat kernel}
\label{sec:heatasy}
We shall use the relation
\begin{equation}
\label{Ltransf}
\ue^{-Ht} -  J_{\ex}\ue^{\Delta_{D/N}t}J^\ast_{\ex} = \frac{\ui}{2\pi}\int_\gamma
\ue^{-\lambda t}\lk R_H(k^2)-J_{\ex}R_{D/N}(k^2)J^\ast_{\ex}\rk\ \ud\lambda,
\end{equation}
on the level of kernels, where the (positively oriented) contour $\gamma$ encloses 
$\sigma(H)$, to determine an asymptotic expansion for small $t$ of the trace of 
\eref{Ltransf}. This approach is based on the following lemma.
\begin{lemma}[\cite{Wong:2001}, p. 31]
\label{46}
Let $f\in C(0,\infty)$ such that $\ue^{-ct}f(t)\in L^1(0,\infty)$ for some $c\in\rz^+$. Let 
\begin{equation}
F(z) = \int_0^\infty\ue^{-zt}\,f(t)\ \ud t
\end{equation}
be the Laplace-transform of $f$. Assume that $F$ has a complete asymptotic expansion
\begin{equation}
\label{47}
F(z)\sim\sum\limits_{n=0}^{\infty}a_n\Gamma\lk d_n\rk z^{-d_n},\quad |z|\rightarrow \infty,
\end{equation}
uniformly in $\left|\arg(z-c)\right|\leq\frac{\pi}{2}$, such that $d_n\to\infty$ when 
$n\to\infty$. Then $f$ has a complete asymptotic expansion,
\begin{equation}
\label{48}
f(t)\sim\sum\limits_{n=0}^{\infty}a_nt^{d_n-1},\quad t\to 0^+.
\end{equation}
\end{lemma}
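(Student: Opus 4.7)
The plan is to invert the Laplace transform and match asymptotics order by order, exploiting the elementary identity $\mc{L}[t^{d-1}/\Gamma(d)](z)=z^{-d}$ (valid for $\re d>0$), which identifies each term in (\ref{47}) with the corresponding term in (\ref{48}). The task is to make this matching rigorous and to control the resulting remainder uniformly for small $t$.

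The first step is to fix $N$ and introduce the finite remainders
$$R_N(t):=f(t)-\sum_{n<N}a_n\,t^{d_n-1},\qquad \widetilde R_N(z):=F(z)-\sum_{n<N}a_n\,\Gamma(d_n)\,z^{-d_n}.$$
By linearity of the Laplace transform, $\widetilde R_N$ is the Laplace transform of $R_N$ on $\re z>c$, and hypothesis (\ref{47}) yields $\widetilde R_N(z)=O(|z|^{-d_N})$ uniformly in the closed sector $\{|\arg(z-c)|\leq\pi/2\}$.

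The second step is to apply the Bromwich inversion formula
$$R_N(t)=\frac{1}{2\pi\ui}\int_{\sigma-\ui\infty}^{\sigma+\ui\infty}\ue^{zt}\,\widetilde R_N(z)\,\ud z,\qquad \sigma>c,$$
and to rescale via $z=w/t$, which transforms the contour into $\re w=\sigma t$ and the integrand into $t^{-1}\ue^{w}\widetilde R_N(w/t)$. Since $\widetilde R_N$ is holomorphic in $\re z>c$, the $w$-contour can be deformed within $\{\re w>ct\}$ to a path $\mc{C}$ along which $\ue^{\re w}$ decays at infinity; combining this with the uniform bound $|\widetilde R_N(w/t)|\leq C\,t^{d_N}|w|^{-d_N}$ one obtains
$$|R_N(t)|\leq C\,t^{d_N-1}\int_{\mc{C}}\ue^{\re w}\,|w|^{-d_N}\,|\ud w|=O(t^{d_N-1}),\qquad t\to 0^+.$$
Since $d_N\to\infty$, this delivers the full asymptotic expansion (\ref{48}).

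The main obstacle is the contour deformation and the accompanying remainder estimate: naively integrating along the vertical line $\re z=\sigma$ yields only $R_N(t)=O(1)$, and to recover the sharp rate $O(t^{d_N-1})$ one must bend the rescaled contour into a region where $|\ue^{w}|$ actually decays. This is precisely where the \emph{uniform} character of the decay hypothesis across the closed sector is essential, since otherwise one cannot control $\widetilde R_N$ on the connecting arcs sent off to infinity. A subsidiary technical point is that some early exponents $d_n$ may be non-positive; since $d_n\to\infty$, only finitely many such terms occur, and at non-positive integers $\Gamma(d_n)$ has a pole which forces the corresponding term to be absent from both (\ref{47}) and (\ref{48}).
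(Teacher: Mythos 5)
The paper does not prove this lemma at all---it is quoted verbatim from Wong's book---so your proposal can only be judged on its own terms. Its skeleton (subtract the partial sum, apply Bromwich inversion to the remainder $R_N$, exploit the uniform half-plane bound on $\widetilde R_N$, and let the contour depend on $t$) is indeed the standard route to this converse-Watson-type result. However, the step you yourself single out as the crux is wrong as stated: you claim that, since $\widetilde R_N$ is holomorphic in $\re z>c$, the rescaled contour can be deformed inside $\{\re w>ct\}$ to a path on which $\ue^{\re w}$ decays at infinity. No such path exists: on $\{\re w>ct\}$ one has $\ue^{\re w}>1$, and decay of $\ue^{\re w}$ would force $\re w\to-\infty$, i.e.\ leaving the region where $\widetilde R_N(w/t)$ is even defined, let alone controlled. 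So the justification of the key estimate collapses as written.

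The standard repair keeps a vertical line but lets it move with $t$. Take $\sigma(t)=c+1/t$; the shift from the fixed Bromwich line to $\re z=\sigma(t)$ is legitimate because $\widetilde R_N$ is holomorphic in $\re z>c$ and satisfies $|\widetilde R_N(z)|\le C|z|^{-d_N}$ for large $|z|$ \emph{uniformly} in the closed half-plane (this is where the uniformity hypothesis enters), so the horizontal connecting segments vanish as their height tends to infinity. On the line $\re z=\sigma(t)$ one has $|\ue^{zt}|=\ue^{ct+1}$ bounded, and $\int_{\re z=\sigma(t)}|z|^{-d_N}\,|\ud z|=O\lk\sigma(t)^{1-d_N}\rk=O\lk t^{d_N-1}\rk$, which yields $R_N(t)=O(t^{d_N-1})$ --- exactly your displayed bound, but with $\ue^{\re w}$ merely bounded on a line at distance of order one from the origin, not decaying. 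Two smaller points: both the pointwise inversion formula and this estimate need $d_N>1$ (absolute convergence along the line), so you should prove the bound for large $N$ and then peel off the intermediate power terms to cover all $N$; and your dismissal of non-positive exponents via poles of $\Gamma$ only covers non-positive integers --- for negative non-integer $d_n$ one should instead observe that $F$ is bounded on $\re z\ge c$ (because $\ue^{-ct}f\in L^1$), so growing terms cannot occur in \eref{47}; in the intended statement, as in Wong, all $d_n$ are positive.
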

First, however, we establish the following.
\begin{prop}
\label{l49}
Let $\Gamma$ be a compact or non-compact metric graph with Schr\"odinger operator $H$. 
Then, for every $t>0$, the operator $\ue^{-H t}$ is an integral operator with kernel
\be
\label{lc1}
k_{H}(t;\cdot,\cdot)\in L^{\infty}(\Gamma\times\Gamma)\cap C^{\infty}(\Gamma\times\Gamma).
\ee
The heat kernel \eref{lc1} possesses a complete asymptotic expansion for $t\to 0^+$ in powers
of $\sqrt{t}$. On the diagonal the leading terms are:
\begin{itemize}
\item[(i)] $x$ not an edge end:
\be
\label{lc3}
k_{H,ee}\lk t; x,x\rk=\frac{1}{2\sqrt{\pi t}}\lk 1+\Or\lk{t}\rk\rk,
\ee
\item[(ii)] $x \cong (v,e)$ and $y \cong (v,e')$ with $e \neq e'$:
\be
\label{lc4}
k_{H,ee'}\lk t; x,y\rk=\frac{{\mfS_{\infty,xy}}}{2\sqrt{\pi t}}\lk 1+
2\sqrt{\pi t}L_{xy}+\Or\lk{t}\rk\rk,
\ee
\item[(iii)] $x \cong (v,e)$: 
\be
\label{lc5}
k_{H,ee}\lk t; x,x\rk=\frac{1+{\mfS_{\infty,xx}}}{2\sqrt{\pi t}}\lk 1+
2\sqrt{\pi t}L_{xx}+\Or\lk\sqrt{t}\rk\rk.
\ee
\end{itemize}
\end{prop}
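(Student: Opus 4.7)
The plan is to exploit the Laplace-transform relation between the heat semigroup and the resolvent, and to apply Lemma~\ref{46} in order to extract the small-$t$ asymptotics of the heat kernel directly from the large-$|k|$ asymptotics of the resolvent kernel recorded in Lemma~\ref{l1}. Since $H$ is bounded from below by Proposition~\ref{z8}, one may write, in analogy with \eref{Laplaceint},
\begin{equation*}
\ue^{-Ht} = \frac{\ui}{2\pi}\int_\gamma \ue^{-\lambda t}\,R_H(\lambda)\ \ud\lambda,
\end{equation*}
with a positively oriented contour $\gamma$ encircling $\sigma(H)$. Inserting the explicit integral kernel \eref{56a} of $R_H(\lambda)$ from Theorem~\ref{22}, and invoking the absolute convergence guaranteed by the exponential decay \eref{local1} of Lemma~\ref{lem:resoffd}, one obtains that $\ue^{-Ht}$ is itself an integral operator whose kernel is given pointwise by
\begin{equation*}
k_H(t;\bsy{x},\bsy{y})=\frac{\ui}{2\pi}\int_\gamma\ue^{-\lambda t}\,r_H(\lambda;\bsy{x},\bsy{y})\ \ud\lambda .
\end{equation*}

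To establish $k_H(t;\cdot,\cdot)\in L^\infty(\Gamma\times\Gamma)\cap C^\infty(\Gamma\times\Gamma)$ for every $t>0$, I observe that on each pair of edges $I_e\times I_{e'}$ the dependence of $r_H(\lambda;\bsy{x},\bsy{y})$ on the spatial variables in \eref{56a} is carried entirely by the admissible fundamental solutions $u^\pm_e(k;\cdot)$, by the exponentials $\ue^{\ui k x}$ on external edges and by the free kernel $r^{(0)}$, each of which is smooth up to the edge ends. The bound \eref{local1} then makes the contour integral absolutely convergent after deforming $\gamma$ into the region $\re\lambda\to+\infty$, and allows differentiation under the integral sign to any order in $\bsy{x},\bsy{y}$; this delivers the smoothness and the uniform boundedness simultaneously.

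For the asymptotic expansion I would fix $\bsy{x},\bsy{y}$ with $d(\bsy{x},\bsy{y})=0$ and apply Lemma~\ref{46} to $f(t):=k_H(t;\bsy{x},\bsy{y})$. By the spectral calculus for the bounded-below operator $H$, its Laplace transform coincides with $F(z)=r_H(-z;\bsy{x},\bsy{y})$. Substituting $k=\ui\sqrt{z}\in S^+_\delta$ in the asymptotic series of Lemma~\ref{l1} yields $F(z)\sim\sum_{n\geq 1}c_n(\bsy{x},\bsy{y})\,\ui^{-n}\,z^{-n/2}$, which matches \eref{47} with $d_n=n/2$ upon writing $a_n=c_n\,\ui^{-n}/\Gamma(n/2)$. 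Lemma~\ref{46} then furnishes a complete asymptotic series in powers of $\sqrt{t}$ for $k_H(t;\bsy{x},\bsy{y})$; reading off the leading coefficients from \eref{l7b}, \eref{l7a} and \eref{l6} and using $\Gamma(1/2)=\sqrt{\pi}$, $\Gamma(1)=1$ produces \eref{lc3}--\eref{lc5} in the three respective geometric situations.

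The main obstacle is the passage from the sectorial expansion of Lemma~\ref{l1}, valid only for $k\in S^+_\delta$, to the uniformity on the closed half-plane $|\arg(z-c)|\leq\pi/2$ demanded by Lemma~\ref{46}. One has to verify that the map $z\mapsto\ui\sqrt{z}$ sends this half-plane, for sufficiently large $|z|$, into $S^+_\delta$, and that the remainder estimates implicit in Lemma~\ref{l1} are indeed uniform on the relevant curves. The mechanism that makes this work is the exponential smallness $\ue^{\ui k l_e}=\Or(k^{-\infty})$ on the internal edges, combined with the uniformly convergent geometric series \eref{geometricST} for $(\eins-\mfS(k;P,L)T(k;\bsy{l}))^{-1}$. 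The auxiliary integrability assumption $\ue^{-ct}f(t)\in L^1(0,\infty)$ of Lemma~\ref{46} is then secured by choosing $c$ larger than the magnitude of the lower bound on $\sigma(H)$ provided by Proposition~\ref{z8}.
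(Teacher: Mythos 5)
Your proposal is correct and follows essentially the same route as the paper: the integral-operator property and the $L^\infty\cap C^\infty$ regularity of the heat kernel are obtained from the contour-integral (Laplace-transform) representation of $\ue^{-Ht}$ in terms of the resolvent kernel (the paper delegates this to the argument of Lemma~6.1 in \cite{KostrykinSchrader:2006b}, using Proposition~\ref{z8} and the smoothness of the kernel in Theorem~\ref{22}), while the diagonal asymptotics are obtained exactly as in the paper by applying the Laplace-inversion Lemma~\ref{46} to the resolvent-kernel expansion of Lemma~\ref{l1}, with the substitution $k^2=-z$ and the sector condition $k\in S^+_\delta$ covering the half-plane required by Lemma~\ref{46}. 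Your additional remarks on the uniformity of the remainder and on the geometric series \eref{geometricST} address precisely the point the paper also invokes (cf.\ the proof of Theorem~\ref{49}), so no new ideas beyond the paper's are needed or missing.
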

\begin{proof}
The fact that $\ue^{-Ht}$ is an integral operator whose kernel satisfies
\eref{lc1} is proved in the same way as \cite[Lemma 6.1]{KostrykinSchrader:2006b}.
For this one needs to know that $\sigma(H)$ is bounded from below (Proposition~\ref{z8})
and that the resolvent kernel is composed of smooth functions (Theorem~\ref{22}
and Lemma~\ref{6}). 

The leading diagonal terms \eref{lc3}-\eref{lc5} follow from an application of 
Lemma~\ref{46} to Lemma~\ref{l1}.
\end{proof}
We are now in a position to determine heat kernels from resolvent kernels. As we
are eventually interested in trace asymptotics we need to consider the difference
\eref{Ltransf} of heat operators.
\begin{theorem}
\label{49}
Let $\Gamma$ be a compact or non-compact metric graph with Schr\"odinger operator $H$. 
Then, for every $t>0$, the difference \eref{Ltransf} of heat operators is a trace-class
integral operator. Its kernel $k_{H,D/N}(t;\cdot,\cdot)$ satisfies
\be
\label{c1}
k_{H,D/N}(t;\cdot,\cdot)\in L^{\infty}(\Gamma\times\Gamma)\cap C^{\infty}(\Gamma\times\Gamma).
\ee
The trace of \eref{Ltransf} possesses a complete asymptotic expansion for $t\to 0^+$ 
given by
\begin{equation}
\label{50}
\tr\lk \ue^{-Ht} -  J_{\ex}\ue^{\Delta_{D/N}t}J^\ast_{\ex}\rk 
  =\int_{\Gamma}\tr k_{H,D/N}(t;\bsy{x},\bsy{x})\ \ud\bsy{x}
  \sim\sum_{n=1}^{\infty}a_nt^{\frac{n}{2}-1},
\end{equation} 
where 
\begin{equation}
\label{51}
\eq{
a_{2n}&=\frac{(-1)^n}{(n-1)!}b_{2n},\quad n\in\nz\\
a_{2n+1}&=\ui\frac{(-1)^{n+1}}{\Gamma\lk n+\frac{1}{2}\rk}b_{2n+1}, \quad n\in\nz_0,
}
\end{equation}
with the coefficients $b_n$ from $\eref{150a}$. 
\end{theorem}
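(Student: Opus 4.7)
The plan is to apply Lemma~\ref{46} with $f(t) := \tr\lk\ue^{-Ht} - J_\ex \ue^{\Delta_{D/N}t}J_\ex^*\rk$. The trace-class property of the difference of heat operators and the regularity \eref{c1} of its kernel follow from combining the contour-integral representation \eref{Ltransf} with Proposition~\ref{141b}: after deforming $\gamma$ to a Hankel-type contour around $\sigma(H)$ that stays in $S_\delta^+\cup\overline{S_\delta^+}$, the integrand is trace class and analytic, $\ue^{-\lambda t}$ supplies exponential decay, and the integral converges in trace norm. Smoothness of the kernel is inherited from Theorem~\ref{22} and Lemma~\ref{6}.

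For the asymptotic expansion, I would first identify the Laplace transform of $f$. Since $H$ is bounded below by Proposition~\ref{z8}, $f(t) = \Or\lk\ue^{-c_0 t}\rk$ for some $c_0\in\rz$, so Fubini together with the functional calculus yields
\begin{equation}
F(z) := \int_0^\infty \ue^{-zt}f(t)\ \ud t = \tr\lk R_H(-z) - J_\ex R_{D/N}(-z)J_\ex^*\rk
\end{equation}
for $\re z > -c_0$. Writing $-z = k^2$ with $k = \ui\sqrt{z}$ (principal branch), any sector $\left|\arg(z-c)\right| \leq \frac{\pi}{2}$ with $c$ sufficiently large is mapped into $S_\delta^+$ for $\delta < \frac{\pi}{4}$, so Theorem~\ref{151} applies and delivers
\begin{equation}
F(z) \sim \sum_{n=1}^\infty b_n\,\ui^{-n}\, z^{-n/2},\qquad |z|\to\infty,
\end{equation}
uniformly in that sector. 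This puts $F$ in the form required by Lemma~\ref{46} with $d_{2n} = n$ and $d_{2n+1} = n+\frac{1}{2}$, whose Gamma-prefactors are $\Gamma(n)=(n-1)!$ and $\Gamma\lk n+\frac{1}{2}\rk$, respectively.

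Lemma~\ref{46} then produces $f(t) \sim \sum_n a_n t^{d_n - 1}$, and splitting into even and odd indices with $\ui^{-2n}=(-1)^n$ and $\ui^{-(2n+1)}=(-1)^{n+1}\ui$ recovers precisely the coefficients \eref{51}. The diagonal kernel asymptotics \eref{lc3}--\eref{lc5} from Proposition~\ref{l49} serve as an independent consistency check for the leading heat-kernel behaviour. The main technical obstacle is verifying that the expansion of Theorem~\ref{151} remains uniform throughout the image of the shifted right half-plane under $k = \ui\sqrt{z}$: the remainders arising from the geometric-series bound \eref{geometricST}, the integration-by-parts tails in Lemma~\ref{122}, and the error terms in the $\mfS$-matrix expansion of Theorem~\ref{79} must all be controlled uniformly as $\arg k$ approaches $\delta$ or $\pi-\delta$. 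Uniformity holds because $\im k$ is bounded away from zero throughout the relevant sector, keeping $\ue^{\ui k l_e}$ exponentially small and the relevant geometric series convergent.
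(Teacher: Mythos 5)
Your route to the asymptotic expansion is essentially the paper's own: apply Lemma~\ref{46} to the Laplace-transform identity relating the regularised heat trace to $\tr\lk R_H(-z)-J_{\ex}R_{D/N}(-z)J_{\ex}^\ast\rk$, invoke Theorem~\ref{151} after the substitution $k=\ui\sqrt{z}$ (which maps the shifted right half-plane into $S^+_\delta$ for small $\delta$), identify $d_n=\frac{n}{2}$, and let the factors $\ui^{-n}$ produce the signs in \eref{51}; this bookkeeping is correct and in fact spelled out more explicitly than in the paper. Where you genuinely deviate is the trace-class and kernel-regularity claim. The paper does not use the contour representation \eref{Ltransf} for this: it establishes the trace-class property by repeating the decomposition from the proof of Proposition~\ref{141b} for the heat semigroups (the difference of $\ue^{-Ht}$ and $\ue^{-H_{D/N}t}$, plus the heat semigroup of the compact interior graph), and obtains \eref{c1} by subtracting the explicit kernel of $J_{\ex}\ue^{\Delta_{D/N}t}J_{\ex}^\ast$ from the kernel of Proposition~\ref{l49}. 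Your alternative---trace-norm convergence of the deformed contour integral---rests on an ingredient you do not supply: a bound on $\| R_H(\lambda)-J_{\ex}R_{D/N}(\lambda)J_{\ex}^\ast\|_1$ along a contour that must hug the essential spectrum $[0,\infty)$ in the non-compact case. Theorem~\ref{151} controls only the trace, not the trace norm, and the factor $\ue^{-\lambda t}$ gives no help at finite points of the contour. Such a bound can be extracted from the splitting used in Proposition~\ref{141b}, so your argument is repairable, but as written this step is asserted rather than proven; the paper's direct analogy with Proposition~\ref{141b} sidesteps the issue entirely.
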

\begin{proof}
The fact that the difference \eref{Ltransf} of heat operators is trace class is proven
in complete analogy to the trace-class property of the corresponding difference of
resolvents, see Proposition~\ref{141b}. The integral kernel for \eref{Ltransf} is the 
difference of the heat kernel \eref{lc1} and the kernel for 
$J_{\ex}\ue^{\Delta_{D/N}t}{J_{\ex}}^{\ast}$ that is given in 
\cite{KostrykinSchrader:2006b,Schrader:2007}. Both kernel are in 
$L^{\infty}(\Gamma\times\Gamma)\cap C^{\infty}(\Gamma\times\Gamma)$, hence \eref{c1} follows.

As for the asymptotics, we apply Lemma~\ref{46} to
\be
\label{rk12}
\eq{
&\tr\lk R_{H}\lk -k^2\rk-J_{\ex}R_{D/N}\lk -k^2\rk J_{\ex}^{\ast}\rk \\
&\hspace{2cm} =
\int_{0}^{\infty}\ue^{-k^2t}\tr\lk \ue^{-Ht} -  J_{\ex}\ue^{\Delta_{D/N}t}J^\ast_{\ex}\rk\ \ud t, 
}
\ee 
where $k^2>-\inf\sigma(H)$. For this we first notice that
\be
\ue^{-ct}\tr\lk \ue^{-Ht} -  J_{\ex}\ue^{\Delta_{D/N}t}J^\ast_{\ex}\rk
\in L^1(0,\infty),
\ee
iff $c>-\inf\sigma(H)$. Lemma~\ref{46} requires the left-hand side of \eref{rk12} to
posses a complete asymptotic expansion for $|k^2|\to\infty$ in the right half-plane
$|\arg(k^2-c)|\leq\frac{\pi}{2}$. Such an expansion is indeed given by Theorem~\ref{151},
under the condition that $k^2\in S_{2\delta}$ which contains the half-plane
$|\arg(k^2-c)|\leq\frac{\pi}{2}$. Comparing \eref{47} and \eref{150} we identify 
$d_n=\frac{n}{2}$, leading to the relations \eref{51}.
\end{proof}
The first few heat-kernel coefficients can be worked out explicitly, making use of the 
resolvent coefficients \eref{150}. Using Theorems~\ref{151} and \ref{49} we can read 
off the first five coefficients. A direct calculation leads to
\be
\eq{
&\tr\lk \ue^{-Ht}-J_{\ex}\ue^{\Delta_{D/N}t}J^\ast_{\ex}\rk   \\
&\qquad =\frac{\mc{L}}{\sqrt{4\pi t}} + \frac{1}{4}\lk\tr\mf{S}_\infty\mp E_{\ex}\rk
   +\lk -\frac{1}{2}\int_{\Gamma_{\inte}}V(\bsy{x})\ \ud\bsy{x}+\tr L\rk\sqrt{t}\\
&\qquad\quad +\Or(t)\ .
}
\ee
an one can observe that the first two terms of this expansion are independent of the potential.
They indeed agree with the result \cite[Theorem 4.1]{Schrader:2007}, where the case of 
$V\equiv 0$ and vertex conditions such that $L=0$ is covered. From 
\cite[Theorem 4.1]{Schrader:2007} it also follows that $a_n=0$ for $n\geq 3$ in the case 
covered there. The terms involving the potential also agree with the ones computed in 
\cite{Ralf:2012a}. The presence of the potential, and the possibility of more general boundary
conditions allowing $L\neq 0$, implies that in general all heat-kernel coefficients $a_n$ 
will be non-zero.
\ack{S.E.\ acknowledges support by the German Academic Exchange Service (DAAD) and the 
German Research Foundation (DFG). This research was supported by the EPSRC network 
{\it Analysis on Graphs and Applications} (EP/1038217/1).}

\vspace*{1cm}

\vskip6pt
\bibliographystyle{amsalpha}

\bibliography{litver}
\end{document}